\numberwithin{equation}{section}
\theoremstyle{plain}
\newtheorem{theorem}{Theorem}[section]
\newtheorem*{corollary*}{Corollary}
\newtheorem*{theorem*}{Theorem}
\newtheorem{proposition}[theorem]{Proposition}
\newtheorem{lemma}[theorem]{Lemma}
\theoremstyle{definition}
\newtheorem{assumption}[theorem]{Assumption}
\newtheorem{remark}[theorem]{Remark}
\newtheorem{example}[theorem]{Example}
\newtheorem{definition}[theorem]{Definition}
\newcommand{\mf}{\mathfrak}
\DeclareMathOperator{\interior}{\textnormal{int}}
\newcommand{\R}{\mathbb{R}}
\newcommand{\E}{\mathbb{E}}
\newcommand{\N}{\mathbb{N}}
\newcommand{\dom}{\textnormal{dom}}
\newcommand{\eps}{\varepsilon}
\newcommand{\CC}{\mathcal C}
\newcommand{\CS}{{\mathcal S}}
\newcommand{\ph}{\varphi}
\newcommand*\diff{\mathop{}\!\mathrm{d}}
\DeclareMathOperator{\symbolOperatorExpectedShortfall}{ES}
\newcommand{\symbolExpectedShortfall}[1]{\symbolOperatorExpectedShortfall_{#1}}
\newcommand{\expectedShortfall}[2]{\symbolExpectedShortfall{#1}(#2)}
\title{Multi-asset return risk measures}
\author[C.\ Laudag\'e]{Christian Laudagé\,$^{1,\ast}$}
\author[F.-B.\ Liebrich]{Felix-Benedikt Liebrich\,$^{2,\dagger}$}
\author[J.\ Sass]{Jörn Sass\,$^{1,\ddagger}$}
\thanks{$^1$ Department of Mathematics, RPTU Kai\-sers\-lautern-Landau, Gottlieb-Daimler-Straße 47, 67663 Kaiserslautern, Germany}
\thanks{$^2$ Amsterdam School of Economics, University of Amsterdam, Roetersstraat 11, 1018 WB Amsterdam, Netherlands}
\thanks{$^{\ast}$  \textit{E-mail address:} \texttt{christian.laudage@rptu.de} (corresponding author)}
\thanks{$^\dagger$ \textit{E-mail address:} \texttt{f.b.liebrich@uva.nl}}
\thanks{$^\ddagger$ \textit{E-mail address:} \texttt{joern.sass@rptu.de}}
\date{\today}
\begin{document}

\setstretch{1.15}

\begin{abstract}
    We revisit the recently introduced concept of return risk measures (RRMs) and extend it by incorporating risk management via multiple so-called eligible assets.
    The resulting new class of risk measures, termed multi-asset return risk measures (MARRMs), introduces a novel economic model for multiplicative risk sharing.
    We analyze  properties of these risk measures.
    In particular, we prove that a positively homogeneous MARRM is quasi-convex if and only if it is convex.
    Furthermore, we provide conditions to avoid inconsistent risk evaluations. 
    Then, we point out the connection between MARRMs and the well-known concept of multi-asset risk measures (MARMs).
    This is used to obtain various dual representations of MARRMs. 
    Moreover, we conduct a series of case studies, in which we use typical continuous-time financial markets and different notions of acceptability of losses to compare RRMs, MARMs, and MARRMs and draw conclusions about the cost of risk mitigation.

    \medskip
	
    \item[\hskip\labelsep\scshape Keywords:] Multi-asset risk measure, return risk measure, risk sharing, dual representations, Average-Return-at-Risk
    
    \medskip
    
    \item[\hskip\labelsep\scshape JEL Classification:] D81, G10, G32
\end{abstract}

\maketitle

\section{Introduction}

Classical \textit{monetary} risk measures map uncertain monetary amounts to a single key figure that expresses the riskiness of said financial position. 
Recently, an axiomatic foundation for risk measures has been suggested which focuses on {\em relative} positions instead of absolute monetary amounts, so-called {\em return risk measures} (RRMs). 
The goal of this paper is to enrich this framework with additional flexibility for risk managers,
allowing them to use multiple eligible assets to mitigate unacceptable risk concentration. Further, we elaborate on some mathematical difficulties and interpretations behind return risk measures. 

RRMs have been introduced in Bellini et al.~\cite{Return} and studied further in Laeven \& Rosazza Gianin~\cite{Laeven1}. In the latter paper, the authors focus on return risk measures that are quasi-logconvex, a property that is not taken into account in~\cite{Return}. Among other contributions, dual representations and several concrete examples for such RRMs are developed. Under the stronger condition of log-convexity, called geometrical convexity (or GG-convexity for short), Aygün et al.~\cite{Aygun} also derive a dual representation.  

RRMs are substantially different from the well-studied class of monetary risk measures, see, e.g., Artzner et al.~\cite{artzner_coherent_1999} and Chapter 4 in the monograph by F\"ollmer \& Schied \cite{follmer_stochastic_2016}. 
As pointed out in~\cite{Return}, a RRM is obtained by applying a monetary risk measure to log-returns. 
Hence, the log-transformation establishes a direct connection between RRMs and monetary risk measures.

There is a notable gap between return risk measures and  classical Markowitz theory introduced in~\cite{markowitz_1952}. 
There, the standard deviation is used to measure the risk of relative values. As an alternative, it is possible to use a monetary risk measure instead of the standard deviation. This has been originally suggested by Jaschke \& K\"uchler~\cite{jaschke_coherent_2001} and analyzed rigorously in the works of Herdegen \& Khan~\cite{herdegen_2022,herdegen_2024}.
In their theory, monetary risk measures are applied to returns, i.e.,~fractions of payoffs. 
In contrast, RRMs arise by applying monetary risk measures to log-returns. 
This approach is standard in the field of time series analysis, see Candia \& Herrera~\cite{candia_2024}, McNeil \& Frey \cite{mcneil_2000}, and Sun \& Li \cite{sun_2021}. 
The reason is  that logarithmic transformations of financial time series data often lead to a more stationary behaviour for which classical time series models, like, e.g.,~ARMA-GARCH, are applicable. So, while the classical theory of monetary risk measures is based on absolute monetary amounts, RRMs describe what happens if log-returns are plugged into monetary risk measures.
To the best of our knowledge, RRMs are the first attempt for an axiomatic analysis of such an application. 

We contribute to this debate by extending the framework of RRMs in the sense that we allow for multiple eligible assets. The philosophy leading to this generalisation is the same as in the case of the so-called {\em multi-asset risk measures} (MARMs), an extension of classical monetary risk measures. This family of risk measures is implicitly introduced in F\"ollmer \& Schied~\cite{follmer_stochastic_2016} and explicitly introduced in Frittelli \& Scandolo~\cite{frittelli_2006,scandolo_2004}. 
Our analysis is motivated by the thorough analysis of MARMs in Farkas et al.~\cite{farkas_2015}. 
Here, the authors provide necessary and sufficient conditions for finiteness and continuity of MARMs and develop dual representation results. 

Let us briefly sketch the idea behind a RRM. 
To do so, assume that an agent faces an uncertain future loss $X$ modelled by a positive random variable.
Here and throughout the paper, we follow actuarial conventions and model losses of such uncertain prospects with positive values. 
Now, the agent considers the log-return of the position with respect to a fictitious initial wealth $x>0$, i.e.,~she considers the random variable $\log\left(\frac{X}{x}\right)$, with $\log$ referring to the natural logarithm. 
Acceptability of such a log-return is defined via a monetary risk measure $\rho$. 
Assuming zero interest, this means that the value $\rho\left(\log\left(\frac{X}{x}\right)\right)$ should be less than or equal to zero. 
From a risk management perspective, it is logical to seek the least amount of initial wealth $x$ for which this is satisfied.
This yields the RRM at $X$,
\begin{align}\label{eq:introduction_rrm}
    \inf\left\{x>0\,;\, \rho\left(\log\left(\tfrac{X}{x}\right)\right)\le 0\right\}.
\end{align}

By a closer inspection of~\eqref{eq:introduction_rrm}, an alternative interpretation emerges:
Assume that there is a riskless bank account with zero interest rate on the market. Then, the criterion in~\eqref{eq:introduction_rrm} compares the log-returns of the loss $X$ and the gain by investing $x$ at the beginning of the time period in the bank account. 
Having more ``secure'' -- eligible -- assets available to generate wealth, which is sufficient to bear the losses represented by $X$, logically leads to the quest for a {\em multi-asset version} of this RRM. 
In other words, we aim to give an agent the ability not only to compare the performance of their actual position $X$ with a riskless bank account, but also to allow them to compare its performance with alternative investment opportunities. For instance, if $X$ represents a short position in a single stock, then it could be of interest to see how this stock performs relative to a specific stock index.
This can be achieved by a slight modification of Equation~\eqref{eq:introduction_rrm}, namely comparing $X$ and a payoff of a portfolio of multiple eligible assets which are traded on the market. 
Such payoffs are themselves modelled by random variables stemming from a set $\mathcal{S}$. 
Since these represent wealth, we also model them as positive: $Z>0$ holds for all $Z\in\mathcal S$. 
Under the law of one price, each wealth $Z$ is generated by trading in a market which comes at the initial cost $\pi(Z)$. 
Assembling all these aspects, we obtain the following generalisation of the return risk of $X$ in~\eqref{eq:introduction_rrm}:
\begin{align*}
    \inf\left\{\pi(Z)\,;\, Z\in\mathcal{S},\rho\left(\log\left(\tfrac{X}{Z}\right)\right)\le 0\right\}.
\end{align*}
The interpretation of this quantity is immediate: it is the infimal price of a trading strategy resulting in payoff $Z\in\mathcal S$ relative to which the agent's loss has an acceptable log-return. Acceptability is measured by the monetary risk measure $\rho$. 
Even more generally, one can omit $\rho$ altogether and substitute the notion of acceptability by $\frac X Z$ belonging to a suitable set $\mathcal{B}$ of positive random variables. 
This is the analogue to the so-called acceptance set appearing in the definition of MARMs. 
We obtain
\begin{align}\label{eq:introduction_marrm}
    \inf\left\{\pi(Z)\,;\, Z\in\mathcal{S},\tfrac{X}{Z}\in\mathcal{B}\right\}.
\end{align}

In a nutshell, this is how we suggest to follow the philosophy of MARMs to generalize RRMs. We call our new approach \textit{multi-asset return risk measure} (MARRM).
In this manuscript, we focus on analyzing properties of maps of the form~\eqref{eq:introduction_marrm}. 
In Section~\ref{sec:marrm}, we formally introduce our new concept and point out its connection to risk sharing in Section~\ref{sec:riskSharing}. The latter is analogous to similar representations of MARMs via inf-convolutions (see, e.g., Baes et al.~\cite{baes_2020}), but replaces additive aggregation by multiplicative aggregation: 
An optimal solution $Z^{\ast}$ of Equation~(\ref{eq:introduction_marrm}) means that the MARRM decomposes the risk into a product of this wealth-generating market portfolio $Z^{\ast}$ and a factor $\frac{X}{Z^{\ast}}$, which represents a relative loss that should meet an acceptability constraint. 
In Section~\ref{sec:riskSharing}, we interpret this risk-sharing approach inherent to MARRMs comprehensively, framing it as a multiplicative inf-convolution.

Then, in Section~\ref{sec:algebraicProperties} we uncover the algebraic properties of MARRMs. In particular, in Theorem~\ref{thm:quasiConvexity} we show that a positively homogeneous MARRM is quasi-convex if and only if it is convex. 
The proof utilizes the representation of MARRMs via RRMs with an adjusted acceptance set, see Lemma~\ref{lem:reductionLemma}.
In addition, in Examples~\ref{exam:missingConvexity} and~\ref{exam:nonConvexityRelativeAcceptanceSet}, we illustrate situations in which we obtain a convex MARRM or not.    
In Section~\ref{sec:finitness}, we study finiteness of the new risk measure, see Lemma~\ref{lem:finiteness}, and state conditions on the set $\mathcal{S}$ and the map $\pi$, i.e.,~the financial market, as well as on the acceptance set $\mathcal{B}$ to prevent the MARRM to attain the value zero, see Theorem~\ref{thm:noArbitrage_MARRM}. 
This is important to prevent a sort of acceptability arbitrage opportunities where actual losses are evaluated as riskless. 
Our condition on the financial market is weaker than the well-known ``no free lunch with vanishing risk'' condition. In Section~\ref{sec:marrm_logReturns}, we point out the connection to MARMs (Proposition~\ref{prop:MARRMasMARM}) and how their properties transfer to MARRMs (Lemma~\ref{lem:connectionMARRMandMARM}).
In Section~\ref{sec:dualRepresentation}, we develop dual representations for this setup. Here, Proposition~\ref{prop:dualLogconvexMARRRM} is the classical counterpart to dual representations of convex MARMs. Since nowadays the focus is also on properties weaker than convexity, Theorems~\ref{thm:dualLogStarShapedMARRRM} and~\ref{thm:dualQuasiLogconvexMARRRM} state dual representations for star-shaped, repsectively quasi-convex, MARRMs. To the best of our knowledge, these results are new and in particular, they are also of independent interest in the theory of MARMs.
Finally, in Section~\ref{sec:marrm_continuous_time}, we illustrate the new risk measure class for a continuous-time financial market model and different concrete choices of the set $\mathcal{B}$ from Equation~(\ref{eq:introduction_marrm}) based on losses from a US private automobile insurance. For the case study, we use sets $\mathcal{B}$ based on RRMs derived from the classical risk measures Value-at-Risk and Expected Shortfall. Payoffs in $\mathcal{S}$ are modeled by trading in a Black-Scholes market with constant portfolio processes. Then, we compare MARRMs, RRMs and MARMs in this concrete setup. 
To the best of our knowledge, comparing return and classical risk measures based on different numbers of eligible assets in a case study is a novel contribution.
We find a significant relative difference greater than 6\% between MARRMs and RRMs in all considered cases. Furthermore, while MARRM and MARM agree in the Value-at-Risk case, they differ in the Expected Shortfall case and display a relative difference above $5\%$.

In Section A.3 of the online appendix, we also test MARRMs for time series data of a stock index. Here, we also reevaluate MARRMs over time and find that they suggest larger investments in the stocks in times of crisis.
\newline

\textit{Throughout the whole manuscript we use the following standard notation:} The natural logarithm of a real number $x>0$ is denoted by $\log(x)$.
For two sets $\mathcal{A}$ and $\mathcal{B}$, we write $\mathcal{A}\subset \mathcal{B}$ for $\mathcal{A}$ being a subset of $\mathcal{B}$, where $\mathcal{A}=\mathcal{B}$ is possible. 

The effective domain of a function $f\colon\mathcal{X}\rightarrow [-\infty,\infty]$ is $\dom(f):=\{X\in \mathcal{X}\,;\, f(X)<\infty\}$. The map $f$ is called proper if it does
not attain $-\infty$ and $\dom(f)\neq\emptyset$. If $\mathcal{X}$ is a topological space, then $f$ is lower semicontinuous if for each $c\in\mathbb{R}$ the set $\{X\in\mathcal{X}\,;\, f(X)\leq c\}$ is closed. 

Further, underlying is a probability space $(\Omega,\mathcal{F},P)$. 
We denote by $L^{0}(\Omega,\mathcal{F},P)$ -- or $L^{0}$ in short -- the linear space of all equivalence classes of real-valued $\mathcal{F}$-measurable functions with respect to the $P$-almost sure ($P$-a.s.~or a.s.)~order, and the linear space of all equivalence classes of $P$-a.s.\ bounded random variables by $L^{\infty}(\Omega,\mathcal{F},P)$, or $L^{\infty}$ for short.
The linear space of equivalence classes of $p$-integrable random variables with $p\in[1,\infty)$ is denoted by $L^p(\Omega,\mathcal{F},P)$, or $L^p$ for short. We equip every $L^p$-space with $p\in\{0\}\cup[1,\infty]$ with the almost sure ordering. Further, the positive cone and the set of strictly positive elements are given by $L^{p}_{+}=\{X\in L^p\,;\,X\geq 0\text{ a.s.}\}$ and $L^{p}_{++}=\{X\in L^p\,;\,X>0\text{ a.s.}\}$, respectively. The left quantile function of a random variable $X$ is denoted by $q_{X}^{-}$. 

For two subsets $\mathcal{A}$ and $\mathcal{B}$ of the space $L^0$,  Minkowski sum, product and quotient are defined by 
\begin{gather*}\mathcal{A}+\mathcal{B}:=\{X+Y\,;\,(X,Y)\in\mathcal{A}\times\mathcal{B}\},\quad \mathcal{A}\cdot\mathcal{B}:=\{XY\,;\,(X,Y)\in\mathcal{A}\times\mathcal{B}\},\\
\tfrac{\mathcal{A}}{\mathcal{B}}:=\{XY^{-1}\,;\,(X,Y)\in\mathcal{A}\times\mathcal{B}\}.\end{gather*}
In the definition of $\frac{\mathcal A}{\mathcal B}$, we tacitly assume that no random variable in $\mathcal B$ attains value $0$ with positive probability.

\section{A multi-asset extension of return risk measures}\label{sec:marrm}

The goal of the manuscript is to generalize return risk measures, as in~\eqref{eq:introduction_rrm}, by using multiple eligible assets. This new concept allows agents to trade in order to generate wealth relative to which their loss-bearing capacity is sufficient to carry the (potentially non-tradable) loss $X$. 

\subsection{Definition of multi-asset return risk measures}

In Equation~\eqref{eq:introduction_marrm}, we only sketched the idea of a multi-asset return risk measure; here, we make it precise.
As a first step, we introduce all ingredients in our setting.
For clarity, we always mention their role in the special setting of return risk measures in \cite{Return}.

\textbf{Domain:} Risk measures are functionals
    defined on a particular domain to which we shall refer as \textbf{\textit{model set}}. 
    It is given by a nonempty subset $\mathcal{C}\subset L^0_{++}$ whose elements we interpret as possible losses an agent faces at a pre-specified future time point. 

    In the example of RRMs from~\cite{Return}, the model set  is $L^{\infty}_{++}$. 
    However, there is no reason to restrict our attention to bounded losses, and we shall work with an arbitrary set $\mathcal{C}$. 
    
\textbf{Financial market:} Second, we have to describe the payoff that we can create by trading in a financial market. 
    All payoffs generated in this manner form the so-called \textbf{\textit{security set}}, which is also a nonempty subset $\mathcal{S}\subset L^0_{++}$. 
    This inclusion imposes the constraint that the generated payoff must be positive, thereby also restricting which trading strategies are deemed admissible.
    An element of $\mathcal{S}$ is called a \textbf{\em security payoff}.
    An important convention to keep in mind is that the random variables in the model set $\mathcal{C}$ are losses, while the security payoffs in $\CS$ are interpreted as wealth. 
    The price of such a security payoff is given by a map $\pi\colon\mathcal{S}\rightarrow(0,\infty)$, a so-called \textbf{\textit{pricing map}}. The fact that $\pi>0$ reflects the absence of arbitrage opportunities in the financial market given by $\mathcal{S}$. 
    
    In the specific example of  RRMs, the security set is  based on a single eligible asset, namely a risk-free bank account. In this case, it holds $\mathcal{S}=(0,\infty)$ and $\pi(x)=x$ for all $x\in(0,\infty)$.
    Here, however, we aim to allow for other eligible assets like stocks. 
    Therefore, the set $\mathcal{S}$ is introduced as a generic subset of $L^{0}_{++}$. In particular, unbounded payoffs are possible. For example, $\mathcal{S}$ could represent buy-and-hold strategies in a classical Black-Scholes model, involving investments in a riskless bank account on the one hand and a single stock with lognormally distributed payoff $Y$ on the other.
    In this case, $\mathcal{S}=\{a+bY\,;\,a,b\geq 0\text{ with }(a,b)\neq(0,0)\}.$

\textbf{Acceptability criteria:} 
    Next, we formally introduce the set $\mathcal{B}$ from~\eqref{eq:introduction_marrm}, which describes if a loss relative to a wealth is acceptable. Here, it is crucial that $\mathcal{B}$ is usually specified without recourse to any financial market, e.g.,~by a regulatory authority. 
    Furthermore, it is already apparent from \eqref{eq:introduction_marrm} that we need to check if a value of the form $\frac{X}{Z}$ with $X\in\mathcal{C}$ and $Z\in\mathcal{S}$ lies in $\mathcal{B}$. 
    However, it is well possible that $\frac{X}{Z}\notin \mathcal{C}$.
    Therefore, we define $\mathcal{B}$ as a subset of the so-called \textbf{\textit{set of relative losses}}, which is an independent nonempty subset $\mathcal{K}\subset L^0_{++}$ and which can differ from $\mathcal{C}$. Then, the \textbf{\textit{relative acceptance set}} is a nonempty proper subset $\mathcal{B}$ of $\mathcal{K}$ which is monotone, i.e.,
$$\forall\,X\in\mathcal{B}\,\,\forall\,Y\in\mathcal{K}:\quad Y\leq X\quad\implies\quad Y\in\mathcal{B}.$$

We emphasize that we are interested in values of the form $\frac{X}{Z}$ with $X\in\mathcal{C}$ and $Z\in\mathcal{S}$. Hence, we divide a loss by a payoff, which means that smaller values of $\frac{X}{Z}$ are more desirable for an agent, because then the loss $X$ is less dangerous in comparison to the payoff $Z$. This explains the inequality in the definition of a monotone set.

For this setting, we now define the new concept of a multi-asset version of a return risk measure.

\vspace{-9pt}

\begin{definition}\label{defi:marrm}
A \textbf{\textit{return risk measurement regime}} is a triplet $\mf R=(\mathcal B,\mathcal S,\pi)$ consisting of a relative acceptance set $\mathcal B$, a security set $\mathcal S$ and a pricing map $\pi$.

Given a return risk measurement regime $\mf R$, the associated \textbf{\textit{multi-asset return risk measure (MARRM)}} $\eta_{\mf R}\colon \mathcal{C}\rightarrow[0,\infty]$ is given by
    \begin{align*}
        \eta_{\mf R}(X):=\inf\left\{\pi(Z)\,;\,Z\in\mathcal{S},\tfrac{X}{Z}\in\mathcal{B}\right\}.
    \end{align*}
\end{definition}

The original introduction of RRMs in~\cite{Return} can be embedded in our general framework by setting  $\mathcal{C}=\mathcal{K} = L^\infty_{++}$ and $\mathcal{S} = (0,\infty)$. 
In contrast, if we use an arbitrary security set $\mathcal{S}\subset L^{\infty}_{++}$, it is not ensured {\em a priori} that $\frac{X}{Z}\in L^{\infty}_{++}$ unless $Z$ is bounded away from $0$. Hence, a choice of $\mathcal{K}=L^{\infty}_{++}$ is not possible any longer without additional assumptions. This motivates us to introduce a notion of compatibility between the model set $\mathcal{C}$, the set of relative losses $\mathcal{K}$ and the security set $\mathcal S$. 
This will be crucial to establish basic properties like monotonicity of the MARRM as an inheritance of the monotonicity of the relative acceptance set $\mathcal B$
(see Lemma~\ref{lem:algebraicProperties} below).
The following new  assumption is in place throughout the manuscript:

\begin{assumption}[Compatibility of $\CC,\mathcal S$, and $\mathcal K$]\label{assum:comp}
    It holds that 
    $\tfrac{\mathcal C}{\mathcal S}\subset\mathcal K.$
\end{assumption}

Assumption~\ref{assum:comp} emphasizes that the set of relative losses $\mathcal{K}$ encompasses fractions $\frac{X}{Z}$ with $X\in\mathcal{C}$ and $Z\in\mathcal{S}$, thus expressing the size of loss $X$ in terms of wealth $Z$. Up to scaling, the elements in $\mathcal K$ are  random percentages of wealth that are lost, depending on the future state of the economy. In particular, they are dimensionless.

\begin{remark}\label{rem:compatibilityCondition}
\begin{enumerate}[(a)]
\item Any of the following conditions ensures compatibility of $\mathcal C,\mathcal S$, and $\mathcal K$:
    \begin{enumerate}[(i)]
    \item $\mathcal{K} = \tfrac{\mathcal C}{\mathcal S}$, i.e.,~we can choose $\mathcal{C}$ and $\mathcal{S}$, and this choice gives us $\mathcal{K}$.
    \item $\mathcal{K}=\mathcal{C}$, the multiplication on $\mathcal{C}$ is a binary operation -- i.e., for all $X,Y\in\mathcal{C}$ it holds that $XY\in\mathcal{C}$ --, and $\left\{\frac{1}{Z}\,;\;Z\in\mathcal{S}\right\}\subset \mathcal{C}$. This condition is satisfied  in~\cite{Return}, see the more general observation in (b). Furthermore, it is satisfied for $\mathcal{K}=\mathcal{C} = L_{++}^{0}$.
    \end{enumerate}
    \item Regarding the situation in~\cite{Return}, compatibility is satisfied for $\mathcal{K}=\mathcal{C}=L^{\infty}_{++}$ and $$\mathcal{S}\subset \text{int}(L^{\infty}_{+}) = \{X\in L^{\infty}\,;\,\exists\,\eps>0: X\geq \eps\text{ a.s.}\}.$$ 
    In this case we have for all $X\in\mathcal{C}$ and $Z\in\mathcal{S}$ that $\frac{X}{Z}\in L^{\infty}_{++} = \mathcal{K}$. 
    \end{enumerate}
\end{remark}

\subsection{Connection to risk sharing}\label{sec:riskSharing}

Mathematically, risk sharing among a collective of agents is often described by an additive inf-convolution of the involved individual risk measures; see, among many other references, \cite{BarKaroui,FilSvi,Liebrich}.
We uncover here that MARRMs exhibit a similar structure, but the additive inf-convolution needs to be replaced by its multiplicative counterpart which has been studied recently in \cite{Aygun}. More precisely,
\begin{align*}
    \eta_{\mf R}(X)=\Big(\Pi\otimes \delta^\odot_{\mathcal{B}\cap \frac{\mathcal{C}}{\mathcal{S}}}\Big)(X),
\end{align*} 
where for two functions $f,h:\mathcal{C}\rightarrow(0,\infty]$, the multiplicative inf-convolution is defined by 
\begin{align*}
    (f\otimes h)(X):=\inf\{f(Y)h(V)\,;\,Y,V\in\mathcal{C}, YV = X\};
\end{align*}
for a set $\mathcal{A}\subset\mathcal{K}$ the multiplicative indicator function $\delta^\odot_{\mathcal{A}}$ is defined by $\delta^\odot_{\mathcal{A}}(X)=1$ if $X\in\mathcal A$ and $\delta^\odot_{\mathcal{A}}(X)=\infty$ otherwise; 
and where the function $\Pi$ extends the pricing map $\pi$ to all of $L^0_{++}$ by setting $\Pi(X)=\infty$ if $X\notin\CS$.
This representation shows that a MARRM is obtained by splitting a loss $X$ into a wealth part $Z\in \mathcal S$ generated by a market portfolio and a leverage factor $\frac XZ$ which needs to satisfy a -- potentially externally prescribed -- constraint formulated by the set $\mathcal B$. The aggregated risk of this procedure is $\pi(Z)$ for a $Z$ guaranteeing the acceptability of the relative loss $\frac{X}{Z}$. The trade-off is between creating enough wealth $Z$ to make the relative loss acceptable and minimizing the hedging costs $\pi(Z)$. Overall, the MARRM offers a new economic model for risk sharing, as the loss is shared multiplicatively instead of additively.
 
We illustrate multiplicative risk sharing with a concrete insurance example at the end of Section~\ref{sec:study_comparison_marm},  comparing US private automobile insurance losses with portfolios from a Black-Scholes market. The risk sharing has two components: an ``optimal payoff'' (following the terminology of \cite{baes_2020})  from Black-Scholes trading, and a leverage factor that compares the insurance loss to the market portfolio. 
These components are analyzed separately in Figures~\ref{fig:var_arar_portfolio_processes} and \ref{fig:risk_sharing_marrm_marm}. We also discuss the difference between additive and multiplicative risk sharing by drawing a comparison to classical MARMs, as introduced in Section~\ref{sec:marrm_logReturns} below.

\subsection{Algebraic properties and the reduction lemma}\label{sec:algebraicProperties}

In this section, we analyze under which conditions a MARRM satisfies classical algebraic properties for multi-asset risk measures, namely monotonicity, subadditivity, quasi-convexity, positive homogeneity. 
In particular, as for classical monetary risk measures or multi-asset risk measures more generally, 
it turns out that quasi-convexity is equivalent to convexity; see Theorem~\ref{thm:quasiConvexity}. Further, we state some preliminary results that we need later.

For the first result, we recall that a set $\mathcal{A}\subset L^{0}_{++}$ is a cone if $\lambda \mathcal{A}\subset \mathcal{A}$ holds for each $\lambda\in(0,\infty)$. 
It is star-shaped if the inclusion $\lambda\mathcal{A}\subset\mathcal{A}$ holds for all $\lambda\in(0,1)$. Note that we do not require that zero is an element of the set in our definitions of a cone and a star-shaped set, because the sets we are working with are typically subsets of $L^0_{++}$. Furthermore, given $\mathcal A\subset L^0$, a map  $f\colon\mathcal{A}\rightarrow(-\infty,\infty]$ is positively homogeneous, if $\mathcal{A}$ is a cone and for each $\lambda\in(0,\infty)$ and $X\in\mathcal{A}$ we have that $f(\lambda X)=\lambda f(X)$. The map $f$ is star-shaped, if $\mathcal{A}$ is star-shaped and for all $\lambda\in(0,1)$ and $X\in\mathcal{A}$ it holds that $f(\lambda X)\leq \lambda f(X)$.
	
\begin{lemma}\label{lem:algebraicProperties}
    Let $\mf R=(\mathcal B,\mathcal S,\pi)$ be a return risk measurement regime and consider the associated MARRM. 
    \begin{enumerate}[(i)]
        \item $\eta_{\mf R}$ is monotone, i.e., for all $X,Y\in\mathcal C$ with $X\le Y$,
        $\eta_{\mf R}(X)\le \eta_{\mf R}(Y).$ 
        \item If $\mathcal{C}$, $\mathcal{S}$ are cones, and $\pi$ is positively homogeneous, then $\eta_{\mf R}$ is positively homogeneous.
        \item If $\mathcal{C}$, $\mathcal{S}$ and $\pi$ are star-shaped, then $\eta_{\mf R}$ is star-shaped. 
    \end{enumerate}
\end{lemma}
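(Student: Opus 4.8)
The plan is to read off all three properties directly from the infimum defining $\eta_{\mf R}$, manipulating the feasible set $\{Z\in\mathcal S\,;\,\tfrac{X}{Z}\in\mathcal B\}$ and invoking only Assumption~\ref{assum:comp} together with the monotonicity of $\mathcal B$. Each part reduces to a short comparison of feasible sets, so I do not expect a genuine obstacle; the one point demanding care is explained at the end.

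For (i), I would fix $X,Y\in\mathcal C$ with $X\le Y$ and argue that every $Z\in\mathcal S$ that is feasible for $Y$ is also feasible for $X$. Since $Z>0$ and $X\le Y$ pointwise a.s., we have $\tfrac{X}{Z}\le\tfrac{Y}{Z}$; compatibility guarantees $\tfrac{X}{Z}\in\mathcal K$, so the monotonicity of $\mathcal B$ upgrades $\tfrac{Y}{Z}\in\mathcal B$ to $\tfrac{X}{Z}\in\mathcal B$. Thus the feasible set for $Y$ is contained in that for $X$, and passing to infima yields $\eta_{\mf R}(X)\le\eta_{\mf R}(Y)$; the case of an empty feasible set for $Y$ is trivial under the convention $\inf\emptyset=\infty$.

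For (ii) and (iii), the common device is the substitution $Z=\lambda W$. For positive homogeneity, fix $\lambda\in(0,\infty)$ and $X\in\mathcal C$, noting $\lambda X\in\mathcal C$ because $\mathcal C$ is a cone. Since $\mathcal S$ is a cone, $\lambda\mathcal S=\mathcal S$, so $Z\mapsto Z/\lambda$ is a bijection of $\mathcal S$ onto itself; writing $Z=\lambda W$ gives $\tfrac{\lambda X}{Z}=\tfrac{X}{W}$, hence the constraint $\tfrac{\lambda X}{Z}\in\mathcal B$ is equivalent to $\tfrac{X}{W}\in\mathcal B$. Combining this with $\pi(\lambda W)=\lambda\pi(W)$ shows that the infimum defining $\eta_{\mf R}(\lambda X)$ equals $\lambda$ times the infimum defining $\eta_{\mf R}(X)$, i.e.\ $\eta_{\mf R}(\lambda X)=\lambda\eta_{\mf R}(X)$.

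For star-shapedness, fix $\lambda\in(0,1)$ and $X\in\mathcal C$, with $\lambda X\in\mathcal C$ by star-shapedness of $\mathcal C$. Given any $W\in\mathcal S$ feasible for $X$, star-shapedness of $\mathcal S$ gives $\lambda W\in\mathcal S$, and $\tfrac{\lambda X}{\lambda W}=\tfrac{X}{W}\in\mathcal B$ shows $\lambda W$ is feasible for $\lambda X$; star-shapedness of $\pi$ then yields $\pi(\lambda W)\le\lambda\pi(W)$. Taking the infimum over all feasible $W$ gives $\eta_{\mf R}(\lambda X)\le\lambda\eta_{\mf R}(X)$. The only subtle step is in (i): one must explicitly use Assumption~\ref{assum:comp} to place $\tfrac{X}{Z}$ in $\mathcal K$ before the monotonicity of $\mathcal B$ applies. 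Likewise, the reason (iii) produces a one-sided inequality rather than the equality of (ii) is that here one has only an inclusion of feasible points (obtained by scaling), whereas the cone structure in (ii) makes the scaling a bijection of feasible sets.
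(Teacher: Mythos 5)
Your proof is correct: all three parts follow exactly the feasible-set comparisons and scaling substitutions that the paper has in mind when it omits the proof as ``straightforward,'' and you rightly flag the one place where Assumption~\ref{assum:comp} is genuinely needed (to place $\tfrac{X}{Z}$ in $\mathcal K$ before invoking monotonicity of $\mathcal B$). The distinction you draw between the bijection of feasible sets in (ii) and the mere inclusion in (iii) is also the right explanation for why one gets equality versus a one-sided inequality.
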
	

The proof of Lemma~\ref{lem:algebraicProperties} is straightforward and shall be omitted. 

In the following lemma, we use the notation $\mathcal{S}_m:=\{Z\in\mathcal{S}\,;\,\pi(Z)=m\}$ for all $m>0$. We call it ``reduction lemma'' as it is the counterpart of the result in~\cite[Lemma 3]{farkas_2015}.

\begin{lemma}\label{lem:reductionLemma}
    If $\mathcal{S}$ is a cone and $\pi$ is positively homogeneous, then
    \begin{align*}
        \eta_{\mf R}(X) &= \inf\left\{\lambda >0\,;\,\tfrac{X}{\lambda}\in\mathcal{B}\cdot \mathcal{S}_1\right\}.
    \end{align*}
    Further, the set $\mathcal{B}\cdot \mathcal{S}_1$ is monotone in the following sense:
    $$\forall\,X\in\mathcal{B}\cdot \mathcal{S}_1\,\,\forall\,Y\in \mathcal{C}:\quad Y\leq X\quad\implies\quad Y\in\mathcal{B}\cdot \mathcal{S}_1.$$
\end{lemma}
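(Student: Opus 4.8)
The plan is to treat the two assertions separately. For the infimum identity, I would show that the two index sets over which the infima are taken in fact coincide, using that $\mathcal{S}$ is a cone and that $\pi$ is positively homogeneous in order to rescale any admissible security payoff $Z$ to its price-one normalization $Z/\pi(Z)\in\mathcal{S}_1$. For the monotonicity of $\mathcal{B}\cdot\mathcal{S}_1$, I would keep the $\mathcal{S}_1$-factor fixed and transfer the domination $Y\le X$ onto the $\mathcal{B}$-factor, where the monotonicity of the relative acceptance set $\mathcal{B}$ can be invoked; here Assumption~\ref{assum:comp} enters precisely to guarantee that the newly produced factor is a legitimate element of $\mathcal{K}$.

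For the infimum identity, I would set $A:=\{\pi(Z)\,;\,Z\in\mathcal{S},\ \tfrac{X}{Z}\in\mathcal{B}\}$ and $A':=\{\lambda>0\,;\,\tfrac{X}{\lambda}\in\mathcal{B}\cdot\mathcal{S}_1\}$, observe that $\eta_{\mf R}(X)=\inf A$ by definition, and reduce the claim to the set equality $A=A'$ (with the convention $\inf\emptyset=\infty$, which covers the degenerate case). For $A\subset A'$: given $\lambda=\pi(Z)\in A$, I would put $Z_1:=Z/\lambda$, noting that $Z_1\in\mathcal{S}$ by the cone property and $\pi(Z_1)=1$ by positive homogeneity, so $Z_1\in\mathcal{S}_1$; since $Z=\lambda Z_1$, one has $\tfrac{X}{\lambda}=\tfrac{X}{Z}\,Z_1\in\mathcal{B}\cdot\mathcal{S}_1$, i.e.\ $\lambda\in A'$. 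For the reverse inclusion, given $\lambda\in A'$ with $\tfrac{X}{\lambda}=B\,Z_1$, $B\in\mathcal{B}$, $Z_1\in\mathcal{S}_1$, I would set $Z:=\lambda Z_1\in\mathcal{S}$, compute $\pi(Z)=\lambda$ and $\tfrac{X}{Z}=B\in\mathcal{B}$, so $\lambda=\pi(Z)\in A$. This yields $A=A'$ and hence the identity.

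For the monotonicity of $\mathcal{B}\cdot\mathcal{S}_1$, I would take $X\in\mathcal{B}\cdot\mathcal{S}_1$ with a fixed factorization $X=B\,Z_1$, $B\in\mathcal{B}$, $Z_1\in\mathcal{S}_1$, and $Y\in\mathcal{C}$ with $Y\le X$. Reusing the same $Z_1$, I would define $B':=Y/Z_1$, so that $Y=B'\,Z_1$; dividing $Y\le X$ by $Z_1>0$ gives $B'\le X/Z_1=B$. Because $Y\in\mathcal{C}$ and $Z_1\in\mathcal{S}$, compatibility yields $B'=Y/Z_1\in\tfrac{\mathcal{C}}{\mathcal{S}}\subset\mathcal{K}$, so $B'$ is an element of $\mathcal{K}$ dominated by $B\in\mathcal{B}$; the monotonicity of $\mathcal{B}$ then forces $B'\in\mathcal{B}$, whence $Y=B'\,Z_1\in\mathcal{B}\cdot\mathcal{S}_1$.

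The step I would flag as the only genuine subtlety is the membership $B'=Y/Z_1\in\mathcal{K}$ in the last part: the monotonicity axiom of $\mathcal{B}$ quantifies only over elements of $\mathcal{K}$, so without placing $B'$ in $\mathcal{K}$ the argument would stall. This is exactly what the compatibility condition $\tfrac{\mathcal{C}}{\mathcal{S}}\subset\mathcal{K}$ of Assumption~\ref{assum:comp} is designed to supply. Everything else is bookkeeping built on the cone/homogeneity reparametrization, so I would not anticipate further obstacles.
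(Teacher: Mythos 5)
Your proof is correct and takes essentially the same approach as the paper: the paper proves only the monotonicity claim (treating the infimum identity as straightforward), and its argument is exactly your step of fixing the $\mathcal{S}_1$-factor, dividing $Y\le BZ_1$ by $Z_1$, and invoking monotonicity of $\mathcal{B}$, with compatibility implicitly guaranteeing $\tfrac{Y}{Z_1}\in\mathcal{K}$ — a subtlety you rightly make explicit. Your rescaling argument $Z\mapsto Z/\pi(Z)$ for the infimum identity is the natural filling-in of the part the paper omits.
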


\begin{proof}
    We only prove the second statement. Let $B\in\mathcal{B},Z\in\mathcal{S}_1,Y\in \mathcal{C}$ with $Y\leq B Z$. Then, $\frac{Y}{Z}\in\mathcal{B}$, which implies that $Y\in\mathcal{B}\cdot \mathcal{S}_1$. 
\end{proof}

For classical risk measures, quasi-convexity describes that the risk measure does not punish, but rather rewards diversification; see, e.g.,~\cite{cerreiaVioglio_2011,karoui_2009}. Moreover, cash-additivity of monetary risk measures guarantees that quasi-convexity is equivalent to convexity. For MARRMs, we have to substitute cash-additivity by positive homogeneity to obtain such an equivalence. This is the implication ``$(ii)\Rightarrow(iii)$'' in the next result, which relies on the following fact: A quasi-convex and positively homogeneous map with range $[0,\infty]$ is subadditive. This statement fails for classical risk measures, because their range is the extended real line. In addition, the next result shows that convexity for MARRMs is characterized by the set $\mathcal{B}\cdot \mathcal{S}_1$ from the reduction lemma. 

\begin{theorem}\label{thm:quasiConvexity}
    Let $\mf R=(\mathcal B,\mathcal S,\pi)$ be a return risk measurement regime such that $\mathcal{S}$ is a cone and $\pi$ is positively homogeneous. Further, let $\mathcal{C}$ be a convex cone. Then, the following statements are equivalent:
    \begin{enumerate}[(i)]
        \item The set $(\mathcal{B}\cdot \mathcal{S}_1)\cap \mathcal{C}$ is convex.
        \item $\eta_{\mf R}$ is quasi-convex.
        \item $\eta_{\mf R}$ is subadditive.
        \item $\eta_{\mf R}$ is convex.
    \end{enumerate}
\end{theorem}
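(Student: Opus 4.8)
The plan is to turn the whole theorem into a statement about the Minkowski gauge of a single set. By Lemma~\ref{lem:reductionLemma}, setting $\mathcal{A}:=(\mathcal{B}\cdot\mathcal{S}_1)\cap\mathcal{C}$ and using that $\mathcal{C}$ is a cone (so $\tfrac{X}{\lambda}\in\mathcal{C}$ for $X\in\mathcal{C}$, $\lambda>0$), the MARRM is exactly the gauge $\eta_{\mf R}(X)=\inf\{\lambda>0\,;\,\tfrac{X}{\lambda}\in\mathcal{A}\}$ on $\mathcal{C}$. By Lemma~\ref{lem:algebraicProperties} it is monotone and positively homogeneous, and by Lemma~\ref{lem:reductionLemma} the set $\mathcal{A}$ is downward monotone. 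First I would record the sandwich $\{X\in\mathcal{C}\,;\,\eta_{\mf R}(X)<1\}\subset\mathcal{A}\subset\{X\in\mathcal{C}\,;\,\eta_{\mf R}(X)\le 1\}$: the right inclusion is immediate from $X=\tfrac{X}{1}$, and the left one follows since $\eta_{\mf R}(X)<1$ yields $\lambda<1$ with $\tfrac{X}{\lambda}\in\mathcal{A}$, whence $X\le\tfrac{X}{\lambda}$ and monotonicity of $\mathcal{A}$ give $X\in\mathcal{A}$. This gauge picture underpins every implication.

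Next I would settle the equivalences among (ii), (iii), (iv), which only use positive homogeneity. Here (iv)$\Rightarrow$(iii) is the standard fact that a convex positively homogeneous map is subadditive, (iii)$\Rightarrow$(iv) its converse, and (iv)$\Rightarrow$(ii) is trivial; the substance is the flagged fact (ii)$\Rightarrow$(iii): a quasi-convex, positively homogeneous map with values in $[0,\infty]$ is subadditive. For $X,Y\in\mathcal{C}$ with $a:=\eta_{\mf R}(X)$, $b:=\eta_{\mf R}(Y)$ finite and any $\mu\in(0,1)$, I write $X+Y=\mu\tfrac{X}{\mu}+(1-\mu)\tfrac{Y}{1-\mu}$; positive homogeneity and quasi-convexity give $\eta_{\mf R}(X+Y)\le\max\{\tfrac{a}{\mu},\tfrac{b}{1-\mu}\}$, and minimizing over $\mu$ (optimally $\mu=\tfrac{a}{a+b}$ when $a+b>0$, trivially when $a=b=0$) yields $\eta_{\mf R}(X+Y)\le a+b$. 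It is precisely the range $[0,\infty]$ that is indispensable: non-negativity of $a,b$ places the minimizer in $(0,1)$, which is what fails for risk measures valued on the full extended line.

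For (i)$\Rightarrow$(ii) I would argue directly with the gauge. Given $X,Y\in\mathcal{C}$ and $\theta\in(0,1)$, put $c:=\max\{\eta_{\mf R}(X),\eta_{\mf R}(Y)\}$, which may be assumed finite. For each $\eps>0$, positive homogeneity gives $\eta_{\mf R}(\tfrac{X}{c+\eps})<1$ and $\eta_{\mf R}(\tfrac{Y}{c+\eps})<1$, so both points lie in $\mathcal{A}$ by the sandwich. Convexity of $\mathcal{A}$ together with the convex-cone structure of $\mathcal{C}$ then places $\tfrac{\theta X+(1-\theta)Y}{c+\eps}$ in $\mathcal{A}$, giving $\eta_{\mf R}(\theta X+(1-\theta)Y)\le c+\eps$; letting $\eps\downarrow 0$ proves quasi-convexity.

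The hard part, and the step I expect to be the main obstacle, is closing the loop by deriving (i) from (iv). Convexity of $\eta_{\mf R}$ controls $\mathcal{A}$ only through the sandwich, which determines it on the open sublevel set but leaves the level set $\{\eta_{\mf R}=1\}$ free. For $X,Y\in\mathcal{A}$ and $W:=\theta X+(1-\theta)Y$, convexity yields $\eta_{\mf R}(W)\le 1$, and if this is strict then $W\in\mathcal{A}$ by the sandwich. The genuine difficulty is the boundary case $\eta_{\mf R}(W)=1$: convexity then forces $\eta_{\mf R}(X)=\eta_{\mf R}(Y)=1$ and, as a convex function attaining its endpoint value at an interior point, $\eta_{\mf R}\equiv 1$ along the segment $[X,Y]$ — yet this alone does not return $W$ to $\mathcal{A}$. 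I expect to need the monotonicity of $\mathcal{A}$ in an essential way here, together with attainment of the infimum defining $\eta_{\mf R}$ (equivalently, that $\mathcal{A}$ coincides with the closed sublevel set $\{X\in\mathcal{C}\,;\,\eta_{\mf R}(X)\le 1\}$). That identification upgrades the sandwich to an equality and reduces the convexity of $\mathcal{A}$ to the convexity of a bona fide sublevel set; establishing it is the crux on which the equivalence with (i) hinges.
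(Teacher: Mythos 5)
Your arguments for (i)$\Rightarrow$(ii), (ii)$\Rightarrow$(iii) and (iii)$\Leftrightarrow$(iv) are correct and essentially coincide with the paper's: the paper likewise passes through Lemma~\ref{lem:reductionLemma}, uses the monotonicity of $\mathcal{B}\cdot\mathcal{S}_1$ exactly as in your sandwich, and proves (ii)$\Rightarrow$(iii) with the same convex-combination trick (the paper perturbs by $\eps$ where you optimize over $\mu$; an immaterial difference). The genuine gap is the implication (iv)$\Rightarrow$(i), which you explicitly leave open; without it the four statements are not shown to be equivalent, so as it stands the proposal is incomplete.

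However, your diagnosis of the obstruction is precisely correct, and the paper's own proof stumbles at that very point. The paper argues by contraposition: if $\mathcal{A}:=(\mathcal{B}\cdot\mathcal{S}_1)\cap\mathcal{C}$ is not convex, choose $X,Y\in\mathcal{A}$ and $\alpha\in(0,1)$ with $W:=\alpha X+(1-\alpha)Y\in\mathcal{C}\setminus(\mathcal{B}\cdot\mathcal{S}_1)$; then $\alpha\eta_{\mf R}(X)+(1-\alpha)\eta_{\mf R}(Y)\le 1$ and, ``by monotonicity'', $1<\eta_{\mf R}(W)$. Monotonicity in fact only yields $\eta_{\mf R}(W)\ge 1$ (any $\lambda\le 1$ with $W/\lambda\in\mathcal{B}\cdot\mathcal{S}_1$ would force $W\in\mathcal{B}\cdot\mathcal{S}_1$); the case $\eta_{\mf R}(W)=1$ with non-attained infimum --- exactly your boundary case --- is not excluded, and then no contradiction with convexity arises. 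This is not cosmetic: take $\Omega=\{\omega_1,\omega_2\}$, $\mathcal{C}=\mathcal{K}=(0,\infty)^2$, $\mathcal{S}=\{(\lambda,\lambda)^{\intercal}\,;\,\lambda>0\}$, $\pi((\lambda,\lambda)^{\intercal})=\lambda$, and
\begin{equation*}
\mathcal{B}=\left\{(x,y)\in\mathcal{K}\,;\,x+y<2\right\}\cup\left\{(x,y)\in\mathcal{K}\,;\,x+y=2,\ x\le\tfrac12\ \text{or}\ x\ge\tfrac32\right\}.
\end{equation*}
All hypotheses of the theorem hold and $\mathcal{B}$ is monotone, nonempty and proper; here $\mathcal{S}_1=\{(1,1)^{\intercal}\}$, so $(\mathcal{B}\cdot\mathcal{S}_1)\cap\mathcal{C}=\mathcal{B}$, which is not convex (it contains $(\tfrac12,\tfrac32)$ and $(\tfrac32,\tfrac12)$ but not their midpoint $(1,1)$), and yet $\eta_{\mf R}(x,y)=\tfrac{x+y}{2}$ is linear, hence convex, subadditive and quasi-convex. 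So (iv)$\,\not\Rightarrow\,$(i) under the stated assumptions, and your proposed remedy --- requiring that $\mathcal{A}$ coincide with the closed sublevel set $\{X\in\mathcal{C}\,;\,\eta_{\mf R}(X)\le 1\}$, i.e., attainment of the infimum --- is exactly the additional hypothesis under which the implication (and your sandwich argument) goes through, since sublevel sets of convex functions are convex. In short: you proved precisely the implications that are provable under the stated hypotheses, and the step you could not complete is the step at which the paper's argument is itself flawed.
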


\begin{proof}
    Applying Lemma~\ref{lem:reductionLemma} and using that $\mathcal{C}$ is a cone, it holds for all $X\in \mathcal{C}$ that
        \[\eta_{\mf R}(X) = \inf\left\{\lambda >0\,;\,\tfrac{X}{\lambda}\in(\mathcal{B}\cdot \mathcal{S}_1)\cap \mathcal{C}\right\}.\]
        ``$(i)\Rightarrow(ii)$'': Let $X,Y\in \mathcal{C}$ and choose an arbitrary $\alpha\in(0,1)$. Without loss of generality, we assume that $\eta_{\mf R}(X)\leq \eta_{\mf R}(Y)$. By the monotonicity of $\mathcal{B}\cdot \mathcal{S}_1$ and $\mathcal{C}$ being a cone, we obtain for all $\lambda_X\in(\eta_{\mf R}(X),\infty)$ and for all $\lambda_Y\in(\eta_{\mf R}(Y),\infty)$ that 
        $\tfrac{X}{\lambda_X}\in(\mathcal{B}\cdot \mathcal{S}_1)\cap \mathcal{C}$ and $\tfrac{Y}{\lambda_Y}\in(\mathcal{B}\cdot \mathcal{S}_1)\cap\mathcal{C}$.
        So, by the convexity of $(\mathcal{B}\cdot \mathcal{S}_1)\cap\mathcal{C}$, we get for all $\lambda \in(\eta_{\mf R}(Y),\infty)$ that
        \begin{align*}
            \alpha \tfrac{X}{\lambda} + (1-\alpha) \tfrac{Y}{\lambda} \in (\mathcal{B}\cdot \mathcal{S}_1)\cap \mathcal{C}.
        \end{align*}
        This shows that for $\lambda \in(\eta_{\mf R}(Y),\infty)$, the condition $\frac{Y}{\lambda}\in(\mathcal{B}\cdot \mathcal{S}_1)\cap \mathcal{C}$ implies that $\alpha \frac{X}{\lambda} + (1-\alpha) \frac{Y}{\lambda} \in (\mathcal{B}\cdot \mathcal{S}_1)\cap \mathcal{C}$. Hence, $\eta_{\mf R}(\alpha X + (1-\alpha) Y)\leq \eta_{\mf R}(Y)$.
         
        ``$(ii)\Rightarrow(iii)$'': Pick a pair $X,Y\in\mathcal C$ such that $\eta_{\mf R}(X)<\infty$ and $\eta_{\mf R}(Y)<\infty$. If this condition is not satisfied, the defining inequality for subadditivity is automatically satisfied. 
        Let $\eps>0$ be arbitrary and set $\lambda_X(\eps):=\eta_{\mf R}(X)+\eps$, $\lambda_Y(\eps):=\eta_{\mf R}(Y)+\eps$, and $\alpha:=\lambda_X(\eps)/(\lambda_X(\eps)+\lambda_Y(\eps))$.
        By (ii), we know that $\eta_{\mf R}$ is quasi-convex, which gives us that
        \begin{align*}
            \eta_{\mf R}\left(\tfrac{X+Y}{\lambda_X(\eps)+\lambda_Y(\eps)}\right)&=\eta_{\mf R}\left(\alpha \tfrac{X}{\lambda_X(\eps)} + (1-\alpha)\tfrac{Y}{\lambda_Y(\eps)}\right)\leq \max\left\{\eta_{\mf R}\left(\tfrac{X}{\lambda_X(\eps)}\right), \eta_{\mf R}\left(\tfrac{Y}{\lambda_Y(\eps)}\right)\right\}.
        \end{align*} 
        Further, by Lemma~\ref{lem:algebraicProperties} (ii), $\eta_{\mf R}$ is positively homogeneous, which gives us that 
        \begin{align*}
            \eta_{\mf R}(X+Y)\leq\max\left\{\tfrac{\lambda_X(\eps)+\lambda_Y(\eps)}{\lambda_X(\eps)}\eta_{\mf R}(X), \tfrac{\lambda_X(\eps)+\lambda_Y(\eps)}{\lambda_Y(\eps)}\eta_{\mf R}(Y)\right\}.
        \end{align*} 
        Letting $\eps\downarrow 0$, we get the desired inequality. 

        ``$(iii)\Rightarrow(iv)$'': $\eta_{\mf R}$ being positively homogeneous and subadditive implies that $\eta_{\mf R}$ is convex.

        ``$(iv)\Rightarrow(i)$'': We perform a proof by contraposition. To do so, note that $(\mathcal{B}\cdot \mathcal{S}_1)\cap \mathcal{C}$ not being convex and $\mathcal{C}$ being convex implies that there exist $X,Y\in(\mathcal{B}\cdot \mathcal{S}_1)\cap \mathcal{C}$ and $\alpha\in(0,1)$ such that
        $\alpha X + (1-\alpha) Y \in \mathcal{C}\backslash (\mathcal{B}\cdot \mathcal{S}_1)$.
        Finally, by the monotonicity of $\mathcal{B}\cdot \mathcal{S}_1$ we obtain 
        $$\alpha\eta_{\mf R}(X)+(1-\alpha)\eta_{\mf R}(Y) \leq 1 < \eta_{\mf R}(\alpha X + (1-\alpha) Y).$$
\end{proof}

In the following example, we elaborate further on the condition of $(\mathcal{B}\cdot \mathcal{S}_1)\cap \mathcal{C}$ being convex. 
We illustrate that, even if $\mathcal B$, $\mathcal S_1$, and $\mathcal C$ are convex, this intersection may or may not be convex. 

\begin{example}\label{exam:missingConvexity}
    Let $\Omega = \{\omega_1,\omega_2\}$, i.e., we can identify $L^{0}$ with the two-dimensional plane  $\mathbb{R}^{2}$. 
    Further, we set $\mathcal{C}=(0,\infty)\times(0,\infty)$ and assume that 
    $$\mathcal{S} = \{\lambda (1,1)^{\intercal}+\mu(0.1,2)^{\intercal}\,;\,\lambda,\mu\in(0,\infty)\}.$$ 
    Suppose that the pricing map $\pi$ is linear and given by $\pi((1,1)^{\intercal})=\pi((0.1,2)^{\intercal})=1$. Hence,
    $$\mathcal{S}_1= \{\lambda (1,1)^{\intercal}+(1-\lambda)(0.1,2)^{\intercal}\,;\,\lambda\in[0,1]\}.$$
    Figure~\ref{fig:missingConvexity} contains the ``worst-case relative acceptance set'' 
    $\mathcal{B}_{\text{left}}=\{x\in\mathbb{R}^2\,;\,x_1\leq 1, x_2\leq 1\}$ on the left-hand and $\mathcal{B}_{\text{right}}=\{x\in\mathbb{R}^2\,;\,x_1\leq 1,\,x_2\leq 4-3x_1\}$ on the right-hand side. 
    We can see that $(\mathcal{B}_{\text{left}}\cdot \mathcal{S}_1)\cap \mathcal{C}=\mathcal{B}\cdot \mathcal{S}_1$ is convex, while $\mathcal{B}_{\text{right}}\cdot \mathcal{S}_1$ fails to be convex.
    \begin{figure}
        \begin{subfigure}[b]{0.42\textwidth}
            \includegraphics[width=\linewidth]{./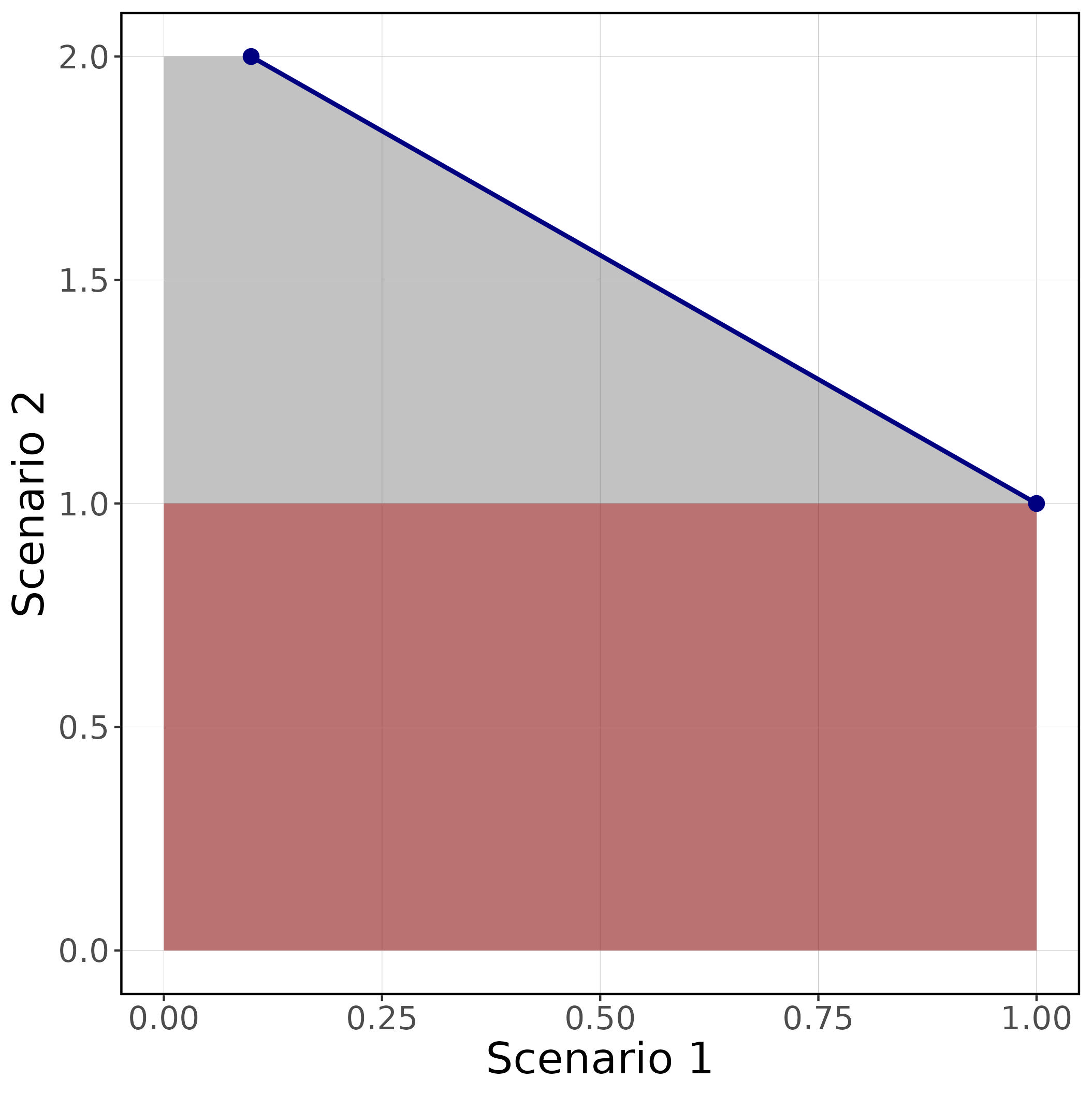}
        \end{subfigure}
        \hspace{1.1cm}
        \begin{subfigure}[b]{0.42\textwidth}
            \includegraphics[width=\linewidth]{./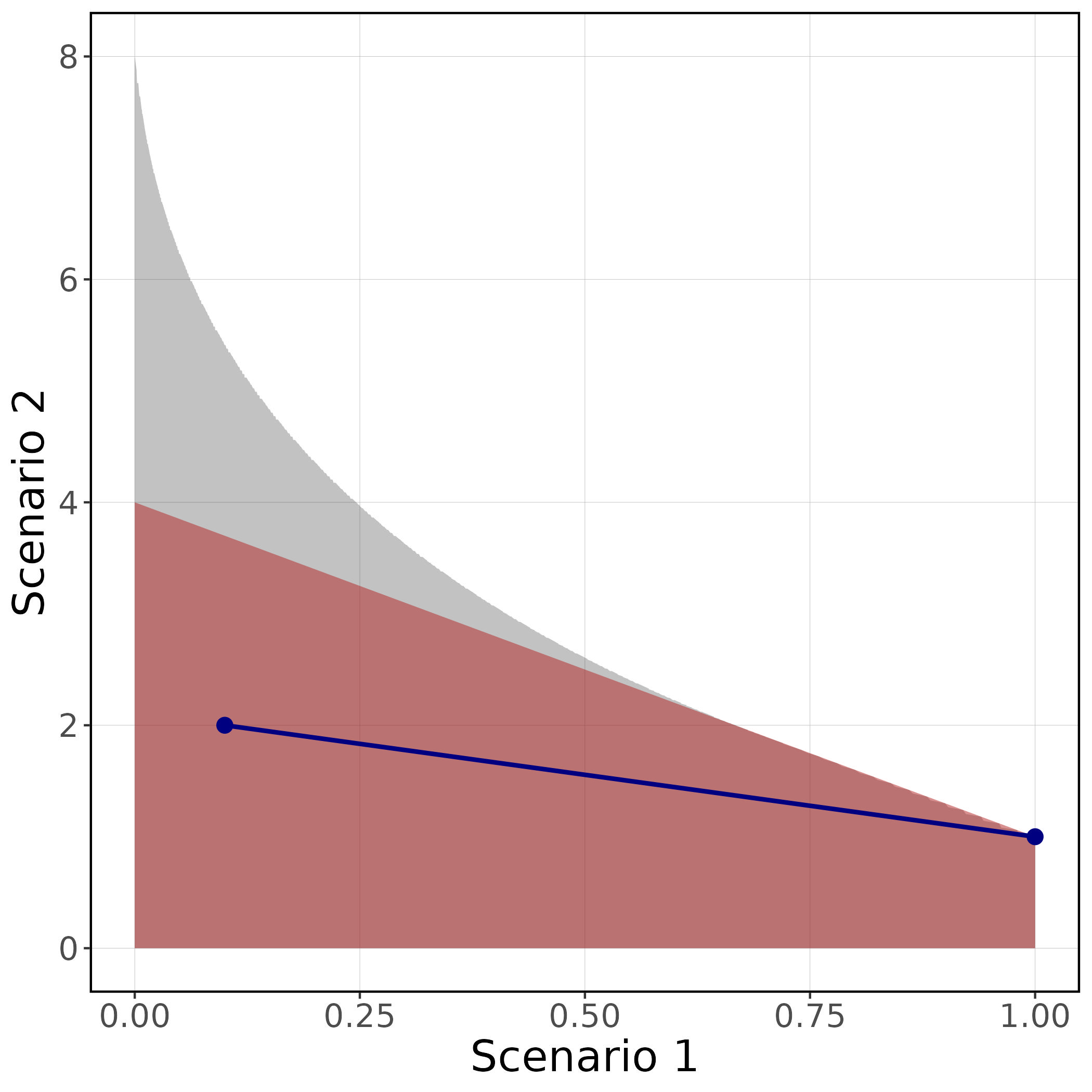}
        \end{subfigure}
        \captionsetup{font=footnotesize}
        \caption{\footnotesize In both figures, the red surface represents the relative acceptance set $\mathcal{B}$, while the blue line is the set $\mathcal{S}_1$ of security payoffs with price $1$.
        The gray and red surface together is the set $\mathcal{B}\cdot\mathcal{S}_1$. \textit{LHS:} The set $\mathcal{B}\cdot\mathcal{S}_1$ is convex. \textit{RHS:}  The set $\mathcal{B}\cdot\mathcal{S}_1$ is not convex. }
        \label{fig:missingConvexity}
    \end{figure}
\end{example}

\begin{remark}\label{rem:discussionConvexityRelativeAcceptanceSet}
    Note, $\mathcal{B}\cdot \mathcal{S}_1$ being convex is a stronger condition than $(\mathcal{B}\cdot \mathcal{S}_1)\cap \mathcal{C}$ being convex. To obtain a convex set $\mathcal{B}\cdot \mathcal{S}_1$, it is intuitive (but not necessary) to choose a relative acceptance set $\mathcal{B}$ which is convex.
    However, a typical form of such a relative  acceptance set is given by 
    $$\mathcal{B} = \{X\in\mathcal{C}\,;\,\rho(\log(X))\leq 0\}$$ with a monetary risk measure $\rho$. 
    In this situation, convexity of $\mathcal{B}$ is equivalent to convexity of $\rho\circ\log$.
    Example~\ref{exam:nonConvexityRelativeAcceptanceSet} below shows that this condition can be violated, even if $\rho $ is convex. Such a discussion is missing for~\cite[Lemma 1 (b)]{Return}. 
\end{remark}

\begin{example}\label{exam:nonConvexityRelativeAcceptanceSet}
    Suppose that the underlying probability space is atomless. 
    Select an event $A\in\mathcal F$ and a probability level $p$ with $0<P(A)<p<\frac 1 2$, and set $x:=1-\frac{P(A)}{1-p}$.
    Moreover, for a constant $a>0$ we consider $X=e^{a} 1_A + 1_{A^c}$ and $Y=1_{A}+e^{a} 1_{A^c}$. 

    The Expected Shortfall at level $p$ for $Z\in L^1$ is defined by ${\rm ES}_{p}(Z) = \frac{1}{1-p}\int_{p}^{1}q_{Z}^{-}(u){\rm d}u$. With this definition, we obtain 
    \[
        \lambda \expectedShortfall{p}{\log(X)} + (1-\lambda) \expectedShortfall{p}{\log(Y)}= a \left(1-\lambda x\right)=:f_1(\lambda, x, a).
    \]

    Note that $\lambda X + (1-\lambda) Y = (\lambda e^{a} + (1-\lambda))1_A + ((1-\lambda) e^{a} + \lambda) 1_{A^c}$. Then, for $\lambda > \frac{1}{2}$, we obtain
    \begin{align*}
        \expectedShortfall{p}{\log(\lambda X + (1-\lambda) Y} = (1-x)\log(\lambda e^a + (1-\lambda)) + x \log((1-\lambda) e^{a} + \lambda) =: f_2(\lambda, x, a).
    \end{align*}
    Then, $f_2(0.75,0.5,100) > 99.1 > 62.5 = f_1(0.75,0.5,100)$  shows that $\symbolExpectedShortfall{p} \circ \log$ is not convex.
\end{example}

Now, we state a sufficient condition to obtain a property that we call ``log-star-shapedness'' of the MARRM. By Remark~\ref{sec:discussion_assumptions_logstarshaped} below, this property holds, if the relative acceptance set is defined via a star-shaped monetary risk measure. Star-shaped monetary risk measures have recently been studied in detail in~\cite{castagnoli_2022,Laeven2, Zullino}. 

For the next result, given a constant $\alpha>0$ and a set $\mathcal{A}\subset{L^{0}}$, we use the notation $\mathcal{A}^{\alpha}$ for the set of powers $\{X^{\alpha}\,;\,X\in\mathcal{A}\}$. We call a set $\mathcal{A}\subset L^0_{++}$ log-star-shaped, if for all $\alpha\in(0,1)$ it holds that $\mathcal{A}^{\alpha}\subset\mathcal{A}$. For such a set $\mathcal{A}$, we call a map $f:\mathcal{A}\mapsto[0,\infty]$ log-star-shaped, if for all $\alpha\in(0,1)$ and $X\in\mathcal{A}$ it holds that $f(X^{\alpha})\leq f(X)^{\alpha}$.

\begin{proposition}\label{prop:log-star-shaped}
    Let $\mf R=(\mathcal B,\mathcal S,\pi)$ be a return risk measurement regime. Assume that $\mathcal{C},\mathcal{B}$ and $\mathcal{S}$, as well as, $\pi$ are log-star-shaped. Then, $\eta_{\mf R}$ is log-star-shaped.
\end{proposition}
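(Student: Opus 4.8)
The plan is to show directly that the $\alpha$-power of any security payoff feasible for $X$ is itself feasible for $X^\alpha$, and then to pull the power outside the infimum. First I would fix $\alpha\in(0,1)$ and $X\in\mathcal C$; since $\mathcal C$ is log-star-shaped, $X^\alpha\in\mathcal C$, so that $\eta_{\mf R}(X^\alpha)$ is well defined. If $\eta_{\mf R}(X)=\infty$, then $\eta_{\mf R}(X)^\alpha=\infty$ and the claimed inequality is immediate; so I would henceforth assume the feasible set $\{Z\in\mathcal S\,;\,\tfrac XZ\in\mathcal B\}$ is nonempty.

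The key step is a feasibility transfer. Let $Z\in\mathcal S$ satisfy $\tfrac XZ\in\mathcal B$. Because $\mathcal S$ is log-star-shaped, $Z^\alpha\in\mathcal S$. Using that powers distribute over the pointwise quotient of strictly positive random variables, $\tfrac{X^\alpha}{Z^\alpha}=\bigl(\tfrac XZ\bigr)^\alpha$, and since $\mathcal B$ is log-star-shaped this lies in $\mathcal B$. Hence $Z^\alpha$ is feasible for $X^\alpha$. Moreover, log-star-shapedness of $\pi$ gives $\pi(Z^\alpha)\le\pi(Z)^\alpha$. Consequently $Z^\alpha$ witnesses the bound $\eta_{\mf R}(X^\alpha)\le\pi(Z^\alpha)\le\pi(Z)^\alpha$ for every feasible $Z$.

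It remains to take the infimum over all such $Z$. Since the displayed bound holds for each feasible $Z$, I would pass to the infimum on the right to obtain $\eta_{\mf R}(X^\alpha)\le\inf\{\pi(Z)^\alpha\,;\,Z\in\mathcal S,\tfrac XZ\in\mathcal B\}$. The final ingredient is the interchange $\inf_Z\pi(Z)^\alpha=\bigl(\inf_Z\pi(Z)\bigr)^\alpha$, which holds because $t\mapsto t^\alpha$ is continuous and strictly increasing on $[0,\infty)$; this yields $\eta_{\mf R}(X^\alpha)\le\eta_{\mf R}(X)^\alpha$, as required.

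I expect the only genuine subtlety—rather than a real obstacle—to be the interchange of the infimum with the power map, together with the careful bookkeeping of the degenerate case $\eta_{\mf R}(X)=\infty$ (where one relies on the convention $\infty^\alpha=\infty$ for $\alpha>0$). The structural heart of the argument, namely that the assignment $Z\mapsto Z^\alpha$ transports the feasible set for $X$ into the feasible set for $X^\alpha$ while contracting prices, follows immediately and simultaneously from the four log-star-shapedness hypotheses on $\mathcal C$, $\mathcal B$, $\mathcal S$, and $\pi$.
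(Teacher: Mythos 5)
Your proof is correct and is essentially the same argument as the paper's: both show that $Z\mapsto Z^{\alpha}$ carries the feasible set for $X$ into the feasible set for $X^{\alpha}$ while contracting prices via $\pi(Z^{\alpha})\le\pi(Z)^{\alpha}$, and then pull the power $t\mapsto t^{\alpha}$ outside the infimum. The paper merely phrases the same idea through a chain of reparametrizations using the set inclusions $\mathcal S\subset\mathcal S^{1/\alpha}$ and $\mathcal B\subset\mathcal B^{1/\alpha}$, whereas you argue element-wise; your explicit treatment of the case $\eta_{\mf R}(X)=\infty$ is a harmless addition.
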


\begin{proof}
    Note that for a subset $\mathcal{A}\subset L^0_{++}$ and for all $\alpha\in(0,1)$ it holds that $\mathcal{A}^{\alpha}\subset\mathcal{A}$ if and only if $\mathcal{A}\subset\mathcal{A}^{\frac{1}{\alpha}}$. Then, for an arbitrary $\alpha\in(0,1)$ we obtain
    \begin{align*}
        \eta_{\mf R}(X^{\alpha}) &= \inf\left\{\pi(Z)\,;\,Z\in\mathcal{S},\tfrac{X^{\alpha}}{Z}\in\mathcal{B}\right\}= \inf\left\{\pi\left(W^{\alpha}\right)\,;\,W\in\mathcal{S}^{\frac{1}{\alpha}},\,\tfrac{X^\alpha}{W^\alpha}\in\mathcal{B}\right\}\\
        &\leq \inf\left\{\pi(W^\alpha)\,;\,W\in\mathcal{S},\,\tfrac{X^\alpha}{W^\alpha}\in\mathcal{B}\right\}\leq \inf\left\{\pi(W)^{\alpha}\,;\,W\in\mathcal{S},\,\tfrac{X^\alpha}{W^\alpha}\in\mathcal{B}\right\}\\
        &= \inf\left\{\pi(W)^{\alpha}\,;\,W\in\mathcal{S},\tfrac{X}{W}\in\mathcal{B}^{\frac{1}{\alpha}}\right\}\leq \inf\left\{\pi(W)^{\alpha}\,;\,W\in\mathcal{S},\tfrac{X}{W}\in\mathcal{B}\right\}\\
        &= \eta_{\mf R}(X)^{\alpha}.
    \end{align*}
\end{proof}

\begin{remark}\label{sec:discussion_assumptions_logstarshaped}
    The assumption of $\mathcal C$ being log-star-shaped is satisfied if $\mathcal C=L^p_{++}$ with $p\in[1,\infty]\cup\{0\}$.
    
    In Section~\ref{sec:marrm_logReturns}, we focus on relative acceptance sets of the form $\mathcal B=\{X\in\mathcal{C}\,;\,\rho(\log(X))\le 0\}$ with $\rho$ being a star-shaped normalized monetary risk measure. These are log-star-shaped. Specifying $\rho$ as Value-at-Risk or Expected Shortfall delivers the relative acceptance sets used, e.g., in~\cite{mcneil_2000}.
    
    The assumption 
    of $\mathcal S$ being log-star-shaped is satisfied if $\mathcal{S}$ models all possible positive payoffs in a complete continuous-time financial market model. It is also satisfied in the situation of Section~\ref{sec:marrm_continuous_time}, where we use a Black-Scholes model with trading opportunities restricted to constant portfolio processes. 
    Additionally, $\mathcal S$ being log-star-shaped also holds for the Wishart volatility market in~\cite{baeuerle_2013}, which admits the famous Heston model as special case. 
    
    Finally, if the pricing functional is given by a (worst-case) expectation with respect to a (family of) pricing density, then the property of $\pi$ holds due to Jensen's inequality.
\end{remark}

As argued in~\cite[Section 3]{Laeven1}, natural properties of RRMs rewarding diversification in a context of continuously rebalanced portfolios are logconvexity and quasi-logconvexity. Continuous portfolio rebalancing means the {\em proportion} of wealth invested in each asset remains constant, while total wealth may vary over time.
This requires reallocating wealth across available assets over time, in contrast to buy-and-hold strategies, which are more straightforward in conjunction with monetary risk measures. The following proposition collects sufficient conditions to obtain (quasi-)logconvexity for MARRMs. 

\begin{proposition}\label{prop:quasiLogconvex}
    Assume that each set $\mathcal D\in\{\mathcal{C},\mathcal{B},\mathcal{S}\}$ is logconvex, i.e., 
    $$X,Y\in\mathcal D\text{ and }\alpha\in(0,1)\quad\implies\quad X^\alpha Y^{1-\alpha}\in\mathcal D.$$
    \begin{enumerate}[(i)] 
        \item Suppose that $\pi$ is quasi-logconvex, i.e., 
    $$\pi(X^\alpha Y^{1-\alpha})\le\max\{\pi(X),\pi(Y)\},\quad X,Y\in\mathcal C,\,\alpha\in(0,1).$$
    Then, $\eta_{\mf R}$ is also quasi-logconvex.
        \item Suppose that $\pi$ is logconvex, i.e., 
    $$\pi(X^\alpha Y^{1-\alpha})\le\pi(X)^{\alpha}\pi(Y)^{1-\alpha},\quad X,Y\in\mathcal C,\,\alpha\in(0,1).$$
    Then, $\eta_{\mf R}$ is also logconvex.
    \end{enumerate}
\end{proposition}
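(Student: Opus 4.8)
The plan is to prove both parts at once with a single near-minimizer argument, since they differ only in the final estimate applied to the pricing map. Fix $X,Y\in\mathcal C$ and $\alpha\in(0,1)$. Logconvexity of $\mathcal C$ guarantees $X^\alpha Y^{1-\alpha}\in\mathcal C$, so $\eta_{\mf R}(X^\alpha Y^{1-\alpha})$ is well defined. In both parts it suffices to treat the case $\eta_{\mf R}(X)<\infty$ and $\eta_{\mf R}(Y)<\infty$, since otherwise the right-hand side equals $+\infty$ and the claimed inequality is trivial (using the convention that a geometric mean with an infinite factor is $+\infty$ in part (ii)).

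The core of the argument is the following construction. Let $\eps>0$. By definition of the infimum there exist $Z_X,Z_Y\in\mathcal S$ with $\tfrac{X}{Z_X}\in\mathcal B$, $\tfrac{Y}{Z_Y}\in\mathcal B$, and $\pi(Z_X)\le\eta_{\mf R}(X)+\eps$, $\pi(Z_Y)\le\eta_{\mf R}(Y)+\eps$. Set $Z:=Z_X^\alpha Z_Y^{1-\alpha}$. Logconvexity of $\mathcal S$ yields $Z\in\mathcal S$, and the key algebraic identity
\[
\frac{X^\alpha Y^{1-\alpha}}{Z}=\left(\frac{X}{Z_X}\right)^\alpha\left(\frac{Y}{Z_Y}\right)^{1-\alpha},
\]
combined with logconvexity of $\mathcal B$, shows $\tfrac{X^\alpha Y^{1-\alpha}}{Z}\in\mathcal B$. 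Hence $Z$ is feasible in the infimum defining $\eta_{\mf R}(X^\alpha Y^{1-\alpha})$, so that $\eta_{\mf R}(X^\alpha Y^{1-\alpha})\le\pi(Z)=\pi(Z_X^\alpha Z_Y^{1-\alpha})$.

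It then remains to estimate $\pi(Z_X^\alpha Z_Y^{1-\alpha})$ and let $\eps\downarrow0$. For part (i), quasi-logconvexity of $\pi$ gives $\pi(Z_X^\alpha Z_Y^{1-\alpha})\le\max\{\pi(Z_X),\pi(Z_Y)\}\le\max\{\eta_{\mf R}(X),\eta_{\mf R}(Y)\}+\eps$, and letting $\eps\downarrow0$ proves quasi-logconvexity of $\eta_{\mf R}$. For part (ii), logconvexity of $\pi$ gives $\pi(Z_X^\alpha Z_Y^{1-\alpha})\le\pi(Z_X)^\alpha\pi(Z_Y)^{1-\alpha}\le(\eta_{\mf R}(X)+\eps)^\alpha(\eta_{\mf R}(Y)+\eps)^{1-\alpha}$, and letting $\eps\downarrow0$ proves logconvexity of $\eta_{\mf R}$.

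I expect no genuine obstacle here: the only point requiring care is verifying that $Z=Z_X^\alpha Z_Y^{1-\alpha}$ lands in $\mathcal S$ and that the relative loss stays in $\mathcal B$, both of which reduce to the logconvexity hypotheses on these two sets together with the multiplicative identity displayed above. This is precisely the multiplicative analogue of how convexity of the acceptance set and of the eligible-asset set drives convexity (resp. quasi-convexity) of additive inf-convolutions; the edge cases with infinite values and the passage $\eps\downarrow0$ are routine.
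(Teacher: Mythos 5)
Your proof is correct and takes essentially the same approach as the paper: the paper's proof is declared ``analogous to Proposition~\ref{prop:log-star-shaped}'', which when spelled out means restricting the infimum defining $\eta_{\mf R}(X^\alpha Y^{1-\alpha})$ to securities of the geometric-mean form $Z_X^\alpha Z_Y^{1-\alpha}$ and invoking logconvexity of $\mathcal S$ and $\mathcal B$ together with the assumed (quasi-)logconvexity of $\pi$ --- exactly your construction, merely phrased as a chain of infimum inequalities rather than via $\eps$-near-minimizers. The difference between the two phrasings is purely cosmetic, and your handling of the infinite-value edge cases is a reasonable convention the paper leaves implicit.
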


\begin{proof}
    The proof is analogous to the corresponding one of Proposition~\ref{prop:log-star-shaped}.
\end{proof}

\subsection{Finiteness and relative acceptability arbitrage}\label{sec:finitness}

Now, we discuss situations in which the MARRM does not attain $\infty$. 
In particular, this is the case in the studies~\cite{Return,Laeven1}.

\begin{lemma}\label{lem:finiteness}
Let $\mf R$ be a return risk measurement regime. Then $\eta_{\mf R}$ is finite-valued if and only if 
$$\mathcal C\subset\mathcal S\cdot\mathcal B.$$
If $\mathcal{C}= L^{\infty}_{++}$, then the latter condition is satisfied if $\mathcal S$ contains all positive multiples of an element in $\interior(L^\infty_{+}) = \{X\in L^{\infty}\,;\,\exists\,\eps>0: X\geq \eps\text{ a.s.}\}$ and $1\in\mathcal{B}$.
\end{lemma}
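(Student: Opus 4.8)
The plan is to reduce the finiteness of $\eta_{\mf R}$ at a single point to the nonemptiness of the feasible set in its defining infimum, and then to rephrase this feasibility condition set-theoretically as membership in $\mathcal S\cdot\mathcal B$. First I would note that, since $\pi$ takes values in $(0,\infty)$, every admissible $Z$ contributes a \emph{finite} positive number $\pi(Z)$ to the infimum. Hence, using the convention $\inf\emptyset=+\infty$, we have $\eta_{\mf R}(X)<\infty$ if and only if the feasible set $\{Z\in\mathcal S\,;\,\tfrac XZ\in\mathcal B\}$ is nonempty.

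The second step is to observe that this feasible set is nonempty precisely when $X\in\mathcal S\cdot\mathcal B$. Indeed, the existence of some $Z\in\mathcal S$ with $\tfrac XZ\in\mathcal B$ is equivalent to a factorisation $X=Z\cdot B$ with $Z\in\mathcal S$ and $B:=\tfrac XZ\in\mathcal B$, which is exactly the definition of the product set $\mathcal S\cdot\mathcal B$. Combining the two steps gives $\eta_{\mf R}(X)<\infty\iff X\in\mathcal S\cdot\mathcal B$ for each fixed $X\in\mathcal C$, and quantifying over all $X\in\mathcal C$ yields the claimed equivalence $\mathcal C\subset\mathcal S\cdot\mathcal B$.

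For the sufficient condition when $\mathcal C=L^\infty_{++}$, I would fix $X\in L^\infty_{++}$ and let $S_0\in\interior(L^\infty_+)$ be the element whose positive multiples all lie in $\mathcal S$; by definition of the interior there is $\eps>0$ with $S_0\ge\eps$ a.s. Since $X\in L^\infty_{++}$ is bounded above, say $X\le M$ a.s., the choice $\lambda:=M/\eps$ gives $Z:=\lambda S_0\in\mathcal S$ with $Z\ge M\ge X$, hence $\tfrac XZ\le 1$ a.s. By Assumption~\ref{assum:comp} we have $\tfrac XZ\in\mathcal K$; since $\tfrac XZ\le 1$, $1\in\mathcal B$, and $\tfrac XZ\in\mathcal K$, the monotonicity of $\mathcal B$ forces $\tfrac XZ\in\mathcal B$. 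Thus $X=Z\cdot\tfrac XZ\in\mathcal S\cdot\mathcal B$, and as $X$ was arbitrary, $\mathcal C\subset\mathcal S\cdot\mathcal B$.

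I do not expect a serious obstacle here; the argument is essentially bookkeeping with the definitions. The only two points requiring care are: (i) the reduction of finiteness to nonemptiness of the feasible set, which genuinely uses $\pi>0$ so that no feasible objective value is itself infinite and $\inf\emptyset=+\infty$ is the sole source of $+\infty$; and (ii) ensuring that the candidate relative loss $\tfrac XZ$ lands in $\mathcal K$—supplied by the compatibility assumption—so that the monotonicity of $\mathcal B$, which is formulated relative to $\mathcal K$, can legitimately be invoked together with $1\in\mathcal B$.
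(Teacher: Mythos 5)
Your proof is correct and follows essentially the same route as the paper: reduce finiteness to the existence of a factorisation $X=Z\cdot B$ with $Z\in\mathcal S$, $B\in\mathcal B$, and for $\mathcal C=L^\infty_{++}$ dominate $X$ by a positive multiple of an interior element and invoke $1\in\mathcal B$ together with the monotonicity of $\mathcal B$. Your explicit appeal to Assumption~\ref{assum:comp} to guarantee $\tfrac XZ\in\mathcal K$ before using monotonicity is a point the paper leaves implicit, but the argument is the same.
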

\begin{proof}Clearly, $\eta_{\mf R}(X)<\infty$ if and only if one can find $Z\in\mathcal S$ and $Y\in\mathcal B$ such that $X=Z\cdot Y$. Now assume that $\mathcal C=L^\infty_{++}$ and assume that $X\in\mathcal C$ is arbitary. For suitable $U\in \interior(L^\infty_{+})$ and $\lambda>0$, we have that $X\leq \lambda U$. By monotonicity of $\mathcal{B}$ and $1\in\mathcal{B}$, $\tfrac{X}{\lambda U}\in\mathcal{B}$. As $\lambda U\in\mathcal S$ by assumption, this means $\eta_{\mf R}(X)\le \pi(\lambda U)<\infty$. 
\end{proof}

The condition $\mathcal{C}\subset \mathcal{S}\cdot\mathcal{B}$ means that any loss in $\mathcal{C}$ can be made relatively acceptable by buying an appropriate market portfolio in $\mathcal{S}$.

If $\eta_{\mf R}(X)=0$, then there exists a sequence $\{Z_n\}\subset \mathcal{S}$ satisfying $\lim_{n\to\infty}\pi(Z_n)=0$ such that $\frac{X}{Z_n}\in \mathcal{B}$ holds for all $n\in\N$.
This means that market wealth relative to which the position $X$ is acceptable is available in the market at arbitrarily small initial investment. 
Hedging at arbitrarily small cost is known in the context of classical risk measures, as ``acceptable arbitrage'' opportunity, see e.g.,~\cite{artzner_risk_measures_2009}. This suggests that we call an $X$ with $\eta_{\mf R}(X)=0$ a ``\textit{relative acceptability arbitrage}'' opportunity. 
The next goal is to find conditions to avoid relative acceptability arbitrage opportunities. 
We start this discussion with the situation of positions in $L^{\infty}$. 
\begin{lemma}
    Let $\mathcal{S}=(0,\infty)$ and $\mathcal{C}=L^{\infty}_{++}$. For all $X\in\interior(L^{\infty}_+)$, it holds that $\eta_{\mf R}(X)>0$.
\end{lemma}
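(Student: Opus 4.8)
The plan is to argue by contradiction, exploiting the fact that elements of $\interior(L^\infty_+)$ are bounded \emph{away from zero}, which is precisely the feature that rules out relative acceptability arbitrage. In this single-asset setting the pricing map is the identity, $\pi(x)=x$ on $\mathcal S=(0,\infty)$, and compatibility together with Remark~\ref{rem:compatibilityCondition} gives $\mathcal K=\mathcal C=L^\infty_{++}$. Suppose, for contradiction, that $\eta_{\mf R}(X)=0$ for some $X\in\interior(L^\infty_+)$. By the characterization of relative acceptability arbitrage recalled just before the statement, there is a sequence $\{Z_n\}\subset\mathcal S=(0,\infty)$ with $\pi(Z_n)=Z_n\to 0$ and $\tfrac{X}{Z_n}\in\mathcal B$ for all $n\in\N$.

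Next I would quantify how large the relative losses $\tfrac{X}{Z_n}$ become. Since $X\in\interior(L^\infty_+)$, there exists $\eps>0$ with $X\ge\eps$ a.s., whence $\tfrac{X}{Z_n}\ge\tfrac{\eps}{Z_n}$ a.s., and the right-hand side tends to $+\infty$ as $Z_n\downarrow 0$. Because $\mathcal B$ is a \emph{proper} subset of $\mathcal K=L^\infty_{++}$, I can fix an element $Y_0\in\mathcal K\setminus\mathcal B$; being a member of $L^\infty_{++}$, it is bounded, say $Y_0\le M_0$ a.s. for some $M_0>0$. Choosing $n$ large enough that $\tfrac{\eps}{Z_n}\ge M_0$, I obtain $Y_0\le M_0\le\tfrac{\eps}{Z_n}\le\tfrac{X}{Z_n}$ a.s. Since $\tfrac{X}{Z_n}\in\mathcal B$, $Y_0\in\mathcal K$, and $Y_0\le\tfrac{X}{Z_n}$, the monotonicity of the relative acceptance set $\mathcal B$ forces $Y_0\in\mathcal B$, contradicting the choice of $Y_0$. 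Hence $\eta_{\mf R}(X)>0$.

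I do not expect a genuine obstacle here; the argument is short once the right mechanism is identified. The only point requiring care is the conceptual one: recognizing that the conclusion hinges entirely on $X$ being bounded away from zero (membership in $\interior(L^\infty_+)$ rather than merely in $L^\infty_{++}$), so that the relative losses $\tfrac{X}{Z_n}$ blow up \emph{uniformly} and eventually dominate a fixed non-acceptable element. Without the uniform lower bound on $X$ one could only conclude pointwise divergence, which would not suffice to dominate an arbitrary bounded $Y_0$ in the a.s.\ order; it is exactly the interior assumption that converts the monotonicity and properness of $\mathcal B$ into the desired strict positivity of $\eta_{\mf R}(X)$.
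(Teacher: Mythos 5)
Your proof is correct and uses essentially the same argument as the paper's: both fix a non-acceptable $Y\in\mathcal K\setminus\mathcal B$ (which is bounded above since $\mathcal K=L^\infty_{++}$), use the uniform lower bound $X\ge\eps$ to force $\tfrac{X}{\lambda}\ge Y$ once $\lambda$ is small, and conclude via monotonicity of $\mathcal B$. The only difference is cosmetic: the paper argues directly that $\eta_{\mf R}(X)\ge\lambda^*>0$ for an explicit threshold $\lambda^*<\eps$, whereas you run the same mechanism as a contradiction along a sequence $Z_n\to 0$.
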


\begin{proof}
    By $\mathcal{B}$ being a nonempty proper subset of $\mathcal{K}$, there exists $Y\in\mathcal{K}\backslash\mathcal{B}$. By $X\in\interior(L^{\infty}_+)$, we know that there exists $\eps>0$ such that $X\geq \eps$ almost surely. Since $Y$ is bounded from above, we can find $0<\lambda^*<\eps$ such that for all $0<\lambda < \lambda^*$ it holds that $\frac{X}{\lambda}\geq \frac{\eps}{\lambda}\geq Y\notin\mathcal{B}$. This shows that $\eta_{\mf R}(X)\geq \lambda^*>0$.
\end{proof}

The next example shows that $X\notin\interior(L^{\infty}_+)$ can lead to $\eta_{\mf R}(X)=0$. 

\begin{example}\label{exam:marrmIsZero}
    Assume that the underlying probability space is atomless and consider the following  relative acceptance set:
    \begin{align*}
        \mathcal{B} = \{Y\in L^{\infty}_{++}\,;\, P(Y\leq 1)>0\}.
    \end{align*}
    It is easy to check that $\mathcal B$ is monotone, that $1\in\mathcal{B}$, and that $2\notin\mathcal{B}$. 
    Now, if $X$ is distributed uniformly over $(0,1)$ then for all $m\in(0,1)$ we have $P\left(\frac{X}{m}\leq 1\right)>0$, and hence $\frac{X}{m}\in\mathcal{B}$. 
    For $\mathcal{S} = (0,\infty)$ this implies that $\eta_{\mf R}(X)=0$. 
\end{example}

The previous example shows that even in the case of $\mathcal{C}=L^{\infty}_{++}$ we need additional assumptions on $\mathcal{S}$, $\pi$ and $\mathcal{B}$ to obtain $\eta_{\mathcal{R}}>0$. 
We suggest a combination of two assumptions which are sufficient to guarantee positivity of $\eta_{\mf R}$. 
The first, Assumption~\ref{assump:nflvr}, is motivated by the \textit{no free lunch with vanishing risk (NFLVR)} condition and concerns the financial market given by $\pi$ and $\mathcal{S}$. The second, Assumption~\ref{assump:relativeAcceptanceSet}, is our conclusion from  Example~\ref{exam:marrmIsZero} and concerns only the preferences of an agent given by $\mathcal{B}$.  

\begin{assumption}\label{assump:nflvr}
    For each sequence $\{Z_n\}\subset \mathcal{S}$ with $\pi(Z_n)\to 0$, $n\to\infty$, we have 
    $$P(0< Z_n)\rightarrow 0,\quad n\to\infty.$$
\end{assumption}

\begin{assumption}\label{assump:relativeAcceptanceSet}
    For each sequence $\{K_n\}\subset\mathcal{K}$ for which there exists $Y\in\mathcal{K}\backslash\mathcal{B}$ such that  $P(K_n> Y)\rightarrow 1$, $n\to\infty$, it follows that $K_n\notin \mathcal{B}$ for all but finitely many $n$.
\end{assumption}

\begin{remark}
    Assumption~\ref{assump:nflvr} on $\pi$ and $\mathcal{S}$ is satisfied if the underlying market model satisfies the NFLVR condition, which is essential for continuous-time financial market models, see~\cite{delbaen1994,delbaen1998,kreps_1981}. It guarantees that for every sequence of admissible payoffs $\{W_n\}$ at zero initial costs, it holds $\lVert\min\{W_n,0\}\rVert_{L^{\infty}}\rightarrow 0$ and $\lim\limits_{n\rightarrow\infty}P(W_n>0)=0$. This then implies $P(0< Z_n)\rightarrow 0$ in the situation of Assumption~\ref{assump:nflvr}. 

    Assumption~\ref{assump:relativeAcceptanceSet} says that loss fractions which are larger than an unacceptable loss fraction with probability close to one are unacceptable themselves. This prevents situations as in Example~\ref{exam:marrmIsZero}. 
\end{remark}

Next, under Assumption~\ref{assump:nflvr} we obtain strict positivity of the MARRM, if we exclude situations as in Example~\ref{exam:marrmIsZero} by Assumption~\ref{assump:relativeAcceptanceSet}.
\begin{theorem}\label{thm:noArbitrage_MARRM}
    If both Assumptions~\ref{assump:nflvr} and~\ref{assump:relativeAcceptanceSet} hold, then $\eta_{\mf R}>0$.
\end{theorem}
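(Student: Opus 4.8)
The plan is to argue by contradiction, letting the two assumptions play complementary roles: Assumption~\ref{assump:nflvr} controls the financial market by forcing any sequence of ever-cheaper security payoffs to vanish in probability, while Assumption~\ref{assump:relativeAcceptanceSet} then rules out that the associated relative losses can stay acceptable. Concretely, I would fix $X\in\mathcal C$ and suppose, for contradiction, that $\eta_{\mf R}(X)=0$. Since the prices are strictly positive, an infimal value of $0$ means the feasible set is nonempty, so there exists a sequence $\{Z_n\}\subset\mathcal S$ with $\pi(Z_n)\to 0$ and $\frac{X}{Z_n}\in\mathcal B$ for every $n$. Because $\mathcal B$ is a proper subset of $\mathcal K$, I may fix an unacceptable reference $Y\in\mathcal K\setminus\mathcal B$, and by the compatibility Assumption~\ref{assum:comp} each relative loss $K_n:=\frac{X}{Z_n}$ lies in $\frac{\mathcal C}{\mathcal S}\subset\mathcal K$, so that $\{K_n\}$ is an admissible input for Assumption~\ref{assump:relativeAcceptanceSet}.

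Next I would read off from Assumption~\ref{assump:nflvr} that the cheap payoffs become negligible, i.e.\ $P(Z_n>\delta)\to 0$ for every $\delta>0$, so that $Z_n\to 0$ in probability. The decisive step is then to convert this into the hypothesis needed to invoke Assumption~\ref{assump:relativeAcceptanceSet}, namely $P(K_n>Y)\to 1$. Rewriting the event, $\{K_n>Y\}=\{Z_n<\tfrac{X}{Y}\}$ since $Y>0$ almost surely, so it suffices to show $P(Z_n\ge \tfrac{X}{Y})\to 0$. I expect this to be the main obstacle, because the comparison threshold $\frac{X}{Y}$ is itself a strictly positive random variable rather than a deterministic constant, so convergence in probability of $Z_n$ cannot be applied verbatim.

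To overcome this I would truncate on the size of the threshold: for any fixed $\delta>0$,
\[
    P\Big(Z_n\ge \tfrac{X}{Y}\Big)\le P\big(Z_n\ge \delta\big)+P\Big(\tfrac{X}{Y}\le \delta\Big).
\]
The first term vanishes as $n\to\infty$ by the convergence just established, giving $\limsup_{n}P(Z_n\ge \tfrac{X}{Y})\le P(\tfrac{X}{Y}\le\delta)$. Letting $\delta\downarrow 0$ and using that $\frac{X}{Y}>0$ almost surely, so that the events $\{\tfrac{X}{Y}\le\delta\}$ decrease to a null set, forces $P(Z_n\ge \tfrac{X}{Y})\to 0$ and hence $P(K_n>Y)\to 1$.

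Finally, Assumption~\ref{assump:relativeAcceptanceSet} applied to $\{K_n\}$ with the reference point $Y$ yields $K_n\notin\mathcal B$ for all but finitely many $n$, contradicting $\frac{X}{Z_n}=K_n\in\mathcal B$ for every $n$. Hence $\eta_{\mf R}(X)>0$, and as $X\in\mathcal C$ was arbitrary, $\eta_{\mf R}>0$. The only genuinely technical point in this scheme is the random-threshold estimate of the third paragraph; the remaining steps are bookkeeping that chains together the extraction of the minimizing sequence, compatibility, and the two assumptions.
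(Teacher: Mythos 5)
Your proposal is correct and follows essentially the same route as the paper's proof: assume $\eta_{\mf R}(X)=0$ for contradiction, extract a sequence $\{Z_n\}\subset\mathcal S$ with $\pi(Z_n)\to 0$ and $\tfrac{X}{Z_n}\in\mathcal B$, fix an unacceptable reference $Y\in\mathcal K\setminus\mathcal B$, show $P\big(\tfrac{X}{Z_n}>Y\big)\to 1$, and invoke Assumption~\ref{assump:relativeAcceptanceSet} to get a contradiction. The only difference is that you justify $P\big(\tfrac{X}{Y}\le Z_n\big)\to 0$ by a truncation argument over deterministic thresholds $\delta>0$ (needed when reading Assumption~\ref{assump:nflvr} as convergence of $Z_n$ to zero in probability), whereas the paper deduces this in one line, since its literal hypothesis $P(0<Z_n)\to 0$ makes the claim immediate from $\tfrac{X}{Y}>0$ a.s.
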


\begin{proof}
    Towards a contradiction, assume that there exists $X\in\mathcal{C}$ such that $\eta_{\mf R}(X) = 0$. Then, there exists a sequence $\{Z_n\}\subset \mathcal{S}$ with $\pi(Z_n)\rightarrow 0$ and $\frac{X}{Z_n}\in\mathcal{B}$ for all $n$. 

    By the non-triviality of $\mathcal{B}$, there exists $Y\in\mathcal{K}\backslash\mathcal{B}$. Further, by Assumption~\ref{assump:nflvr}, we obtain 
    \begin{align*}
        P\big(\tfrac{X}{Z_n}>Y\big) = 1 - P\big(\tfrac{X}{Z_n}\leq Y\big) = 1 - P\big(\tfrac{X}{Y}\leq Z_n\big)\rightarrow 1.
    \end{align*}
    Hence, by Assumption~\ref{assump:relativeAcceptanceSet}, we obtain that $\frac{X}{Z_n}\notin \mathcal{B}$ for all but finitely many $n$, a contradiction. 
\end{proof}

\section{Representation of MARRMs through multi-asset risk measures}\label{sec:marrm_logReturns}

We have already recalled from \cite{Return} that there is a close intrinsic connection between return and monetary risk measures.
For MARRMs, the counterpart thereof is a natural connection to multi-asset risk measures, which is the topic of the present section.

For our analysis we have to focus on relative acceptance sets defined via return risk measures as in~\cite{Return}. 
Return risk measures are positively homogeneous by definition. 
This requires  to make the following assumption throughout the present section:

\begin{assumption}
    $\mathcal{K}$ is a cone.
\end{assumption}

Now we recall the definition of a RRM from~\cite{Return}, but adjust the domain of a RRM.

\begin{definition}\label{def:RRM}
    If $L^{\infty}_{++}\subset\mathcal{K}$, then we call a positively homogeneous 
    and nondecreasing map $\tilde{\rho}\colon\mathcal{K}\rightarrow(0,\infty)$ with $\tilde{\rho}(1) = 1$ a  return risk measure (RRM). The relative acceptance set for a RRM $\tilde{\rho}$ is defined by
    $\mathcal{B}_{\tilde{\rho}}:=\{K\in\mathcal{K}\,;\,\tilde{\rho}(K)\leq 1\}.$
\end{definition}

\begin{remark}
\begin{enumerate}[(a)]
    \item Again, we assume that the domain of a RRM is a set of relative losses, i.e., dimensionless quantities.
    \item 
    We can write any RRM $\tilde{\rho}$ as an MARRM, by using $\mathcal{K}=L^{\infty}_{++}$, $\mathcal{B}=\{X\in\mathcal{K}\,;\,\tilde{\rho}(X)\leq 1\}$, $\mathcal{S} = (0,\infty)$ and $\pi(x)=x$ for all $x\in\mathcal{S}$. In particular, it holds that $\mathcal{C}=\mathcal{K}$.
    \end{enumerate}
\end{remark}
	
Now, we develop the desired connection between MARRMs and so-called multi-asset risk measures, thereby generalising the intimate connection between RRMs and monetary risk measures explored in~\cite{Return}.
We start with a MARRM and use it to derive a known risk measure formulation. To do so, we introduce sets containing logarithmic transformations of the random variables used in the definition of a MARRM. These sets are the following:
\begin{enumerate}[(1)]
    \item Logarithmic model set: $\mathcal{C}_{\log}:=\left\{\log(X)\,;\,X\in\mathcal{C}\right\}$;
    \item Set of log-securities: $\mathcal{S}_{\log} := \{\log(Z)\,;\, Z\in\mathcal{S}\}$;
    \item Set of logarithmic relative losses: $\mathcal{K}_{\log}:=\{\log(K)\,;\,K\in\mathcal{K}\}$.
\end{enumerate}

\begin{remark}
    Note that compatibility is equivalent to the condition that the Minkowski sum $\mathcal{C}_{\log}+\mathcal{S}_{\log}$ is a subset of $\mathcal{K}_{\log}$.
\end{remark}

From now on, for the rest of this section,
we use the following standing assumption:

\begin{assumption}\label{assump:group}
    Pointwise multiplication is a binary operation on $\mathcal{C}$, i.e., for all $X,Y\in\mathcal{C}$ it holds that $X Y\in\mathcal{C}$.
\end{assumption}
    
\begin{remark}
     Equivalent to Assumption~\ref{assump:group} is that the addition $+$ on $\mathcal{C}_{\log}$ is a binary relation, i.e., for all $X_{\log},Y_{\log}\in\mathcal{C}_{\log}$ it holds that $X_{\log} + Y_{\log}\in\mathcal{C}_{\log}$.
     Assumption~\ref{assump:group} ensures that we can analyze products of random variables as input for the MARRM. In doing so, a natural property for MARRMs is being submultiplicative, see Lemma~\ref{lem:connectionMARRMandMARM} and Remark~\ref{rem:here}. 
\end{remark}

Now, we recall the well-known definition of a MARM, see e.g.,~\cite{frittelli_2006,scandolo_2004}.

\begin{definition}
    Assume sets $\mathcal{C}^*,\mathcal{S}^*,\mathcal{K}^*\subset L^{0}$, $\mathcal{A}\subset\mathcal{K}^*$ and a map $\pi^*\colon\mathcal{S}^*\rightarrow\mathbb{R}$ are given. Further, assume that $\mathcal{A}$ is a nonempty proper subset of $\mathcal{K}^{*}$ (non-triviality) such that for all $X\in\mathcal{A}$, $Y\in\mathcal{K}^*$ with $Y\leq X$ it holds that $Y\in\mathcal{A}$ (monotonicity). The map $\rho_{\mathcal{A},\mathcal{S}^*,\pi^*}\colon\mathcal{C}^*\rightarrow[-\infty,\infty]$ defined by 
    \[\rho_{\mathcal{A},\mathcal{S}^{\ast},\pi^{\ast}}(X):=\inf\left\{\pi^{\ast}(Z^{\ast})\,;\,Z^{\ast}\in\mathcal{S}^{\ast},X-Z^{\ast}\in\mathcal{A}\right\}\]
    is called a multi-asset risk measure (MARM).
\end{definition}

\begin{remark}
    A MARM gives us the minimal costs of a hedging payoff that has to be added to the loss $X$ to obtain an acceptable position, i.e.,~an element of the set $\mathcal{A}$. So, in contrast to a MARRM, we add a position to the loss instead of dividing by it.
\end{remark}

Our next goal is to represent a MARRM via a MARM in the case in which the relative acceptance set is defined via a RRM.

\begin{proposition}\label{prop:MARRMasMARM}
    Let  $\tilde{\rho}\colon\mathcal{K}\rightarrow(0,\infty)$ be a RRM and
    $\mf R=(\mathcal{B}_{\tilde{\rho}}, \mathcal{S}, \pi)$
    be a return risk measurement regime. 
    Set $\nu=\log\circ\tilde{\rho}\circ \exp$, $\mathcal{A}_{\nu}=\{X_{\log}\in\mathcal{K}_{\log}\,;\,\nu(X_{\log})\leq 0\}$, and $\pi_{\log}=\log\circ\pi\circ \exp$ defined on $\mathcal S_{\log}$.
    Then $\eta_{\mf R}$ can be represented via the MARM $\rho_{\mathcal{A}_{\nu},\mathcal{S}_{\log},\pi_{\log}}$ as 
    \[\eta_{\mf R} = \exp\circ \rho_{\mathcal{A}_{\nu},\mathcal{S}_{\log},\pi_{\log}}\circ\log.\]
\end{proposition}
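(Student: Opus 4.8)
The plan is to unfold both sides of the claimed identity and show they agree pointwise, exploiting that $\log$ and $\exp$ are mutually inverse strictly increasing bijections between $(0,\infty)$ and $\mathbb{R}$. Fix $X\in\mathcal{C}$ and abbreviate $X_{\log}:=\log(X)\in\mathcal{C}_{\log}$. By definition of the MARM,
\[
\rho_{\mathcal{A}_{\nu},\mathcal{S}_{\log},\pi_{\log}}(X_{\log})=\inf\left\{\pi_{\log}(Z_{\log})\,;\,Z_{\log}\in\mathcal{S}_{\log},\,X_{\log}-Z_{\log}\in\mathcal{A}_{\nu}\right\}.
\]
Since $Z\mapsto\log(Z)$ is a bijection from $\mathcal{S}$ onto $\mathcal{S}_{\log}$, I would reparametrize the infimum by writing $Z_{\log}=\log(Z)$ with $Z\in\mathcal{S}$; then $\pi_{\log}(Z_{\log})=\log(\pi(\exp(\log Z)))=\log(\pi(Z))$ directly from the definition $\pi_{\log}=\log\circ\pi\circ\exp$.

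The crucial step is to translate the acceptance constraint. Using $X_{\log}-\log(Z)=\log(X/Z)$ together with the definitions of $\mathcal{A}_{\nu}$ and $\nu=\log\circ\tilde\rho\circ\exp$, I would observe
\[
X_{\log}-\log(Z)\in\mathcal{A}_{\nu}\iff\nu\bigl(\log(\tfrac{X}{Z})\bigr)\le 0\iff\log\bigl(\tilde\rho(\tfrac{X}{Z})\bigr)\le 0\iff\tilde\rho(\tfrac{X}{Z})\le 1\iff\tfrac{X}{Z}\in\mathcal{B}_{\tilde\rho},
\]
where the middle equivalences use $\exp(\log(X/Z))=X/Z$ and that $\log$ is increasing with $\log(1)=0$. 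Consequently the feasible set for the reparametrized infimum coincides exactly with $\{Z\in\mathcal{S}\,;\,X/Z\in\mathcal{B}_{\tilde\rho}\}$, the feasible set defining $\eta_{\mf R}(X)$.

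Combining these two reductions yields
\[
\rho_{\mathcal{A}_{\nu},\mathcal{S}_{\log},\pi_{\log}}(\log(X))=\inf\left\{\log(\pi(Z))\,;\,Z\in\mathcal{S},\,\tfrac{X}{Z}\in\mathcal{B}_{\tilde\rho}\right\}.
\]
To finish, I would push $\log$ through the infimum: since $\log$ is continuous and strictly increasing on $(0,\infty)$, it commutes with infima, so the right-hand side equals $\log\bigl(\inf\{\pi(Z)\,;\,Z\in\mathcal{S},\,X/Z\in\mathcal{B}_{\tilde\rho}\}\bigr)=\log(\eta_{\mf R}(X))$. Applying $\exp$ then gives $\exp(\rho_{\mathcal{A}_{\nu},\mathcal{S}_{\log},\pi_{\log}}(\log(X)))=\eta_{\mf R}(X)$, the asserted identity.

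The only genuine obstacle is bookkeeping at the boundary of the extended reals: the degenerate case where the feasible set is empty (giving $\eta_{\mf R}(X)=+\infty$ and $\rho=+\infty$, consistent via $\exp(+\infty)=+\infty$) and the case where the infimum of prices is $0$ (giving $\log(0)=-\infty$, consistent via $\exp(-\infty)=0$). I would verify that the conventions $\inf\emptyset=+\infty$, $\log(0)=-\infty$, and $\exp(-\infty)=0$ keep the commutation of $\log$ with the infimum valid in these limiting regimes, so that the identity holds throughout $[0,\infty]$.
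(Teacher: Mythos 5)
Your proof is correct and follows essentially the same route as the paper: both translate the acceptability constraint via $X_{\log}-\log(Z)\in\mathcal{A}_{\nu}\iff \tfrac{X}{Z}\in\mathcal{B}_{\tilde\rho}$, reparametrize the infimum through the bijection $\mathcal{S}\leftrightarrow\mathcal{S}_{\log}$, and commute $\exp$/$\log$ with the infimum (the paper unfolds $\eta_{\mf R}(X)$ and pulls $\exp$ out, whereas you unfold the MARM and push $\log$ in, which is the same computation read in reverse). Your explicit check of the boundary conventions $\inf\emptyset=+\infty$, $\log(0)=-\infty$, $\exp(-\infty)=0$ is a welcome addition that the paper leaves implicit.
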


\begin{proof}
    For $X\in\mathcal{C}$, we obtain
    \begin{align*}
        \eta_{\mf R}(X)&=\inf\left\{\pi(Z)\,;\,Z\in\mathcal{S},\,\tfrac{X}{Z}\in\mathcal{B}_{\tilde{\rho}}\right\}=\inf\left\{\pi(Z)\,;\,Z\in\mathcal{S},\,\tilde{\rho}\big(\tfrac{X}{Z}\big)\leq 1\right\}\\
        &=\exp\left(\inf\left\{(\log\circ\pi\circ\exp)(\log(Z))\,;\,Z\in\mathcal{S},\,(\log\circ \tilde{\rho}\circ\exp)\big(\log(X)-\log(Z)\big)\leq 0\right\}\right)\\
        &=\exp\left(\inf\left\{\pi_{\log}(L)\,;\,L\in\mathcal{S}_{\log},\,\nu(\log(X)-L)\leq 0\right\}\right)\\
        &=\exp\left(\inf\left\{\pi_{\log}(L)\,;\,L\in\mathcal{S}_{\log},\,\log(X)-L\in\mathcal{A}_{\nu}\right\}\right)\\
        &=\exp\left(\rho_{\mathcal{A}_{\nu},\mathcal{S}_{\log},\pi_{\log}}(\log(X))\right).
    \end{align*}
\end{proof}

\begin{remark}
    Note that $\tilde\rho(1)=1$ implies $0\in\mathcal{A}_{\nu}$ and that, additionally, $\mathcal{A}_{\nu}$ is monotone, i.e., $\mathcal{A}_{\nu}-(\mathcal{K}_{\log})_{+}\subset\mathcal{A}_{\nu}$ with $(\mathcal{K}_{\log})_{+}=\{X_{\log}\in\mathcal{K}_{\log}\,;\, X_{\log}>0\text{ a.s}\}$. In particular, $\eta_{\mf R}$ and $\rho_{\mathcal{A}_{\nu},\mathcal{S}_{\log},\pi_{\log}}$ are increasing.
\end{remark}

Proposition~\ref{prop:MARRMasMARM} reformulates the MARRM as the exponential of a MARM, applied to a log-transformation. 
This representation is in line with the considerations in \cite[Section~3]{Return}. Additional care is necessary though. It is common to call $\rho_{\mathcal{A}_{\nu},\mathcal{S}_{\log},\pi_{\log}}$ a MARM only if $\mathcal{C}_{\log}$ is a linear space, $\mathcal{S}_{\log}$ is a linear subspace of $\mathcal{C}_{\log}$ and $\pi_{\log}$ is a strictly positive functional.

In order to be precise, we work until the end of this section in the setting of Proposition~\ref{prop:MARRMasMARM}. 

To state properties of the MARRM we have to characterize properties of the model ingredients. We omit the direct proofs of the next two results.
\begin{lemma}\label{lem:geometricProperties}
    Assume that $(0,\infty)\subset\mathcal{C}$. Then the following holds:
    \begin{enumerate}[(i)]
        \item $\mathcal{C}$ is a cone if and only if for all $X_{\log}\in\mathcal{C}_{\log}$ and $m\in\mathbb{R}$ it holds that $X_{\log}+m\in\mathcal{C}_{\log}$.
        \item $\mathcal{S}$ is a cone if and only if for all $X_{\log}\in\mathcal{S}_{\log}$ and $m\in\mathbb{R}$ it holds that $X_{\log}+m\in\mathcal{S}_{\log}$.
        \item If $\mathcal{S}$ is a cone, then  $\pi$ positively homogeneous if and only if $\pi_{\log}$ is cash-additive.
        \item $\mathcal{C}_{\log}$ is convex if and only if $\mathcal C$ is log-star-shaped.  
    \end{enumerate}
\end{lemma}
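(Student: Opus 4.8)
The plan is to prove the four equivalences in Lemma~\ref{lem:geometricProperties} by systematically translating each multiplicative/geometric property of the original ingredients into an additive/linear property of their logarithmic images, exploiting the bijection $\log\colon L^0_{++}\to L^0$ with inverse $\exp$ and the assumption $(0,\infty)\subset\mathcal{C}$. The overarching observation is that $\log$ turns scalar multiplication $\lambda X$ (with $\lambda>0$) into the additive shift $\log(X)+\log(\lambda)$, and since $\log$ maps $(0,\infty)$ onto all of $\mathbb{R}$, the set of admissible shifts is exactly $\mathbb{R}$. This is the engine behind all four parts.

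For part (i), I would argue that $\mathcal{C}$ being a cone means $\lambda X\in\mathcal{C}$ for all $\lambda\in(0,\infty)$ and $X\in\mathcal{C}$; applying $\log$ gives $\log(X)+c\in\mathcal{C}_{\log}$ for every $c=\log(\lambda)\in\mathbb{R}$, which is precisely $\mathcal{C}_{\log}+\mathbb{R}\subset\mathcal{C}_{\log}$. The converse runs the same computation in reverse using $\exp$. Part (ii) is identical with $\mathcal{S}$ in place of $\mathcal{C}$, and I would simply note that the proof transfers verbatim. For part (iii), assuming $\mathcal{S}$ is a cone so that $\mathcal{S}_{\log}$ is closed under real shifts, I would write out $\pi_{\log}(L+c)=\log\pi(\exp(L+c))=\log\pi(e^c\exp(L))$; positive homogeneity of $\pi$ gives $\pi(e^c\exp(L))=e^c\pi(\exp(L))$, whence $\pi_{\log}(L+c)=c+\pi_{\log}(L)$, i.e.\ cash-additivity. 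Again the reverse implication follows by exponentiating the cash-additivity identity and using that every $\lambda>0$ is of the form $e^c$.

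Part (iv) is the only genuinely different one and is where I expect the main (though still modest) obstacle to lie, since it relates convexity of $\mathcal{C}_{\log}$ to log-star-shapedness of $\mathcal{C}$ rather than to a cone-type property. Recall from the earlier definitions that $\mathcal{C}$ is log-star-shaped iff $\mathcal{C}^\alpha\subset\mathcal{C}$ for all $\alpha\in(0,1)$, i.e.\ $X^\alpha\in\mathcal{C}$ whenever $X\in\mathcal{C}$. Under $\log$, the element $X^\alpha$ becomes $\alpha\log(X)$, so log-star-shapedness says $\alpha\,\ell\in\mathcal{C}_{\log}$ for all $\ell\in\mathcal{C}_{\log}$ and $\alpha\in(0,1)$; that is, $\mathcal{C}_{\log}$ is star-shaped with respect to the origin. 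The subtlety is that ordinary star-shapedness about $0$ is weaker than convexity in general, so the equivalence must be using the hypothesis $(0,\infty)\subset\mathcal{C}$, which by part (i)'s computation forces $\mathbb{R}\subset\mathcal{C}_{\log}$ and, combined with part (i), means $\mathcal{C}_{\log}$ is invariant under adding any real constant.

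To close part (iv) I would show that a set $\mathcal{C}_{\log}\subset\mathbb{R}$-space containing $\mathbb{R}$ (the constants) and closed under real translations is convex if and only if it is star-shaped about $0$. The forward direction is trivial (convex sets containing $0$ are star-shaped about $0$). For the reverse, given $\ell_1,\ell_2\in\mathcal{C}_{\log}$ and $\alpha\in(0,1)$, I want $\alpha\ell_1+(1-\alpha)\ell_2\in\mathcal{C}_{\log}$; the plan is to use translation-invariance to reduce to star-shapedness, writing the convex combination as a shift of a suitable scalar multiple. Concretely, since log-star-shapedness only yields $\alpha\ell\in\mathcal{C}_{\log}$ through the origin, I would combine it with the constant-shift invariance from $\mathbb{R}\subset\mathcal{C}_{\log}$ (part (i)) to manufacture midpoints between arbitrary pairs; the key identity is that $\alpha\ell_1+(1-\alpha)\ell_2$ can be reached by scaling $\ell_1-\ell_2$ toward $0$ and then translating, and $\ell_1-\ell_2$ lands in a shifted copy of $\mathcal{C}_{\log}$. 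I would verify this reduction carefully, as the interplay between scaling about $0$ and translation by constants is exactly the point where a naive argument could fail, and then invoke the reverse chain to obtain log-star-shapedness of $\mathcal{C}$ from convexity of $\mathcal{C}_{\log}$.
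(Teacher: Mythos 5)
Your parts (i)--(iii) are correct and are the natural arguments: $\log$ turns the scaling $\lambda X$ into the shift $\log(X)+\log(\lambda)$, surjectivity of $\log\colon(0,\infty)\to\mathbb{R}$ makes all real shifts available, and the computation $\pi_{\log}(L+c)=\log\pi(e^c e^L)=c+\pi_{\log}(L)$ settles (iii) in both directions. (The paper omits its proof of this lemma, so there is nothing to compare against there.)

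Part (iv), however, contains a genuine gap. You correctly translate log-star-shapedness of $\mathcal{C}$ into star-shapedness of $\mathcal{C}_{\log}$ about $0$, and you correctly sense that this is weaker than convexity; but your proposed repair fails twice over. First, the constant-shift invariance $\mathcal{C}_{\log}+\mathbb{R}\subset\mathcal{C}_{\log}$ that you want to import ``from part (i)'' is, by part (i) itself, equivalent to $\mathcal{C}$ being a cone --- which is \emph{not} a hypothesis of (iv); the assumption $(0,\infty)\subset\mathcal{C}$ only gives $\mathbb{R}\subset\mathcal{C}_{\log}$, i.e.\ that constants \emph{belong} to $\mathcal{C}_{\log}$, not that one may shift arbitrary elements by constants. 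Second, and more fundamentally, even granting constant-shift invariance, it cannot upgrade star-shapedness to convexity: in your decomposition $\alpha\ell_1+(1-\alpha)\ell_2=\ell_2+\alpha(\ell_1-\ell_2)$ the final translation is by $\ell_2$, a general random variable, not a constant. A concrete counterexample: on a three-point $\Omega$, take a non-convex set $B$ that is star-shaped about $0$ inside the hyperplane $\{x_1+x_2+x_3=0\}$ (e.g.\ a union of two lines through the origin) and set $\mathcal{C}_{\log}=B+\mathbb{R}\mathbf{1}$; this set is star-shaped about $0$, contains all constants, is invariant under constant shifts, yet is not convex. What actually closes the argument is the standing Assumption~\ref{assump:group} of this section, which your proof never invokes: $(\mathcal{C}_{\log},+)$ is a group. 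With it, the reverse implication is immediate --- for $\ell_1,\ell_2\in\mathcal{C}_{\log}$ and $\alpha\in(0,1)$, star-shapedness gives $\alpha\ell_1\in\mathcal{C}_{\log}$ and $(1-\alpha)\ell_2\in\mathcal{C}_{\log}$, and closure of the group under addition gives $\alpha\ell_1+(1-\alpha)\ell_2\in\mathcal{C}_{\log}$ (equivalently, the group property justifies both $\ell_1-\ell_2\in\mathcal{C}_{\log}$ and the final translation by $\ell_2$ in your decomposition). The forward implication is fine as you state it, since $0\in\mathcal{C}_{\log}$.
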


We can use this setting to generalize Lemma 2 (b), (d)--(f) in~\cite{Return}. 
\begin{lemma}\label{lem:connectionMARRMandMARM}
    Using the notation from Proposition~\ref{prop:MARRMasMARM},
    we obtain the following results:
    \begin{enumerate}[(i)]
        \item If $\mathcal{C}$ is a cone, then
        $\eta_{\mf R}$ is positively homogeneous if and only if 
        $\rho_{\mathcal{A}_{\nu},\mathcal{S}_{\log},\pi_{\log}}$ is cash-additive.
        \item If $\mathcal{C}$ is star-shaped, then $\eta_{\mf R}$ is star-shaped 
        if and only if 
        $$\forall\,X_{\log}\in\mathcal{C}_{\log}\,\forall\,r\in(-\infty,0):\quad \rho_{\mathcal{A}_{\nu},\mathcal{S}_{\log},\pi_{\log}}(X_{\log}+r) \leq \rho_{\mathcal{A}_{\nu},\mathcal{S}_{\log},\pi_{\log}}(X_{\log})+r,$$
        i.e., $\rho_{\mathcal{A}_{\nu},\mathcal{S}_{\log},\pi_{\log}}$ is cash-superadditive.
        \item$\rho_{\mathcal{A}_{\nu},\mathcal{S}_{\log},\pi_{\log}}$ is subadditive if and only if $\eta_{\mf R}$ is submultiplicative, i.e., $$\forall X,Y\in\mathcal{C}: \eta_{\mf R}(XY)\leq\eta_{\mf R}(X)\eta_{\mf R}(Y).$$
        \end{enumerate}
    For the next results, we assume that $\mathcal{C}_{\log}$ is convex.
    \begin{enumerate}[(i)]
    \addtocounter{enumi}{3}
        \item$\rho_{\mathcal{A}_{\nu},\mathcal{S}_{\log},\pi_{\log}}$ is convex if and only if $\eta_{\mf R}$ is logconvex. 
        \item $\rho_{\mathcal{A}_{\nu},\mathcal{S}_{\log},\pi_{\log}}$ quasi-convex if and only if $\eta_{\mf R}$ is quasi-logconvex.
        \item$\rho_{\mathcal{A}_{\nu},\mathcal{S}_{\log},\pi_{\log}}$ star-shaped if and only if $\eta_{\mf R}$ is log-star-shaped.         
        \item $\rho_{\mathcal{A}_{\nu},\mathcal{S}_{\log},\pi_{\log}}$ positively homogeneous if and only if 
        $$\forall\,X\in\mathcal{C}\,\forall\,\alpha>0:\quad\eta_{\mf R}(X^{\alpha})=\left(\eta_{\mf R}(X)\right)^{\alpha}.$$
    \end{enumerate}
\end{lemma}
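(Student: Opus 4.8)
The plan is to reduce every assertion to the single identity $\eta_{\mf R}=\exp\circ\rho\circ\log$ from Proposition~\ref{prop:MARRMasMARM}, where I abbreviate $\rho:=\rho_{\mathcal{A}_{\nu},\mathcal{S}_{\log},\pi_{\log}}$. The crucial observation is that $\exp\colon\mathbb{R}\to(0,\infty)$ and $\log\colon(0,\infty)\to\mathbb{R}$ are mutually inverse, strictly increasing bijections converting the multiplicative structure on $L^0_{++}$ into the additive structure on $L^0$: namely $\log(XY)=\log X+\log Y$, $\log(X^\alpha)=\alpha\log X$, and $\log(\lambda X)=\log\lambda+\log X$. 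Consequently, the substitution $X_{\log}=\log X$, $Y_{\log}=\log Y$ turns each ``logarithmic/multiplicative'' property of $\eta_{\mf R}$ into the corresponding ``linear/additive'' property of $\rho$, and vice versa. Since $\exp$ is strictly increasing, every inequality transfers in both directions; and since $\max\{\exp a,\exp b\}=\exp\max\{a,b\}$, the same holds for the quasi-convexity statements.

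Concretely, I would verify each equivalence by a one-line computation. For (i), writing $c=\log\lambda$, positive homogeneity $\eta_{\mf R}(\lambda X)=\lambda\eta_{\mf R}(X)$ reads $\exp(\rho(\log X+c))=\exp(\rho(\log X)+c)$, which by injectivity of $\exp$ is exactly cash-additivity $\rho(X_{\log}+c)=\rho(X_{\log})+c$; the hypothesis that $\mathcal{C}$ is a cone guarantees via Lemma~\ref{lem:geometricProperties}(i) that $X_{\log}+c\in\mathcal{C}_{\log}$, so the transformed point lies in the domain. Statement (ii) is identical but restricted to $\lambda\in(0,1)$, i.e.~$c<0$, yielding the displayed cash-superadditivity. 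For (iii), $\log(XY)=\log X+\log Y$ gives $\eta_{\mf R}(XY)=\exp(\rho(\log X+\log Y))$ and $\eta_{\mf R}(X)\eta_{\mf R}(Y)=\exp(\rho(\log X)+\rho(\log Y))$, so submultiplicativity is equivalent to subadditivity; here Assumption~\ref{assump:group} ensures $XY\in\mathcal{C}$. The remaining parts (iv)--(vii) use $\log(X^\alpha Y^{1-\alpha})=\alpha\log X+(1-\alpha)\log Y$ and $\log(X^\alpha)=\alpha\log X$ in the same manner, with the standing assumption that $\mathcal{C}_{\log}$ is convex (equivalently, by Lemma~\ref{lem:geometricProperties}(iv), that $\mathcal{C}$ is log-star-shaped) ensuring that the convex combinations and powers remain in $\mathcal{C}$; for (vii), positive homogeneity of $\rho$ additionally presupposes that $\mathcal{C}_{\log}$ is a cone, so that $\alpha X_{\log}\in\mathcal{C}_{\log}$ for every $\alpha>0$.

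The computations themselves are entirely routine; the only point requiring genuine care---and the main obstacle---is the bookkeeping of domains. For each property one must check that the transformed argument (a sum $X_{\log}+c$, a product $XY$, a convex combination, or a power $X^\alpha$) actually lies in the set on which the relevant functional is defined, so that the property is even meaningful. This is precisely what the individually stated hypotheses ($\mathcal{C}$ a cone, $\mathcal{C}$ star-shaped, $\mathcal{C}_{\log}$ convex) together with Assumption~\ref{assump:group} and the dictionary of Lemma~\ref{lem:geometricProperties} are designed to supply. Once the correspondence above is recorded, each of (i)--(vii) follows by applying $\exp$ or $\log$ to the defining (in)equality and invoking the strict monotonicity of these maps.
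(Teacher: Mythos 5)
Your proposal is correct and is precisely the argument the paper has in mind: the paper explicitly omits the proof of this lemma as ``direct,'' and your verification---conjugating by the mutually inverse, strictly increasing maps $\exp$ and $\log$ via the identity $\eta_{\mf R}=\exp\circ\rho_{\mathcal{A}_{\nu},\mathcal{S}_{\log},\pi_{\log}}\circ\log$ from Proposition~\ref{prop:MARRMasMARM}, so that each multiplicative property of $\eta_{\mf R}$ transfers to the corresponding additive property of the MARM and back---is exactly that omitted direct computation, with the domain bookkeeping (cone/star-shaped/convex hypotheses, Assumption~\ref{assump:group}, Lemma~\ref{lem:geometricProperties}) handled as intended. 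The only detail you could make explicit is that the identity and the monotone transfer of inequalities remain valid when $\eta_{\mf R}$ attains $0$ or $\infty$ (i.e., $\rho_{\mathcal{A}_{\nu},\mathcal{S}_{\log},\pi_{\log}}=\mp\infty$), under the usual conventions $\exp(-\infty)=0$ and $\exp(+\infty)=\infty$.
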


\begin{remark}\label{rem:here} Point (a) in 
    Lemma~\ref{lem:connectionMARRMandMARM} establishes that the positive homogeneity of a MARRM is equivalent to cash-additivity of the associated MARM. 
    This generalizes the observation for RRMs and monetary risk measures in~\cite{Return}.

    Quasi-logconvexity in point (b) in Lemma~\ref{lem:connectionMARRMandMARM} describes the reward for diversification for continuously rebalanced portfolios. In this case, a payoff admits the following form: $Y = \exp(\omega_A r_A + \omega_B r_B)$, where $r_A$ and $r_B$ are the log-returns of two investment opportunities $A$ and $B$ and $\omega_A$ and $\omega_B$ are the portfolio weights, i.e.,~the percentage of capital invested in one of the investment opportunities. The latter implies that $\omega_A+\omega_B = 1$. So, if $\omega_A\in(0,1)$, then we have diversified our portfolio. Quasi-logconvexity of $\eta_{\mf R}$ then gives us that $\eta_{\mf R}(Y)\leq\max\{\eta_{\mf R}(\exp(r_A)),\eta_{\mf R}(\exp(r_B))\}$. Hence, quasi-convexity describes that the MARRM rewards diversification of relative amounts. Especially, in Section 3 in~\cite{Laeven1}, the authors conclude that quasi-convexity for classical risk measures is related to buy-and-hold strategies, whereas quasi-logconvexity is related to rebalanced portfolios. 
    
    A similar interpretation holds for submultiplicative MARRMs. For this, Assumption~\ref{assump:group} ensures that the product of two random variables $X,Y\in\mathcal{C}$ is still an element of $\mathcal{C}$. Note, the product of two losses corresponds to a squared amount of money. This is consistent with the upper bound, which is also a squared amount of money as a product of two MARRMs.
\end{remark}

\begin{remark}\label{rem:MARRM_from_MARM}
    In this section, we have rewritten a MARRM as the exponential of a MARM applied to log-returns. 
    To recap Equation (2)  in~\cite{Return} we can also go the other way around and start with a MARM and rewrite it as the logarithm of a MARRM. 
    To make this precise, we have to assume the following model ingredients of a MARM: 
    the logarithmic model set $\mathcal{C}_{\log}\subset L^{0}$, the logarithmic security set $\mathcal{S}_{\log}\subset L^{0}$, the pricing map $\pi_{\log}\colon\mathcal{S}_{\log}\rightarrow\mathbb{R}$, the set of logarithmic relative losses $\mathcal{K}_{\log}\subset L^0$ with $\mathcal{K}_{\log}+\mathbb{R}\subset\mathcal{K}_{\log}$. 
    Further suppose that $\nu\colon\mathcal{K}_{\log}\rightarrow[-\infty,\infty]$ is a cash-additive and nondecreasing map with $\nu(0)=0$. Then, we define the acceptance set as $\mathcal{A}_{\nu}=\{K_{\log}\in\mathcal{K}_{\log}\,;\,\nu(K_{\log})\leq 0\}$. Then, we obtain the ingredients of a MARRM by the following transformations:
    \begin{enumerate}[(1)]
        \item Model set: $\mathcal{C}=\left\{e^{X_{\log}}\,;\,X_{\log}\in\mathcal{C}_{\log}\right\}$;
        \item Security set: $\mathcal{S} = \left\{e^{Z_{\log}}\,;\, Z_{\log}\in\mathcal{S}_{\log}\right\}$;
        \item Pricing map: $\pi:\mathcal{S}\rightarrow(0,\infty),~Z\mapsto (\exp\circ\pi_{\log}\circ\log)(Z)$.
        \item Set of relative losses: $\mathcal{K}:=\left\{e^{K_{\log}}\,;\,K_{\log}\in\mathcal{K}_{\log}\right\}$;
        \item RRM: $\tilde{\rho} = \log\circ\nu\circ \exp$.
    \end{enumerate}
    Consider the return risk measurement regime $\mf R:=(\mathcal{B}_{\tilde{\rho}}, \mathcal{S}, \pi)$.
    Then, for $X_{\log}\in\mathcal{C}_{\log}$, by setting $X=\exp(X_{\log})$, we obtain $\rho_{\mathcal{A}_{\nu},\mathcal{S}_{\log},\pi_{\log}}(X_{\log}) = \log\left(\eta_{\mf R}(X)\right)$.
\end{remark}

Another remarkable feature of representing the MARRM via a MARM is that the domain of the MARM contains log-returns. 
This contrasts with the original intention of MARMs to measure the risk of monetary loss amounts, see e.g.,~\cite{artzner_risk_measures_2009, farkas_2015}. 
The interpretation remains unchanged though: The MARM determines the minimal log-price that is needed to reduce the log-return of our position via the log-return of a hedging position to an acceptable level. Note that, in our context, the MARM is merely an auxiliary function.

\section{Dual Representations of MARRM}\label{sec:dualRepresentation}

In this section, we obtain dual representations of MARRMs, by applying convex duality arguments to the corresponding MARMs. 
As a novel contribution, in Remarks~\ref{rem:diff_logconvex_logstarshaped} and~\ref{rem:diff_logconvex_quasilogconvex}, we discuss differences in the dual representations of logconvex/log-star-shaped/quasi-logconvex MARRMs based on the financial market given by $\pi$, $\mathcal{S}$, and the acceptability criterion given by $\mathcal{B}_{\tilde{\rho}}$. This analysis goes beyond the representations for RRMs in~\cite{Laeven1}.

\subsection{Assumption on $\mathcal{C}_{\log}$}

Our approach is similar to the one in~\cite{Aygun}. 
The first step is to choose $\mathcal{C}_{\log}$, while the set $\mathcal{C}$ is obtained as a consequence and not controlled directly.
This is also motivated by practitioners working with time-series models commonly using log-returns as input for the risk measure $\rho_{\mathcal{A}_{\nu},\mathcal{S}_{\log},\pi_{\log}}$. 
Hence, log-returns are the objects of interest, and it is therefore natural to start by choosing the set $\mathcal{C}_{\log}$. 

Our aim is to leverage the representation of a MARRM via a MARM and the existence of a dual representation of the latter to obtain a dual representation of the MARRM in question.
To do so and to perform convex duality arguments, we equip $\mathcal{C}_{\log}$ with a topology $\tau_{\log}$ and make the following standing assumption:
\begin{assumption}\label{assump:Clog}
    $(\mathcal{C}_{\log},\tau_{\log})$ is a locally convex topological vector space.
\end{assumption}

\begin{remark}
\begin{enumerate}[(a)]
     \item Although Assumption~\ref{assump:Clog} is not explicitly used in the upcoming results, it is crucial for applying known representations for convex, quasi-convex, and star-shaped functionals. For a short introduction on locally convex spaces we refer the reader to Aliprantis \& Border~\cite[Section 5.12]{aliprantis}.
    \item We obtain that $(\mathcal{C},\tau)$ with $\tau=\{\exp(\mathcal{A})\,;\,\mathcal{A}\in\tau_{\log}\}$ is a topological space and the exponential function gives us a homeomorphism between $(\mathcal{C}_{\log},\tau_{\log})$ and $(\mathcal{C},\tau)$.
    \end{enumerate}
\end{remark}

This assumption means that we choose a topology on $\mathcal{C}_{\log}$ and not on $\mathcal{C}$ directly. The following example contains the two most common choices for $\mathcal{C}_{\log}$ in this situation.

\begin{example}
    Set $\mathcal{C}_{\log}=L^p$ with $p\in[0,\infty]$ allowing for a locally convex topology on that space. We then obtain $\mathcal{C}=\{X\in L_{++}^{0}\,;\,\log(X)\in L^p\}$. Concrete examples are:
\begin{enumerate}[(1)]
    \item If $\mathcal{C}_{\log} = L^{\infty}$, then $\mathcal{C} = \interior(L_+^{\infty})$. This means that we only work with random variables that give us a sure minimum loss in the future.
    \item If $\mathcal{C}_{\log} = L^{1}$, then $\mathcal{C} = \{X\in L_{++}^{0}\,;\,\log(X)\in L^1\}$.
\end{enumerate}
In order to derive dual representations, we need a locally convex space though. 
Hence, the results in this section are relevant for the remaining examples, i.e.,~$\mathcal{C}_{\log} = L^p$ with $p\in[1,\infty]$, which are locally convex when equipped with the standard $L^p$-norm. 
Note, that the case of $\mathcal{C}_{\log} = L^p$, $p< 1$, typically does not satisfy Assumption~\ref{assump:Clog}.
\end{example}

\subsection{Dual representations}

For the next proof, we denote the topological dual space of a locally convex space $\mathcal{X}$ by $\mathcal{X}^{\prime}$. 
Further, a real-valued function $\varphi$ on a suitable set $\mathcal{C}\subset L^0_{++}$ is called log-linear, if $\varphi\left(X^{\alpha}Y^{\beta}\right) = \alpha\varphi(X)+\beta\varphi(Y)$ for all $X,Y\in\mathcal{C}$ and $\alpha,\beta\in\mathbb{R}$, assuming tacitly that such products lie in $\mathcal C$. 
For instance, if $\psi\colon \mathcal{C}_{\log} \to\R$ is linear, then $\psi\circ\log$ is a log-linear functional on $\mathcal C$. 
Conversely, in the upcoming proofs we set
$$\ph_{\log}(Y)=\ph(e^Y),\quad Y\in\mathcal C_{\log}.$$
By log-linearity of $\ph$, for all $\alpha\in\R$ and $Y,Y'\in\mathcal C_{\log}$, 
$$\ph_{\log}(Y+\alpha Y')=\ph(e^Y(e^{Y'})^\alpha)=\ph(e^Y)+\alpha\ph(e^{Y'})=\ph_{\log}(Y)+\alpha\ph_{\log}(Y').$$
Hence, each log-linear and $\tau$-continuous map $\varphi$ can be transformed by $\varphi\circ \exp$ into a linear and $\tau_{\log}$-continuous map.

For the remainder of this section, we use the notation from the previous section, i.e.,~$\tilde{\rho}\colon\mathcal{C}\rightarrow (0,\infty)$ is a RRM, $\nu=\log\circ\tilde{\rho}\circ \exp$, $\mathcal{A}_{\nu}=\{X_{\log}\in\mathcal{K}_{\log}\,;\,\nu(X_{\log})\leq 0\}$ and $\pi_{\log}=\log\circ\pi\circ \exp$ is defined on $\mathcal S_{\log}$.

We start with a dual representation for a logconvex MARRM.
    
\begin{proposition}\label{prop:dualLogconvexMARRRM} 
    Suppose that $\rho_{\mathcal{A}_{\nu},\mathcal{S}_{\log},\pi_{\log}}$ is proper, convex, and lower semicontinuous with respect to $\tau_{\log}$. Consider the associated return risk measurement regime $\mf R=(\mathcal B_{\tilde\rho},\mathcal S,\pi).$
    Then, for all $X\in\mathcal C$, the MARRM $\eta_{\mf R}$ can be represented as 
    \begin{align}\label{eq:rep_logc}
        \eta_{\mf R}(X) = \sup_{\varphi\in\mathcal{D}_{\geq 1}}\exp\left(\varphi(X)-\left(\sup_{Y\in\mathcal{B}_{\tilde{\rho}}}\varphi(Y)\right)-\left(\sup_{Z\in\mathcal{S}}\varphi(Z)-(\log\circ\pi)(Z)\right)\right),
    \end{align}
    where $\mathcal{D}_{\geq 1} = \{\varphi:\mathcal{C}\rightarrow\mathbb{R}\,;\, \varphi\text{ log-linear,  $\tau$-continuous}, \forall X\in\mathcal{C}\cap\mathcal{K}\text{ with }X\geq 1:\varphi(X)\geq 0\}.$
\end{proposition}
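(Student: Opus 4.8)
The plan is to leverage Proposition~\ref{prop:MARRMasMARM}, which gives $\eta_{\mf R}=\exp\circ\rho_{\mathcal{A}_\nu,\mathcal{S}_{\log},\pi_{\log}}\circ\log$; write $\rho:=\rho_{\mathcal{A}_\nu,\mathcal{S}_{\log},\pi_{\log}}$ for brevity. Since $\eta_{\mf R}(X)=\exp(\rho(\log X))$ and $\exp$ is increasing and commutes with suprema, it suffices to produce a dual representation of the MARM $\rho$ on the locally convex space $(\mathcal{C}_{\log},\tau_{\log})$ (Assumption~\ref{assump:Clog}) and then transport it through the homeomorphism of Remark~\ref{rem:homeomorphism}. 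By hypothesis $\rho$ is proper, convex, and $\tau_{\log}$-lower semicontinuous, so the Fenchel--Moreau theorem yields $\rho=\rho^{**}$, i.e.
\[\rho(W)=\sup_{\psi\in\mathcal{C}_{\log}'}\bigl(\psi(W)-\rho^*(\psi)\bigr),\qquad W\in\mathcal{C}_{\log}.\]

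The key computation is $\rho^*$. First I would recognise $\rho$ as an infimal convolution: with $\delta_{\mathcal{A}_\nu}$ the convex indicator of the acceptance set (zero on $\mathcal{A}_\nu$, $+\infty$ elsewhere) and $h:=\pi_{\log}+\delta_{\mathcal{S}_{\log}}$ the pricing functional extended by $+\infty$ off $\mathcal{S}_{\log}$, the defining formula of the MARM reads $\rho(W)=\inf_L\{h(L)+\delta_{\mathcal{A}_\nu}(W-L)\}=(h\,\square\,\delta_{\mathcal{A}_\nu})(W)$. The conjugate of an infimal convolution is always the sum of the conjugates (no qualification is needed for this direction), hence $\rho^*=h^*+\delta_{\mathcal{A}_\nu}^*$, where $\delta_{\mathcal{A}_\nu}^*(\psi)=\sup_{W\in\mathcal{A}_\nu}\psi(W)$ is the support function and $h^*(\psi)=\sup_{L\in\mathcal{S}_{\log}}\{\psi(L)-\pi_{\log}(L)\}$. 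Next I would invoke the monotonicity of $\mathcal{A}_\nu$ (recorded in the remark following Proposition~\ref{prop:MARRMasMARM}: $\mathcal{A}_\nu-(\mathcal{K}_{\log})_+\subset\mathcal{A}_\nu$) to see that $\delta_{\mathcal{A}_\nu}^*(\psi)=+\infty$ unless $\psi$ is nonnegative on the positive cone, since along any downward direction the linear functional would otherwise be unbounded on $\mathcal{A}_\nu$. This restriction is what carves out the constraint set $\mathcal{D}_{\geq 1}$ after translation. Substituting $\rho^*$ back gives $\rho(W)=\sup_\psi(\psi(W)-\sup_{W'\in\mathcal{A}_\nu}\psi(W')-h^*(\psi))$, the supremum effectively ranging over positive $\psi$.

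Finally, I would translate to the multiplicative picture. Using the bijection $\psi\mapsto\varphi:=\psi\circ\log$ between $\tau_{\log}$-continuous linear functionals and log-linear $\tau$-continuous functionals (established in the paragraphs preceding the statement via $\ph_{\log}(Y)=\ph(e^Y)$), one has $\psi(\log X)=\varphi(X)$; the identity $\mathcal{A}_\nu=\{\log Y\,;\,Y\in\mathcal{B}_{\tilde\rho}\}$, immediate from $\nu=\log\circ\tilde\rho\circ\exp$, turns $\sup_{W'\in\mathcal{A}_\nu}\psi(W')$ into $\sup_{Y\in\mathcal{B}_{\tilde\rho}}\varphi(Y)$; and $\pi_{\log}(\log Z)=(\log\circ\pi)(Z)$ turns $h^*$ into $\sup_{Z\in\mathcal{S}}(\varphi(Z)-(\log\circ\pi)(Z))$. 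The positivity constraint on $\psi$ becomes $\varphi(X)\geq 0$ for all $X\in\mathcal{C}\cap\mathcal{K}$ with $X\geq 1$, because $W\geq 0\iff e^W\geq 1$; this is exactly the defining condition of $\mathcal{D}_{\geq 1}$. Pulling $\exp$ inside the supremum then yields~\eqref{eq:rep_logc}. I expect the main obstacle to be the bookkeeping around the domain of $\rho^*$: one must argue cleanly that the monotonicity of $\mathcal{A}_\nu$ forces finiteness of its support function precisely on the positive functionals, and that this positivity, once transported through $\exp$, is faithfully encoded by imposing $\varphi\geq 0$ only on $\mathcal{C}\cap\mathcal{K}$ with argument $\geq 1$ --- rather than on all of $\mathcal{C}$ --- so that no spurious constraints are introduced or lost in the passage between the additive and multiplicative duals.
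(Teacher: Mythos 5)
Your proposal is correct and follows essentially the same route as the paper: reduce to the MARM via Proposition~\ref{prop:MARRMasMARM}, apply Fenchel--Moreau, split the conjugate into the support function of $\mathcal{A}_{\nu}$ plus the conjugate of the pricing functional, use monotonicity of $\mathcal{A}_{\nu}$ to restrict to positive functionals, and transport everything back through $\exp$/$\log$. The only difference is that where the paper cites the reasoning of Proposition 3.9 in \cite{frittelli_2006} for the computation of $\rho^{*}$, you carry it out explicitly via the identity $(h\,\square\,\delta_{\mathcal{A}_{\nu}})^{*}=h^{*}+\delta_{\mathcal{A}_{\nu}}^{*}$, which is precisely the content of that citation.
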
 

\begin{proof}
    By the Fenchel-Moreau Theorem and following the reasoning of the proof of Proposition 3.9 in~\cite{frittelli_2006} we obtain for $X_{\log}\in\mathcal{C}_{\log}$ that 
    \begin{align*}
        \rho_{\mathcal{A}_{\nu},\mathcal{S}_{\log},\pi_{\log}}(X_{\log}) = \sup_{\psi\in(\mathcal{C}_{\log})^{\prime}}\left(\psi(X_{\log})-\left(\sup_{Y_{\log}\in\mathcal{A}_{\nu}}\psi(Y_{\log})\right)-\left(\sup_{Z_{\log}\in\mathcal{S}_{\log}}\psi(Z_{\log})-\pi_{\log}(Z_{\log})\right)\right).
    \end{align*}
    By the fact that $\mathcal{A}_{\nu}-(\mathcal{K}_{\log})_{+}\subset\mathcal{A}_{\nu}$, we can restrict the dual representation to functionals $\psi\in(\mathcal{C}_{\log})^{\prime}$ such that for all $X_{\log}\in(\mathcal{C}_{\log}\cap\mathcal{K}_{\log})_{+}$ it holds that $\psi(X_{\log})\geq 0$. By a reformulation in terms of the model ingredients of the MARRM, we obtain the desired representation.
\end{proof}

\begin{remark}
\begin{enumerate}[(a)]
\item
    Recall that the convexity of the MARM is equivalent to the logconvexity of the MARRM. 
    By Proposition~\ref{prop:quasiLogconvex}, logconvexity of the MARRM typically requires that the pricing map $\pi$ is logconvex. As already mentioned in Remark~\ref{sec:discussion_assumptions_logstarshaped}, this is for instance the case if the pricing map is defined for all $X\in\mathcal{C}$ via $\pi(X)=\sup_{Q\in\mathcal{Q}}\E_{Q}[X]$ for a set $\mathcal{Q}$ of equivalent martingale measures, due to H\"older's inequality.
\item The dual representation in Proposition~\ref{prop:dualLogconvexMARRRM} is of the same form as the one in~\cite[Proposition 31]{Aygun}. But, instead of defining a (multiplicative) Fenchel-Legendre conjugate for the MARRM $\eta_{\mf R}$, as it is done in~\cite[Definition 29]{Aygun}, we directly leverage the dual representation of the corresponding MARM. 
\item Note that the assumption of 
$\rho_{\mathcal{A}_{\nu},\mathcal{S}_{\log},\pi_{\log}}$ being proper implies $\eta_{\mf R}>0$, i.e., to apply of the Fenchel-Moreau Theorem, we require the assumption that  no relative acceptability arbitrage opportunities exist. This assumption is also used in the next theorem.
\end{enumerate}
\end{remark}

Now, we use two weaker assumptions than being logconvex. First, we assume that the MARRM is log-star-shaped as defined in the context of  Proposition~\ref{prop:log-star-shaped}. Second, we assume that the MARRM is quasi-logconvex; see Proposition~\ref{prop:quasiLogconvex}. 

We start with log-star-shapedness. To the best of our knowledge, it is the first time that a dual representation for this type of functionals is mentioned in the literature.

\begin{theorem}\label{thm:dualLogStarShapedMARRRM} 
    Let $\rho_{\mathcal{A}_{\nu},\mathcal{S}_{\log},\pi_{\log}}$ be proper, star-shaped, and normalized, i.e., $\rho_{\mathcal{A}_{\nu},\mathcal{S}_{\log},\pi_{\log}}(0) = 0$.
    Then the dual representation of the MARRM is given for $X\in\mathcal{C}$ by
    \begin{equation}\label{eq:rep_star}
        \eta_{\mf R}(X) = \min_{W\in\dom\left(\eta_{\mf R}\right)}\sup_{\varphi\in\mathcal{D}}\exp\Bigg(\varphi(X) -\sup_{\substack{Z\in\mathcal{S}\\ Y\in\mathcal{B}_{\tilde{\rho}}\\ W = YZ}}\max\left\{0,\varphi(Y)+\varphi(Z)-(\log\circ\pi)(Z)\right\}\Bigg),
    \end{equation}
    where $\mathcal{D} = \{\varphi:\mathcal{C}\rightarrow\mathbb{R}\,;\, \varphi\text{ log-linear,  $\tau$-continuous}\}=\{\ph_{\log}\circ\log\,;\,\ph_{\log}\in\mathcal (C_{\log})'\}$.
\end{theorem}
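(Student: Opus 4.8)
The plan is to push everything through the auxiliary MARM $\rho:=\rho_{\mathcal{A}_{\nu},\mathcal{S}_{\log},\pi_{\log}}$ and to apply to it the representation of star-shaped risk measures as pointwise minima of convex ones. By Proposition~\ref{prop:MARRMasMARM} we have $\eta_{\mf R}=\exp\circ\rho\circ\log$, by the discussion preceding the theorem $\varphi\mapsto\varphi\circ\exp$ is a bijection between $\mathcal D$ and $(\mathcal C_{\log})'$, and by Remark~\ref{rem:homeomorphism} the map $\exp$ is a homeomorphism carrying $\dom(\rho)$ onto $\dom(\eta_{\mf R})$. Hence, after substituting $X_{\log}=\log X$, $Y_{\log}=\log Y$, $Z_{\log}=\log Z$, $W_{\log}=\log W$ and $\psi=\varphi\circ\exp$, the claimed identity~\eqref{eq:rep_star} is \emph{equivalent} to the additive statement
\[
\rho(X_{\log})\;=\;\min_{W_{\log}\in\dom(\rho)}\ \sup_{\psi\in(\mathcal C_{\log})'}\big(\psi(X_{\log})-\alpha_{W_{\log}}(\psi)\big),
\]
where $\alpha_{W_{\log}}(\psi)=\sup\{\max\{0,\psi(Y_{\log})+\psi(Z_{\log})-\pi_{\log}(Z_{\log})\}\,;\,Y_{\log}\in\mathcal A_\nu,\,Z_{\log}\in\mathcal S_{\log},\,Y_{\log}+Z_{\log}=W_{\log}\}$. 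The whole proof will be carried out in this log-picture and transported back by $\exp$ at the very end; this is precisely the step that turns the additive penalty into the multiplicative constraint $W=YZ$ and the additive dual functionals into log-linear ones.

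Next I would invoke the representation of star-shaped risk measures (as in~\cite{castagnoli_2022,Laeven2}): a proper, normalized, star-shaped risk measure $\rho$ is a pointwise minimum $\rho=\min_{W_{\log}}\rho_{W_{\log}}$ of normalized, lower semicontinuous convex risk measures indexed by $\dom(\rho)$ — for instance the minimal convex risk measure dominating $\rho$ that agrees with it at $W_{\log}$. Crucially, this representation does \emph{not} require $\rho$ itself to be lower semicontinuous, which is exactly why, in contrast to Proposition~\ref{prop:dualLogconvexMARRRM}, no semicontinuity hypothesis appears in the statement; and it is this theorem that simultaneously produces the outer minimum over $\dom(\eta_{\mf R})$ and guarantees that it is attained.

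The technical heart of the argument is then to dualize each convex member $\rho_{W_{\log}}$ and to check that its Fenchel conjugate is exactly $\alpha_{W_{\log}}$. Since $\rho_{W_{\log}}$ is proper, convex and lower semicontinuous, the Fenchel–Moreau theorem gives $\rho_{W_{\log}}=\sup_{\psi}\big(\psi(\cdot)-\rho_{W_{\log}}^{*}(\psi)\big)$, and I would compute $\rho_{W_{\log}}^{*}$ by imitating the calculation in Proposition~\ref{prop:dualLogconvexMARRRM} (which follows~\cite[Proposition~3.9]{frittelli_2006}), i.e.\ through the inf-convolution structure of the MARM built from the constraint set $\mathcal A_\nu$ and the market $(\mathcal S_{\log},\pi_{\log})$. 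The decisive difference from the convex case is that the decomposition of the argument is now \emph{pinned} at the index point $W_{\log}$: this localization replaces the free inf-convolution (whose conjugate is the \emph{sum} of the two conjugates, as in~\eqref{eq:rep_logc}) by the constrained supremum over $Y_{\log}+Z_{\log}=W_{\log}$, while the truncation $\max\{0,\cdot\}$ records the normalization of $\rho_{W_{\log}}$, a normalized convex risk measure having nonnegative penalty because $0$ lies in its acceptance set. I expect \emph{this} identification of the conjugate — matching the minimal convex majorant $\rho_{W_{\log}}$ with the explicit penalty $\alpha_{W_{\log}}$, and in particular justifying the $\max\{0,\cdot\}$ truncation — to be the main obstacle. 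The two accompanying inequalities, namely $\rho_{W_{\log}}\ge\rho$ for every $W_{\log}$ (yielding $\rho\le\min_{W_{\log}}\rho_{W_{\log}}$) and $\rho_{W_{\log}}(W_{\log})=\rho(W_{\log})$ (so that the diagonal choice $W_{\log}=X_{\log}$ yields the reverse inequality), should then follow by evaluating the constrained supremum together with the monotonicity of $\mathcal A_\nu$ and $0\in\mathcal A_\nu$. Applying $\exp$ and the bijection $\psi\mapsto\psi\circ\log$ between $(\mathcal C_{\log})'$ and $\mathcal D$ finally returns the displayed representation~\eqref{eq:rep_star}.
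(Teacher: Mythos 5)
Your overall architecture --- reduce to the MARM $\rho:=\rho_{\mathcal{A}_{\nu},\mathcal{S}_{\log},\pi_{\log}}$ via Proposition~\ref{prop:MARRMasMARM}, establish a ``$\min$ over $\dom(\rho)$ of $\sup$ over $(\mathcal{C}_{\log})'$'' representation in the log-picture, and transport it back with $\exp$ --- coincides with the paper's, which invokes Proposition~8 of~\cite{Laeven2} and then computes the penalty. However, the step you yourself single out as the main obstacle rests on a misidentification of the convex pieces, and as stated it would fail. The penalty $\alpha_{W_{\log}}$ is \emph{not} the Fenchel conjugate of the minimal convex risk measure dominating $\rho$ and agreeing with it at $W_{\log}$; it is the conjugate of the ray function
\begin{align*}
    f_{W_{\log}}(X_{\log}) = \begin{cases}
        \alpha\rho(W_{\log}), & \text{if } X_{\log}=\alpha W_{\log} \text{ for some }\alpha\in[0,1],\\
        +\infty, & \text{otherwise},
    \end{cases}
\end{align*}
which is proper (as $W_{\log}\in\dom(\rho)$ and $\rho$ is proper and normalized), convex, and lower semicontinuous \emph{automatically} (it is affine on a closed segment and $+\infty$ elsewhere). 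Consequently
\begin{align*}
    f^{*}_{W_{\log}}(\psi)=\sup_{\alpha\in[0,1]}\alpha\big(\psi(W_{\log})-\rho(W_{\log})\big)=\max\{0,\psi(W_{\log})-\rho(W_{\log})\},
\end{align*}
and substituting the defining infimum of the MARM for $\rho(W_{\log})$ turns this into your $\alpha_{W_{\log}}$, i.e., the supremum constrained by $Y_{\log}+Z_{\log}=W_{\log}$. In particular, the truncation $\max\{0,\cdot\}$ comes from the endpoint $\alpha=0$ of the segment, not from normalization of a convex risk measure.

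To see that your identification cannot hold, note that by Fenchel--Moreau $\sup_{\psi}\big(\psi(\cdot)-\alpha_{W_{\log}}(\psi)\big)=f^{**}_{W_{\log}}=f_{W_{\log}}$, which is $+\infty$ outside the segment $\{\alpha W_{\log}\,;\,\alpha\in[0,1]\}$; no monotone convex functional dominating $\rho$ --- least of all a minimal one --- has such a biconjugate. Concretely, if $\rho$ is itself convex, lower semicontinuous and normalized (say linear), then your $\rho_{W_{\log}}$ equals $\rho$, whose conjugate is certainly not $\alpha_{W_{\log}}$. Moreover, the Castagnoli-et-al.\ decomposition you appeal to concerns cash-additive risk measures, which the MARM here need not be. The repair is cheap: with the ray functions as pieces, the identity $\rho=\min_{W_{\log}\in\dom(\rho)}f_{W_{\log}}$ is a two-line consequence of star-shapedness ($f_{W_{\log}}\geq\rho$ pointwise since $\rho(\alpha W_{\log})\leq\alpha\rho(W_{\log})$, with equality and attainment at $W_{\log}=X_{\log}$, $\alpha=1$), the outer minimum is attained for free, and no semicontinuity of $\rho$ is ever needed --- this, rather than an appeal to a deeper star-shaped representation theorem, is the correct explanation for the absence of that hypothesis in the statement. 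What remains is then exactly the conjugate computation above and the back-transformation, which is the paper's proof.
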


\begin{proof}
    By applying Proposition 8 in~\cite{Laeven2} we obtain for $X_{\log}\in\mathcal{C}_{\log}$ that
    \begin{align}\label{eq:proof_dualLogStarShapedMARRRM}
        \rho_{\mathcal{A}_{\nu},\mathcal{S}_{\log},\pi_{\log}}(X_{\log}) = \min_{W_{\log}\in\dom\left(	\rho_{\mathcal{A}_{\nu},\mathcal{S}_{\log},\pi_{\log}}\right)}\sup_{\psi\in(\mathcal{C}_{\log})^{\prime}}\left(\psi\left(X_{\log}\right)-f^{*}_{W_{\log}}(\psi)\right),
    \end{align}
    where $f^{*}_{W_{\log}}$ is the convex conjugate of the following function:
    \begin{align*}
        f_{W_{\log}}(X_{\log}) = \begin{cases}
            \alpha\rho_{\mathcal{A}_{\nu},\mathcal{S}_{\log},\pi_{\log}}(W_{\log}), &\exists\alpha\in[0,1]:X_{\log} = \alpha W_{\log},\\
            +\infty, &\text{otherwise}.
        \end{cases}
    \end{align*}
    For the convex conjugate of $f_{W_{\log}}$ we obtain
    \begin{align*}
        f^{*}_{W_{\log}}(\psi)&= \sup_{X_{\log}\in\mathcal{C}_{\log}}\left(\psi\left(X_{\log}\right)-f_{W_{\log}}(X_{\log})\right)= \sup_{\alpha\in[0,1]}\alpha\left(\psi\left(W_{\log}\right)-\rho_{\mathcal{A}_{\nu},\mathcal{S}_{\log},\pi_{\log}}(W_{\log})\right)\\
        &= \sup_{\alpha\in[0,1]}\sup_{\substack{Z_{\log}\in\mathcal{S}_{\log}\\ W_{\log}-Z_{\log}\in\mathcal{A}_{\nu}}}\alpha\left(\psi\left(W_{\log}\right)-\pi_{\log}(Z_{\log})\right)\\
        &= \sup_{\substack{Z_{\log}\in\mathcal{S}_{\log}\\ Y_{\log}\in\mathcal{A}_{\nu} \\ W_{\log} = Y_{\log}+Z_{\log}}}\max\{0,\psi\left(Y_{\log}\right)+\psi\left(Z_{\log}\right)-\pi_{\log}(Z_{\log})\}.
    \end{align*}

    Using this in combination with~\eqref{eq:proof_dualLogStarShapedMARRRM} and an analogous reformulation in terms of the model ingredients of the MARRM as it is done in the proof of Proposition~\ref{prop:dualLogconvexMARRRM} gives us the result.
\end{proof}

\begin{remark}\label{rem:diff_logconvex_logstarshaped}
    Let us compare the representations in Proposition~\ref{prop:dualLogconvexMARRRM} and Theorem~\ref{thm:dualLogStarShapedMARRRM}: 
    \begin{enumerate}[(a)]
    \item In the star-shaped case, we obtain an outer minimum, which stems from the fact that the star-shaped MARRM can be represented as minimum of convex functions.
    \item In \eqref{eq:rep_logc} for the logconvex case, the penalty term treats 
    the security set and the relative acceptance set separately. 
    In the star-shaped case, \eqref{eq:rep_star}, the outer minimum couples the two sets and links them with the condition $W=YZ$ appearing in the infimization. 
    \item Interpreting $\varphi$ as a model for the costs $\varphi(X)$ of a loss $X$, the additional summand in the representation is a penalty term which reduces the price of the loss. The penalty term is positive, if there exist an acceptable loss $Y$ and a market portfolio $Z$ that represents $W$ and the sum of the costs of $Y$ and the misspecification of the price of a market portfolio $Z$ by using $\varphi$ instead of the correct log-price $\log\circ\pi$ is positive. So, one can say, the penalty term evaluates, if the model $\varphi$ is plausible for a given loss $W$.    
    \end{enumerate}
\end{remark}

Now, we analyze the case of a quasi-logconvex MARRM. As in~\cite{Laeven1}, our result is based on the dual representation of quasi-convex functions in~\cite[Theorem 3.4]{volle_1998}. For a more compact version of this result, we refer to~\cite[Theorem 1.1]{frittelli_2011}. 

\begin{theorem}\label{thm:dualQuasiLogconvexMARRRM}
    Assume that $\rho_{\mathcal{A}_{\nu},\mathcal{S}_{\log},\pi_{\log}}$ is quasi-convex and lower semicontinuous. Then, the dual representation of the MARRM is given for $X\in\mathcal{C}$ by
    \begin{align}\label{eq:dual_logconvex}
        \eta_{\mf R}(X) =\sup_{\varphi\in\mathcal{D}}\inf\{\pi(Z)\,;\,Z\in\mathcal{S}, W\in\mathcal{B}_{\tilde{\rho}}, \varphi(W)\geq \varphi(X)-\varphi(Z)\},
    \end{align}
    with $\mathcal{D}$ as in Proposition~\ref{prop:dualLogconvexMARRRM}. 
\end{theorem}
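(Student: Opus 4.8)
The plan is to reuse the blueprint behind Proposition~\ref{prop:dualLogconvexMARRRM} and Theorem~\ref{thm:dualLogStarShapedMARRRM}: transport the problem to the logarithmic picture via Proposition~\ref{prop:MARRMasMARM}, apply a ready-made dual representation to the auxiliary MARM $\rho_{\mathcal{A}_{\nu},\mathcal{S}_{\log},\pi_{\log}}$---here the quasi-convex duality of Volle---and then carry the resulting formula back through the exponential. Concretely, I would start from the identity $\eta_{\mf R}=\exp\circ\rho_{\mathcal{A}_{\nu},\mathcal{S}_{\log},\pi_{\log}}\circ\log$ and recall from Assumption~\ref{assump:Clog} that $(\mathcal{C}_{\log},\tau_{\log})$ is locally convex, so that the stated hypotheses (quasi-convexity and lower semicontinuity of $\rho_{\mathcal{A}_{\nu},\mathcal{S}_{\log},\pi_{\log}}$) are exactly those required to invoke quasi-convex duality.

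Applying \cite[Theorem 3.4]{volle_1998} (in the monotone form of \cite[Theorem 1.1]{frittelli_2011}) to $\rho:=\rho_{\mathcal{A}_{\nu},\mathcal{S}_{\log},\pi_{\log}}$ yields, for every $X_{\log}\in\mathcal{C}_{\log}$,
\[
    \rho(X_{\log})=\sup_{\psi\in(\mathcal{C}_{\log})^{\prime}}\inf\left\{\rho(Y_{\log})\,;\,Y_{\log}\in\mathcal{C}_{\log},\,\psi(Y_{\log})\geq\psi(X_{\log})\right\}.
\]
Because $\rho$ is nondecreasing (equivalently $\mathcal{A}_{\nu}-(\mathcal{K}_{\log})_{+}\subset\mathcal{A}_{\nu}$, see the remark following Proposition~\ref{prop:MARRMasMARM}), the monotone version of the theorem lets me restrict the outer supremum to functionals that are nonnegative on the positive cone; under the log-linear correspondence $\varphi=\psi\circ\log$ these are precisely the elements of the dual set $\mathcal{D}_{\geq 1}$ from Proposition~\ref{prop:dualLogconvexMARRRM}.

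Next I would substitute the MARM definition $\rho(Y_{\log})=\inf\{\pi_{\log}(Z_{\log})\,;\,Z_{\log}\in\mathcal{S}_{\log},\,Y_{\log}-Z_{\log}\in\mathcal{A}_{\nu}\}$ into the inner infimum and merge the two infima into one over pairs $(Y_{\log},Z_{\log})$. Reparametrising by $V_{\log}:=Y_{\log}-Z_{\log}\in\mathcal{A}_{\nu}$ turns the coupling constraint $\psi(Y_{\log})\geq\psi(X_{\log})$ into $\psi(V_{\log})\geq\psi(X_{\log})-\psi(Z_{\log})$, whence
\[
    \rho(X_{\log})=\sup_{\psi}\inf\left\{\pi_{\log}(Z_{\log})\,;\,Z_{\log}\in\mathcal{S}_{\log},\,V_{\log}\in\mathcal{A}_{\nu},\,\psi(V_{\log})\geq\psi(X_{\log})-\psi(Z_{\log})\right\}.
\]
Writing $\varphi=\psi\circ\log\in\mathcal{D}_{\geq 1}$, $Z=e^{Z_{\log}}\in\mathcal{S}$, $W=e^{V_{\log}}\in\mathcal{B}_{\tilde{\rho}}$, $X=e^{X_{\log}}$, and using $\exp\circ\pi_{\log}=\pi\circ\exp$ together with the fact that $\exp$ is a continuous strictly increasing bijection (hence commutes with both $\sup$ and $\inf$), the constraint becomes $\varphi(W)\geq\varphi(X)-\varphi(Z)$, and $\eta_{\mf R}(X)=\exp(\rho(\log X))$ delivers exactly~\eqref{eq:dual_logconvex}.

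The step I expect to cost the most care is the first one: pinning down the precise hypotheses and sign conventions under which Volle's representation holds, and in particular justifying that the outer supremum may be taken over the monotone cone $\mathcal{D}_{\geq 1}$ rather than all of $(\mathcal{C}_{\log})^{\prime}$, which is where the monotonicity of $\rho$ enters. The subsequent reparametrisation and the commuting of $\exp$ with the lattice operations are purely bookkeeping, mirroring the corresponding passages in the proofs of Proposition~\ref{prop:dualLogconvexMARRRM} and Theorem~\ref{thm:dualLogStarShapedMARRRM}.
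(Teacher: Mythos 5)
Your proposal is correct and follows essentially the same route as the paper's proof: transport to the logarithmic picture via Proposition~\ref{prop:MARRMasMARM}, invoke Volle's quasi-convex duality in the sup-inf form of Frittelli--Maggis, substitute the definition of the MARM, merge the infima via the reparametrisation $V_{\log}=Y_{\log}-Z_{\log}$, and exponentiate back. The one place you go beyond the paper is the restriction of the outer supremum to the monotone functionals $\mathcal{D}_{\geq 1}$: the paper's proof keeps the supremum over all of $(\mathcal{C}_{\log})^{\prime}$ (i.e., the set $\mathcal{D}$ of Theorem~\ref{thm:dualLogStarShapedMARRRM}), and since the statement's reference to ``$\mathcal{D}$ as in Proposition~\ref{prop:dualLogconvexMARRRM}'' is ambiguous (that proposition defines $\mathcal{D}_{\geq 1}$), your separation-based justification of the restriction -- using that the level sets of the monotone MARM are themselves monotone, so any strictly separating functional is automatically nonnegative on the positive cone -- is a legitimate supplement rather than a deviation, and it proves the stronger of the two possible readings.
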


\begin{proof}
    By applying Theorem 3.4 in~\cite{volle_1998} we obtain for $X_{\log}\in\mathcal{C}_{\log}$ that
    \begin{align}\label{eq:dual_volle}
        \rho_{\mathcal{A}_{\nu},\mathcal{S}_{\log},\pi_{\log}}(X_{\log}) = \sup_{\psi\in(\mathcal{C}_{\log})^{\prime}}\sup\Bigg\{r\in\mathbb{R}\,;\,\psi(X_{\log})> \sup_{\substack{Y_{\log}\in\mathcal{C}_{\log}\\ \rho_{\mathcal{A}_{\nu},\mathcal{S}_{\log},\pi_{\log}}(Y_{\log})<r}}\psi(Y_{\log})\Bigg\}.
    \end{align}
    As it is pointed out in Theorem 1.1 in~\cite{frittelli_2011} and can be shown that Equation~(\ref{eq:dual_volle}) is equal to
    \begin{align*}
        \rho_{\mathcal{A}_{\nu},\mathcal{S}_{\log},\pi_{\log}}(X_{\log}) = \sup_{\psi\in(\mathcal{C}_{\log})^{\prime}}\inf_{\substack{Y_{\log}\in\mathcal{C}_{\log}\\ \psi(Y_{\log})\geq \psi(X_{\log})}}\rho_{\mathcal{A}_{\nu},\mathcal{S}_{\log},\pi_{\log}}(Y_{\log}).
    \end{align*}
    Using the definition of the MARM, we obtain:
    \begin{align*}
        \rho_{\mathcal{A}_{\nu},\mathcal{S}_{\log},\pi_{\log}}(X_{\log}) &= \sup_{\psi\in(\mathcal{C}_{\log})^{\prime}}\inf_{\substack{Y_{\log}\in\mathcal{C}_{\log}\\ \psi(Y_{\log})\geq \psi(X_{\log})}}\inf_{Z_{\log}\in\mathcal{S}_{\log}}\Big(\pi_{\log}(Z_{\log})+\delta_{\mathcal{A}_{\nu}}(Y_{\log}-Z_{\log})\Big)\\
        &= \sup_{\psi\in(\mathcal{C}_{\log})^{\prime}}\inf_{Z_{\log}\in\mathcal{S}_{\log}}\inf_{\substack{W_{\log}\in\mathcal{A}_{\nu}\\ \psi(W_{\log})\geq \psi(X_{\log})-\psi(Z_{\log})}}\pi_{\log}(Z_{\log}).
    \end{align*}
    A reformulation in terms of the ingredients of the MARRM gives us~(\ref{eq:dual_logconvex}).
\end{proof}

\begin{remark}\label{rem:diff_logconvex_quasilogconvex}
    For $\varphi\in\mathcal{D}$ set $\sigma_{\mathcal{B}_{\tilde{\rho}}}(\varphi)=\sup_{Y\in\mathcal{B}_{\tilde{\rho}}}\varphi(Y)$. If the supremum is attained, then we can write 
    \begin{align*}
        \inf\{\pi(Z)\,;\,Z\in\mathcal{S}, W\in\mathcal{B}_{\tilde{\rho}},\varphi(W)\geq \varphi(X)-\varphi(Z)\}=\inf\{\pi(Z)\,;\,\sigma_{\mathcal B_{\tilde\rho}}(\varphi)\geq \varphi(X)-\varphi(Z)\}.
    \end{align*}
    We use this to compare Proposition~\ref{prop:dualLogconvexMARRRM} and  Theorem~\ref{thm:dualQuasiLogconvexMARRRM}: The representation~\eqref{eq:dual_logconvex} leads to the model $\varphi$ for which the minimum hedging costs $\pi(Z)$ are maximized subject to the existence of an acceptable loss with larger logarithmic costs (under $\varphi$) than the ones for $\frac{X}{Z}$. In particular, the objective is only the price $\pi(Z)$ of a market portfolio $Z$. 
        
    If the MARRM satisfies the stronger condition of log-convexity, then representation~\eqref{eq:rep_logc} holds. Here, for a model $\varphi$, we do not restrict the set of hedging payoffs by the inequality $0\geq \varphi(X)-\varphi(Z)-\sigma_{\mathcal{B}_{\tilde{\rho}}}(\varphi)$. Instead, $\varphi(X)-\varphi(Z)-\sigma_{\mathcal{B}_{\tilde{\rho}}}(\varphi)$ appears in the objective directly (as a penalty term). This means, a model $\varphi$ for which the logarithmic costs $\varphi(X)-\varphi(Z)$ are larger than the ones for an acceptable loss (under $\varphi$) leads to a penalization of $(\log\circ\pi)(Z)$.
\end{remark}

In summary, log-star-shapedness prevents separation of the market components $\mathcal{S}$, $\pi$, and the relative acceptance set $\mathcal{B}_{\tilde{\rho}}$ in the dual representation~\eqref{eq:rep_star}, and leads to a penalty for possible models $\varphi$ without rewards, due to the additional maximization with zero in~\eqref{eq:rep_star}.

For the dual representation~\eqref{eq:dual_logconvex} of quasi-logconvex MARRMs we obtain that only hedging costs are used as objective function and terms which occur as summands in the dual representation~\eqref{eq:rep_logc} of logconvex MARRMs only occur in a constraint.

\subsection{Duality without Assumption~\ref{assump:Clog}}

In this section, we discuss the impact of Assumption~\ref{assump:Clog}. Typical choices for $\mathcal{C}$ are different to the ones by working under Assumption~\ref{assump:Clog}. For instance, think of a RRM $\tilde\rho\colon L^\infty_{++}\to(0,\infty)$ as defined in Definition~\ref{def:RRM}. As recalled above, $\rho\colon \mathcal{C_{\log}}\to\R$ with $\mathcal{C}_{\log}=\{Y\in L^0\,;\, \exists\,c\in\mathbb{R}:Y\leq c \text{ a.s.}\}$ defined by $\rho(Y)=(\log\circ\tilde\rho\circ \exp)(Y)$. But,  $\mathcal{C}_{\log}$ is only a convex cone and not a linear space. Hence, Assumption~\ref{assump:Clog} does not hold. By the fact that $L^{\infty}\subset \mathcal{C}_{\log}$, the representations from the previous section only holds for the restricted function $\tilde{\rho}|_{\interior(L^{\infty}_{+})}$. 

The previous example shows that a dual representation for a MARM on $L^\infty$ gives us a representation for the corresponding MARRM with domain $L^{\infty}_{++}$ restricted to $\interior(L^{\infty}_{+})$. Analogously, a dual representation for a MARM on $L^p$ with $p\in[1,\infty)$ gives us a representation for the corresponding MARRM with domain $L^p_{++}$ restricted to the set $\{X\in L_{++}^{p}\,;\,\log(X)\in L^p\}$.

So, it is an open question how to obtain a dual representation if we start with $\mathcal{C}$ instead of $\mathcal{C}_{\log}$. 
If the MARRM is not only logconvex, but also convex, one idea is to apply convex duality directly to the MARRM and avoid the usage of the connection with MARMs. The main difficulty is that in the situation of $\mathcal{C}\subset L^1$, $\mathcal{C}$ is not a closed convex cone in a suitable $L^p$-norm topology. This means, even if the MARRM $\eta_{\mf R}$ is lower semicontinuous on $\mathcal{C}$, then it is not clear under which assumptions we can extend $\eta_{\mf R}$ to a lower semicontinuous function on the norm-closure of $\mathcal{C}$. 

We only demonstrate an example in which the latter is possible and for completeness, we also state the dual representation stemming from Proposition~\ref{prop:dualLogconvexMARRRM}.

\begin{example}
    Assume the RRM is given by the $L^1$-norm $\tilde{\rho}\colon L^1_{++}\ni X\mapsto \E[X]$. We can extend this functional to all of $L^1$ by
    \begin{align*}
        f\colon L^{1}\rightarrow [0,\infty],\quad X\mapsto\begin{cases}
            \E[X], &X\in L^{1}_{+},\\
            \infty, &X\notin L^{1}_{+}.
        \end{cases}
    \end{align*}
    In this case, $f$ is proper, convex, and lower semicontinuous. Hence, by the Fenchel-Moreau Theorem it holds that the Fenchel-Legendre biconjugate $f^{\ast\ast}$ of  $f$ is equal to $f$ itself. This delivers for all $X\in L^{1}_{++}$ the dual representation
    \begin{align}\label{eq:normDual}
        \tilde{\rho}(X) = f(X) = \sup\big\{\E[YX]\,;\,Y\in L^{\infty},\left\lVert Y\right\lVert_{\infty}\leq 1\big\}=\E[X].
    \end{align}
    Alternatively, we can apply Proposition~\ref{prop:dualLogconvexMARRRM} to obtain for all $X\in \{W\in L^{1}_{++}\,;\, \log(W)\in L^1\}$ that
    \begin{align}\label{eq:normDual_fromOurProp}
        \tilde{\rho}(X) = \sup\big\{\exp\big(\E[Y\log(X)]-H(Y|P)\big)\,;\,Y\in L_{+}^{\infty},E[Y]=1\big\},
    \end{align}
    where $H(Y|P) =\E[Y\log(Y)]$ is the relative entropy of $Y$ with respect to $P$. Thus, the results from the previous subsection only gives us a dual representation for elements of a specific subset. Further,~\eqref{eq:normDual_fromOurProp} is a representation based on the behaviour of the log-return of $X$. Hence, the representation admits a geometrical behaviour, due to the exponential function in this formula. Note, the relative entropy occurs due to the fact that $\log\circ\tilde{\rho}\circ \exp$ is the entropic risk measure. In contrast,~\eqref{eq:normDual} is obtained by directly applying the Fenchel-Moreau Theorem to $\tilde{\rho}$. Hence, the log-return of $X$ does not play a role here. 
\end{example}

\begin{remark}
    There are two main difficulties in applying the Fenchel-Moreau Theorem to a MARRM directly: (1) In general, it is not clear under which assumptions we can extend the MARRM defined on $\mathcal{C}$ to a {\em closed} convex cone such that the extended function is still lower semicontinuous. (2) Convexity of a MARRM is less common than log-convexity. 

    Regarding these difficulties, we leave the following points for future research: (1) What are conditions such that a MARRM on $\mathcal{C}$ can be extended to a proper, convex, lower semicontinuous function on a closed convex cone containing $\mathcal{C}$? (2) For non-convex but star-shaped MARRMs, is it possible to transfer the ideas behind dual representations for star-shaped monetary risk measures (see~\cite{castagnoli_2022,Laeven2}) to the case of MARRMs? The latter would result in a ``minmax'' representation as in Theorem~\ref{thm:dualLogStarShapedMARRRM}, in contrast to the simple supremum in \eqref{eq:normDual_fromOurProp}. 
\end{remark}

\section{{Empirical study}}\label{sec:marrm_continuous_time}

In this section, we calculate MARRMs with respect to a multivariate Black-Scholes market model and different relative acceptance sets. In this framework, we also compare the MARRM to the RRM and the MARM counterparts. 
The analysis is based on a data set which contains losses from a private auto insurance. The security payoffs are obtained from a Black-Scholes model.  
The structure of the case study is as follows: We begin by comparing MARRMs and RRMs. Here, we test the influence of different acceptability levels and covariance matrices of the underlying stocks (Figures~\ref{fig:var_arar_rrm} and~\ref{fig:var_arar_rrm_changes_parameters}). Then, we also illustrate differences and similarities between MARRMs and MARMs (Figure~\ref{fig:var_arar}).
In addition, we connect our case study to various aspects of risk sharing discussed in Section~\ref{sec:riskSharing}. 
We present the optimal hedging strategies obtained from the MARRM approach (Figure~\ref{fig:var_arar_portfolio_processes}) and compare the optimal risk sharing strategies dictated by MARM and MARRM in Figure~\ref{fig:risk_sharing_marrm_marm}.
For an example based on stock exchange data, we refer to Section A.3 of the online appendix.

First, we introduce the ingredients of the underlying stochastic market model. 
The multivariate Black-Scholes model consists of a riskless bank account $\{B_t\}_{t\in[0,T]}$ and a $d$-dimensional stochastic process $\{S_t\}_{t\in[0,T]}$ modeling the dynamics of risky stocks. For $d$-dimensional standard Brownian motion $W_t = (W_t^1,\dots,W_t^d)^{\intercal}$,  interest rate $r\in\mathbb{R}$, drift vector $b=(b^1,\dots,b^d)^{\intercal}$ and covariance matrix $\Sigma = (\sigma_{ij})_{i,j\in\{1,\dots,d\}}$, the dynamics are as follows:
\begin{align*}
    &\begin{cases}\diff B_t = B_t\, r \diff{t},\\
    B_0 = 1,\end{cases}\quad \text{and}\quad
    \begin{cases}
        \diff S_t^i = S_t^i(b_i \diff{t} + \sum_{j=1}^{d}\sigma_{ij}\diff{W_t^j}),\\
        S^{i}_0=1,
    \end{cases}\quad i\in\{1,\dots,d\}.
\end{align*}

For tractability, we use constant portfolio processes $\pi = (\pi^1,\dots,\pi^d)^{\intercal}\in\mathbb{R}^d$, where $\pi^i$ is the fraction of initial wealth invested in stock $i$. The latter corresponds to a constant trading strategy and is an assumption, e.g.,~in~\cite{Emmer}. 
Two reasons support this specification: 
First, restricting the set of admissible strategies to constant ones makes our problem tractable. Second, constant portfolio strategies are optimal for standard portfolio optimization problems in continuous time,  e.g., the solution of Merton's portfolio problem (see \cite{merton_1971}). For such a portfolio process and initial capital $x_0>0$, we obtain the following wealth process at time $t\in[0,T]$:
\begin{align}\label{eq:wealth_process}
    X^{x_0,\pi}_t = x_0\exp\left(\left(\pi^{\intercal}(b-r\mathbf{1}) + r -\tfrac{\left\lVert\pi^{\intercal}\Sigma\right\rVert^2}{2}\right)t + \pi^{\intercal}\Sigma W_t\right).
\end{align}

The parameter values are chosen as follows:
\begin{align*}
    &d = 2,\quad T =1,\quad r = 0.01,\quad b = (0.04,0.08)^{\intercal},\quad \Sigma = \begin{pmatrix}
        0.15 & -0.1\\
        -0.1 & 0.25
    \end{pmatrix}.
\end{align*}

The security sets for MARRM and MARM consists of all payoffs $X_T^{x_0,\pi}$ as in~(\ref{eq:wealth_process}). 

\subsection{US private auto claims}	

As a first example, we use the \texttt{usprivautoclaim} data from the \texttt{R}-package \texttt{CASdatasets}. 
This data set contains private motor insurance losses of a US insurance company. On the left-hand side of Figure~\ref{fig:insurance_data} we present the histogram of the losses. 
Additionally, we fit a lognormal and an exponential distribution to the data. 
The quantile-quantile(QQ)-plot on the right-hand side shows that the exponential distribution does not lead to a good fit. In contrast, the lognormal distribution seems to be a reasonable choice. 
This is line with the literature, see, e.g., Wüthrich~\cite[Section 3.2.3]{Wuthrich}. 
Furthermore, the lognormal distribution can be used to approximate the total claim amount for small portfolio sizes, see again~\cite[Section 4.1.2]{Wuthrich}. 

\begin{figure}
    \begin{subfigure}[b]{0.48\textwidth}
        \includegraphics[width=\linewidth]{./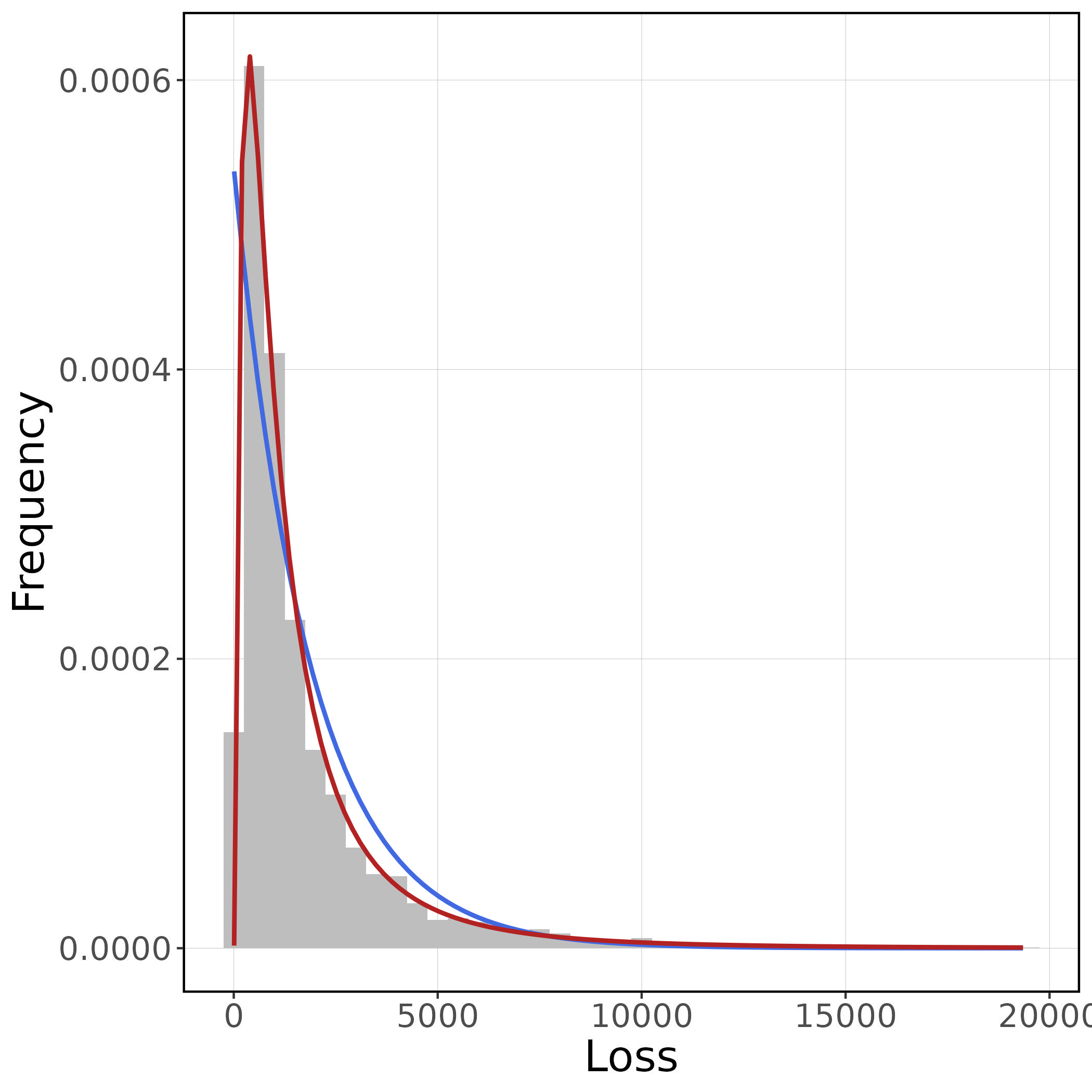}
    \end{subfigure}
    \hfill
    \begin{subfigure}[b]{0.48\textwidth}
        \includegraphics[width=\linewidth]{./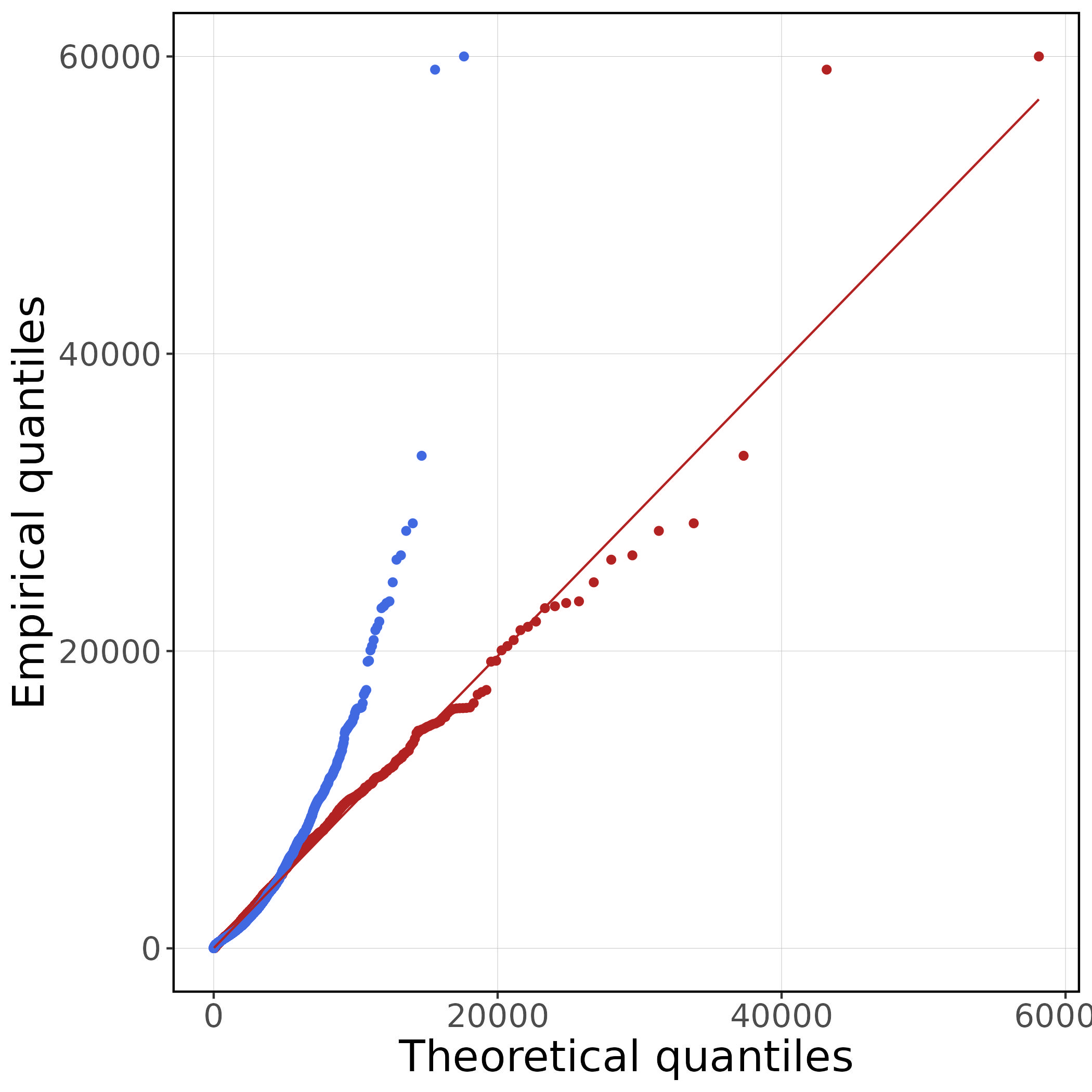}
    \end{subfigure}
    \captionsetup{font=footnotesize}
    \caption{\footnotesize\textit{LHS:} Histogram of losses below $20\,000$ from the \texttt{usprivautoclaim} data. The red, respectively blue, curve is the density of a lognormal, respectively exponential, distribution, fitted via method of moments. \textit{RHS:} QQ-plots of the empirical quantiles against the theoretical ones regarding the fitted lognormal (red) and exponential (blue) distribution.}
    \label{fig:insurance_data}
\end{figure}

So, from now on, we assume a risky position that is lognormally distributed, $X\sim\text{LN}(\mu,\sigma^2)$, where we specifically use the calibrated parameters of the lognormal distribution from Figure~\ref{fig:insurance_data}, $\mu = 6.9686$ and $\sigma = 1.0545$. In addition, we assume that $X$ is uncorrelated to the stocks. 
The latter is a meaningful assumption, because losses from the insurance business, such as those from car crashes, are independent of fluctuations on a stock exchange.
Under these specifications, we always end up with lognormally distributed relative losses:
\begin{align}\label{eq:distributionLognormalFraction}
    \frac{X}{X^{x_0,\pi}_T}\sim \text{LN}\big(\mu-\log(x_0)-\big(\pi^{\intercal}(b-r\mathbf{1}) + r -\tfrac{\left\lVert\pi^{\intercal}\Sigma\right\rVert^2}{2}\big)T,\sigma^2+\left\lVert\pi^{\intercal}\Sigma\right\rVert^2 T\big).
\end{align}

\subsection{Comparison with RRM} 

We begin by comparing MARRMs with their RRM counterparts. 
To the best of our knowledge, even in the case of MARMs, there are only few references that illustrate the influence of multiple eligible assets in concrete market setups. For instance, \cite{desmettre2020} compare monetary risk measures and MARMs in a one-period Black-Scholes setup, i.e.,~no intermediate trading is allowed. \cite{laudage2022} compare the two types of risk measures in a balance-sheet model with distributions inspired by the CAPM market model. Finally, \cite{frittelli_2006} compare MARMs for different security sets consisting of stochastic processes instead of random variables.

We use the following acceptability criteria for a strictly positive random variable $X$ and a level $\lambda\in(0,1)$:
\begin{enumerate}[(1)]
    \item Value-at-Risk (VaR) criterion:\quad $\exp\left(q^{-}_{\log(X)}(\lambda)\right)\leq 1$;
    \item Average-Return-at-Risk (ARaR) criterion:\quad$\exp\left(\frac{1}{1-\lambda}\int_{\lambda}^{1}q^{-}_{\log(X)}(u){\rm d}u\right) \leq 1$. 
\end{enumerate}

The ARaR is the RRM based on the Expected Shortfall as underlying monetary risk measure, see~\cite[Example 7]{Laeven1}. In particular, acceptability is captured by the single parameter $\lambda\in(0,1)$. 

Note, that the resulting relative acceptance sets satisfy Assumption~\ref{assump:relativeAcceptanceSet}. Together with the fact that the classical Black-Scholes model does not admit a free lunch with vanishing risk (Assumption~\ref{assump:nflvr}), we obtain by Theorem~\ref{thm:noArbitrage_MARRM} that the MARRMs do not lead to relative acceptability arbitrage opportunities.

In Figure~\ref{fig:var_arar_rrm}, we compare the MARRM to the RRM at varying levels of the acceptability parameter $\lambda$.  
In the left-hand figure, MARRM and RRM values are depicted as functions of $\lambda$.
In the right-hand figure, for VaR- and ARaR-based acceptability criteria, MARRM and RRM are compared in {\em relative} terms by looking at the fraction
\begin{equation}\label{eq:reldeviation}\tfrac{\text{RRM}- \text{MARRM}}{\text{MARRM}}\end{equation}
representing the relative deviation of RRM from MARRM.
We see that even for large levels, the deviation is above $6\%$. 
Hence, the introduction of multiple eligible assets indeed leads to a significant reduction of the minimal hedging costs.
Furthermore, the relative deviation in the VaR-case is larger than in the ARaR-case.
The gap is smaller for larger levels, which indicates that for such levels the optimal hedging strategies for the MARRMs in the VaR- and ARaR-case are similar.

\begin{figure}
    \begin{subfigure}[b]{0.48\textwidth}
        \includegraphics[width=\linewidth]{./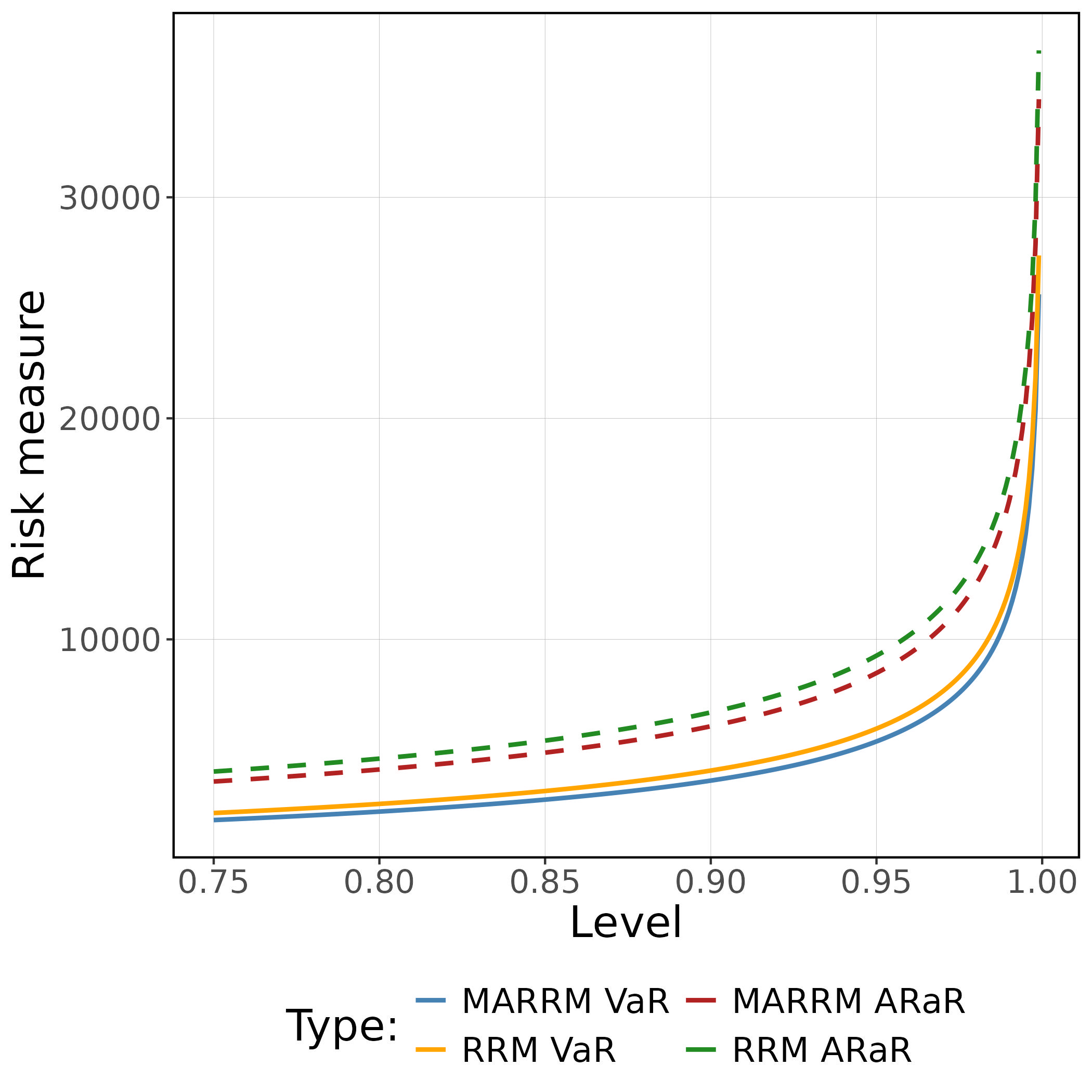}
    \end{subfigure}
    \hfill
    \begin{subfigure}[b]{0.48\textwidth}
        \includegraphics[width=\linewidth]{./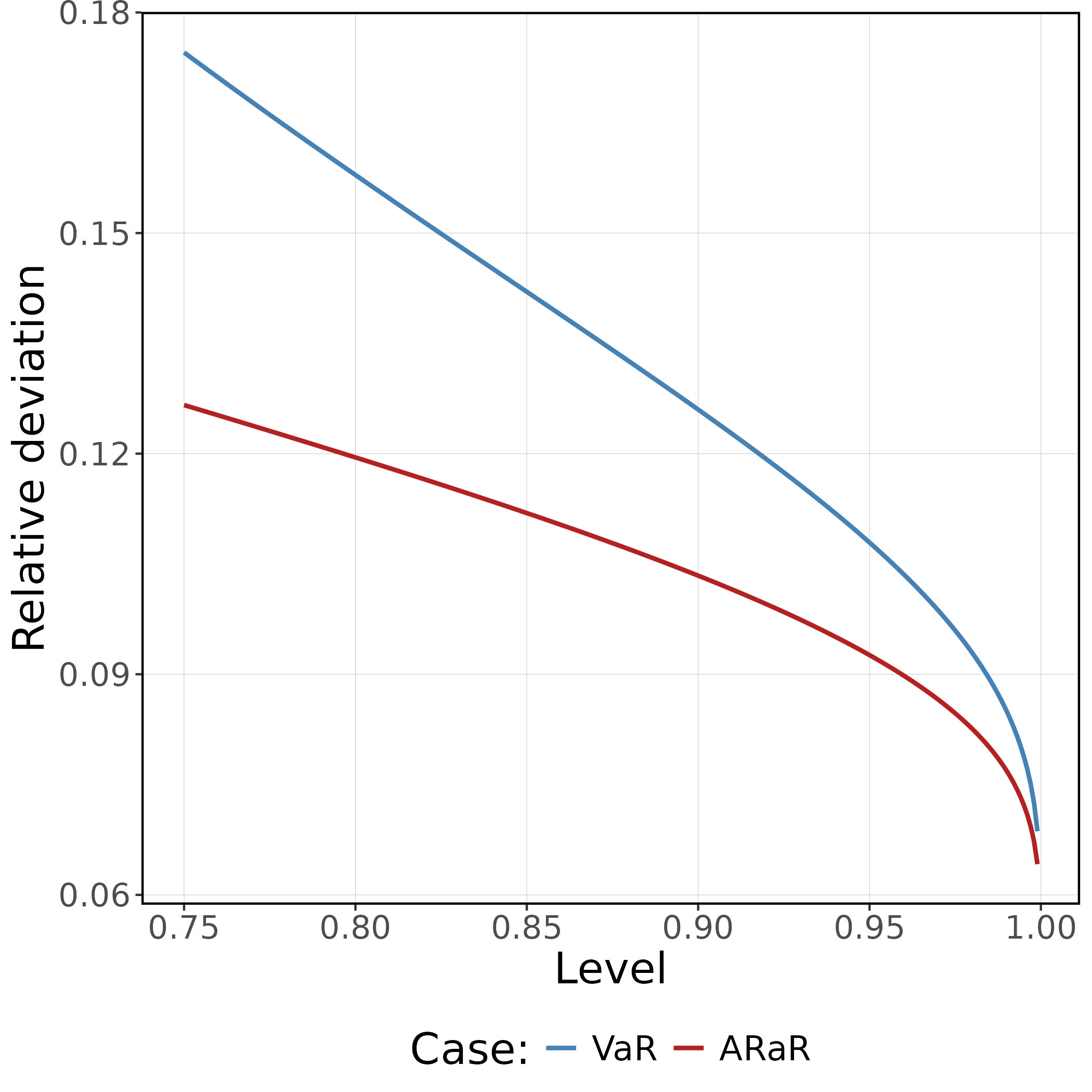}
    \end{subfigure}
    \captionsetup{font=footnotesize}
    \caption{\footnotesize\textit{LHS:}
    MARRM and RRM for different levels $\lambda$. \textit{RHS:} Relative deviation between MARRM and RRM for different levels $\lambda$ based on the same acceptance sets.}
    \label{fig:var_arar_rrm}
\end{figure}

In Figure~\ref{fig:var_arar_rrm_changes_parameters}, we test the sensitivity to the chosen parameters. To this effect, we fix $\lambda = 0.95$ and vary the volatility of the second stock $\sigma_{22}$ (left-hand side) and the correlation $\sigma_{12} = \sigma_{21}$ of the two stocks (right-hand side). The RRM remains unaffected by these changes, as it does not rely on stocks as hedging instruments. 
Further, we see that the MARRM goes to zero at values for which the matrix $\Sigma$ becomes singular. In this case, one can find a portfolio process for which the term $\left\lVert\pi^{\intercal}\Sigma\right\rVert^2 T$ in~(\ref{eq:distributionLognormalFraction}) is zero. 
By scaling this portfolio process, one can thus decrease the initial capital such that it is arbitrarily close to zero.   

For large values of $\sigma_{22}$ and $\sigma_{12}$ approaching zero, the MARRM and RRM values converge. 
In such cases, optimal hedging suggests avoiding investment in stocks.

\begin{figure}
    \begin{subfigure}[b]{0.48\textwidth}
        \includegraphics[width=\linewidth]{./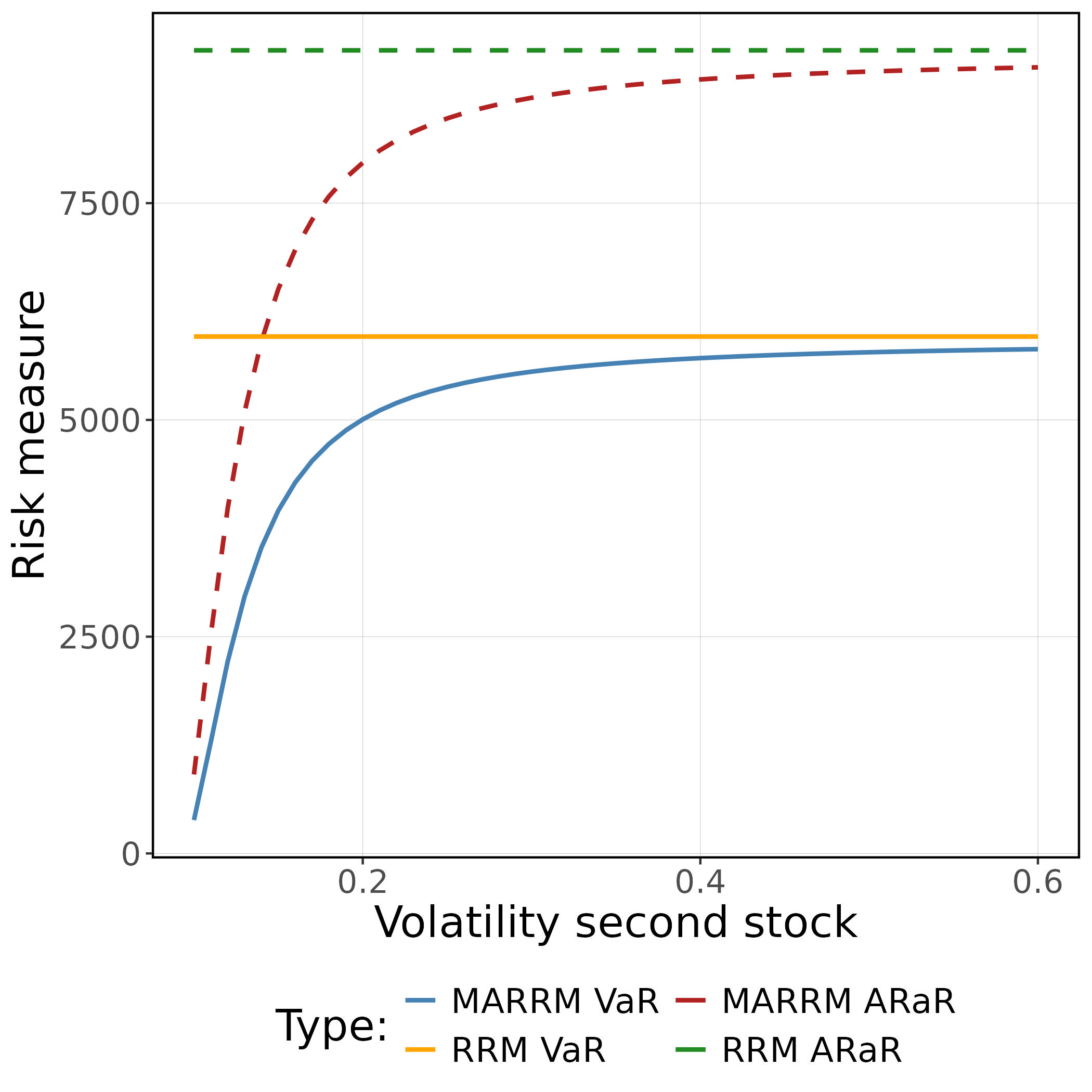}
    \end{subfigure}
    \hfill
    \begin{subfigure}[b]{0.48\textwidth}
        \includegraphics[width=\linewidth]{./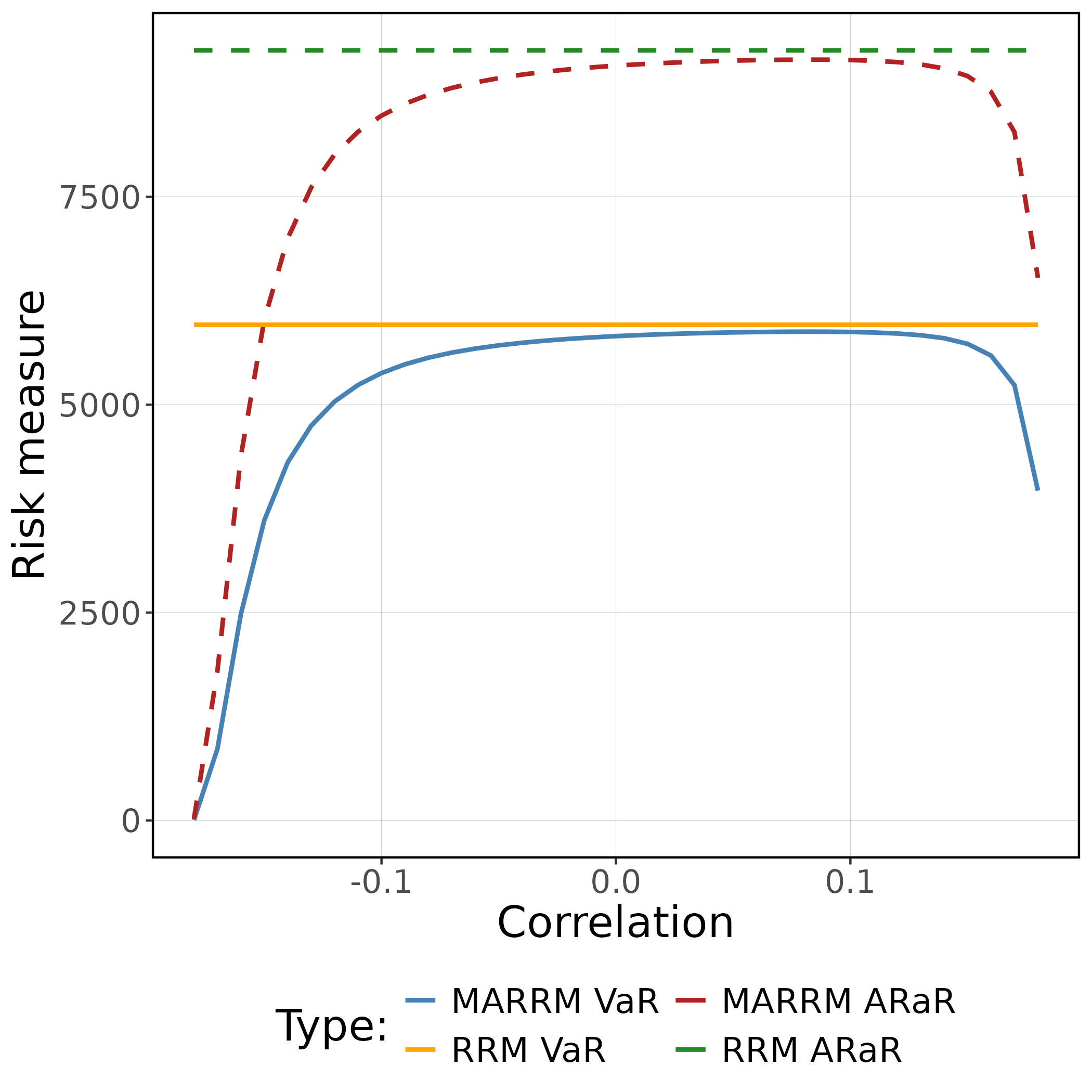}
    \end{subfigure}
    \captionsetup{font=footnotesize}
    \caption{\footnotesize \textit{LHS:} MARRM and RRM as a function of $\sigma_{22}$, the volatility of the second stock. \textit{RHS:} MARRM and RRM as a function of $\sigma_{12}$, the correlation between the stocks.} 
    \label{fig:var_arar_rrm_changes_parameters}
\end{figure} 

\subsection{Comparison with MARM}\label{sec:study_comparison_marm}

Now we compare the MARRM to a related MARM. When defining the latter, the following acceptability criteria for a random variable $X$ and a level $\lambda\in(0,1)$ will be used. 
\begin{enumerate}[(1)]
    \item Value-at-Risk (VaR) criterion for MARM:\quad $q^{-}_{X}(\lambda)\leq 0$;
    \item Expected Shortfall (ES) criterion for MARM:\quad$\frac{1}{1-\lambda}\int_{\lambda}^{1}q^{-}_{X}(u){\rm d}u \leq 0$.
\end{enumerate}

The MARRM for the ARaR and the MARM for the ES are generally not identical.\footnote{As an example, consider a position $X \sim \text{LN}(1.5, 0.16)$ and a security payoff $Z \sim \text{LN}(2.29, 0.04)$, and compare both risk measures numerically at level $\lambda = 0.9$.} 
However, as shown in the next result, they agree in the case of the VaR.

\begin{lemma}
    For $X,Z\in L^0_{++}$ and each $\lambda \in(0,1)$, $q_{\log\left(\frac{X}{Z}\right)}^{-}(\lambda)\leq 0$ holds if and only if $q_{X-Z}^{-}(\lambda)\leq 0$.
\end{lemma}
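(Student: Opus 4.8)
The plan is to reduce both inequalities to a single statement about a probability, exploiting the fact that dividing by the strictly positive $Z$ and applying $\log$ are order-preserving.

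First I would record the following elementary characterization of the left quantile, valid for any real-valued random variable $W$ and any level $\lambda\in(0,1)$:
\[
q_W^-(\lambda)\le 0\quad\Longleftrightarrow\quad P(W\le 0)\ge\lambda.
\]
The implication ``$\Leftarrow$'' is immediate, since $P(W\le 0)\ge\lambda$ places $0$ in the defining set $\{x\in\R\,;\,P(W\le x)\ge\lambda\}$ whose infimum is $q_W^-(\lambda)$. For ``$\Rightarrow$'' I would use that the distribution function $F_W(x)=P(W\le x)$ is nondecreasing and right-continuous, so that $\{x\,;\,F_W(x)\ge\lambda\}$ is a closed half-line $[a,\infty)$ with $a=q_W^-(\lambda)$; in particular $F_W(a)\ge\lambda$, and if $a\le 0$ then monotonicity gives $F_W(0)\ge F_W(a)\ge\lambda$.

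Second, applying this characterization to $W=\log\!\left(\tfrac{X}{Z}\right)$ and to $W=X-Z$ reduces the claimed equivalence to the equality of the two probabilities $P\!\left(\log\!\left(\tfrac{X}{Z}\right)\le 0\right)$ and $P(X-Z\le 0)$. I would then observe that the two underlying events coincide almost surely: since $Z>0$ a.s., dividing by $Z$ is order-preserving, so $X-Z\le 0\iff X\le Z\iff \tfrac{X}{Z}\le 1$; and because $\log$ is strictly increasing, $\tfrac{X}{Z}\le 1\iff\log\!\left(\tfrac{X}{Z}\right)\le 0$. Hence $\{X-Z\le 0\}=\{\log(\tfrac{X}{Z})\le 0\}$ up to a $P$-null set, the two probabilities agree, and the equivalence follows.

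The only point requiring genuine care is the ``$\Rightarrow$'' direction of the quantile characterization, where right-continuity of $F_W$ is needed to ensure that the infimum defining $q_W^-(\lambda)$ lies in the set $\{x\,;\,F_W(x)\ge\lambda\}$; everything else is a routine consequence of monotonicity of the distribution function and of the logarithm.
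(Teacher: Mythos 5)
Your proof is correct and takes essentially the same approach as the paper: both reduce the quantile condition to the characterization $q_W^-(\lambda)\le 0 \iff P(W\le 0)\ge\lambda$ and then note that the events $\left\{\log\left(\tfrac{X}{Z}\right)\le 0\right\}$ and $\{X-Z\le 0\}$ coincide because $Z>0$ a.s.\ and $\log$ is increasing. The only difference is that you spell out the right-continuity argument justifying the quantile characterization, which the paper states without proof.
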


\begin{proof}
For any $Y\in L^0$ and $\lambda\in(0,1)$, $q^{-}_Y(\lambda)\le 0$ is equivalent to $P(Y\le 0)\ge \lambda$. It remains to observe that
$P\left(\log\left(\tfrac{X}{Z}\right)\leq 0\right)=P\left(\tfrac{X}{Z}\leq 1\right)=P\left(X\leq Z\right)=P(X-Z\leq 0).$
\end{proof}

On the left-hand side in Figure~\ref{fig:var_arar}, we plot the MARRM and MARM for varying levels of the parameter $\lambda$. 
Note that the blue and yellow curves overlap. The only curves that are visibly different are the green and the red ones. 
On the right-hand side, we plot the relative deviation between the two, i.e.,~$\frac{\text{MARRM}-\text{MARM}}{\text{MARM}}$, as a function of $\lambda$. 

The values in the VaR case differ only minimally, and the difference is solely due to different computational approaches. For MARM, we have to approximate the distribution of $X-Z$ via a Monte-Carlo simulation, while we know the exact value for the MARRM by~(\ref{eq:distributionLognormalFraction}).
Hence, the difference is fully explained by the Monte-Carlo approximation and the numerical solver used to obtain the minimal hedging price.\footnote{~As solver we use the function \texttt{nloptr} in \texttt{R} with the \texttt{NLOPT\_LN\_COBYLA}-algorithm.}

The previous point shows that a practitioner can use the MARRM to calculate the MARM in the VaR-case, which has the advantage that a computationally costly Monte-Carlo simulation can be avoided.

The curves in the ES/ARaR-case are not identical. So, in contrast to the VaR-case, a practitioner should be careful in using the MARRM as approximation of the MARM. 
In Section A.2 of the online appendix, we provide additional examples of acceptability criteria where MARRM and MARM differ significantly, with relative differences exceeding $30\%$.
    
\begin{figure}
    \begin{subfigure}[b]{0.48\textwidth}
        \includegraphics[width=\linewidth]{./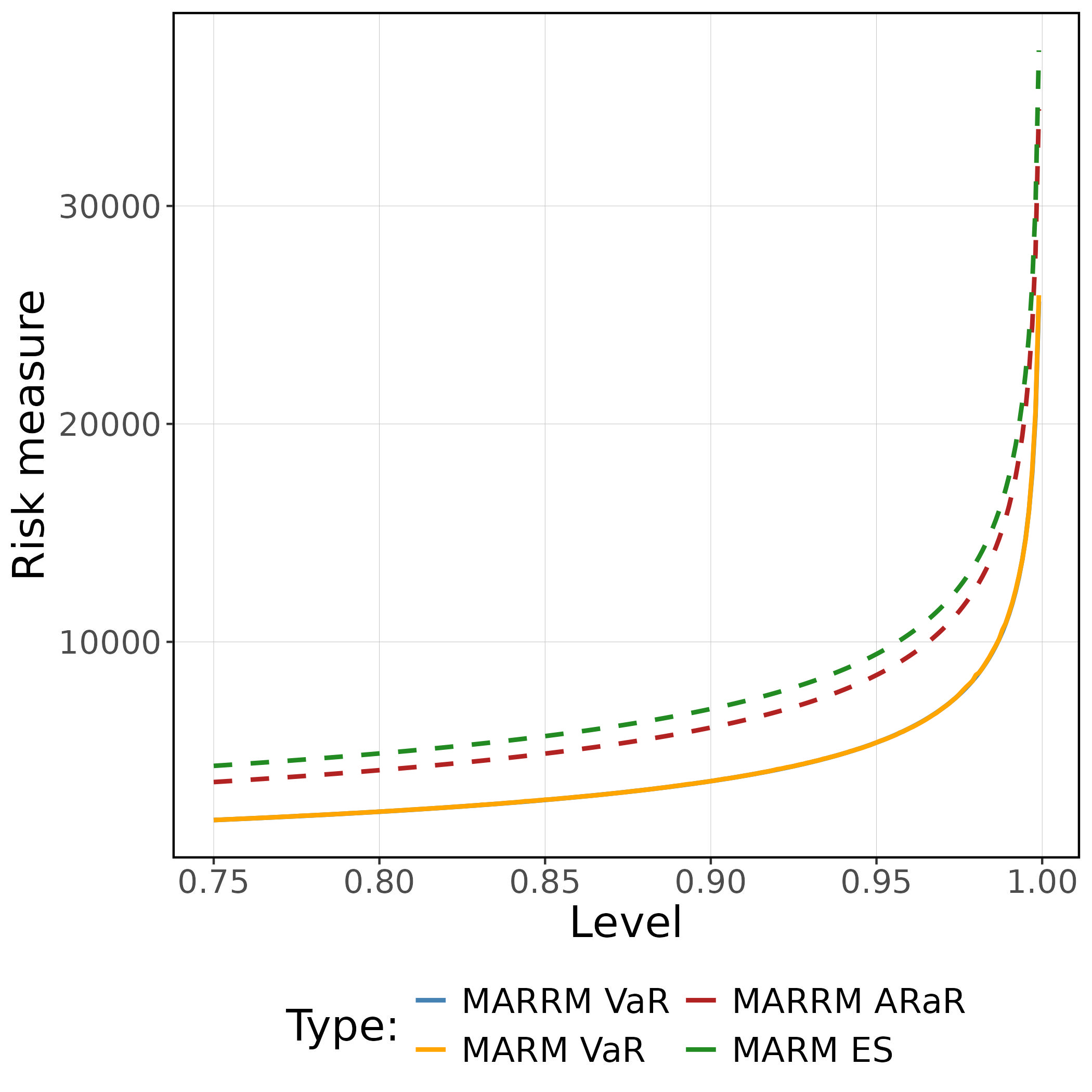}
    \end{subfigure}
    \hfill
    \begin{subfigure}[b]{0.48\textwidth}
        \includegraphics[width=\linewidth]{./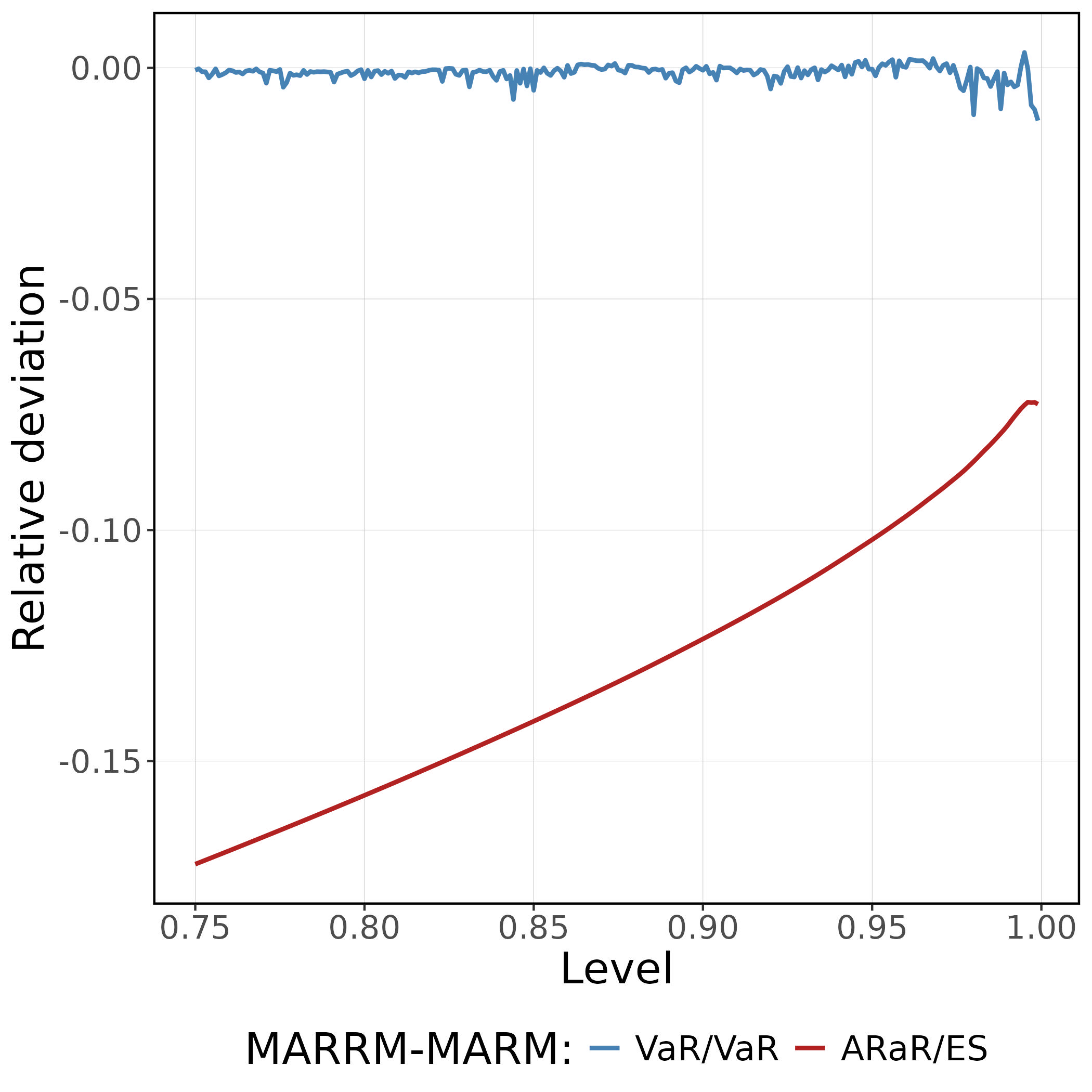}
    \end{subfigure}
    \captionsetup{font=footnotesize}
    \caption{\footnotesize \textit{LHS:} MARM and MARRM risk measures for different levels. 
    \textit{RHS:} Relative deviation between MARM and MARRM for the same acceptance sets.}
    \label{fig:var_arar}
\end{figure}

In Figure~\ref{fig:var_arar_portfolio_processes}, we return to the topic of risk sharing and the optimal portfolio process for the MARRM depending on the level $\lambda$ for VaR and ARaR.
For all levels, the optimal solution admits a short position in the bank account. This short position becomes smaller, if the level increases. We give a detailed explanation of these effects for the VaR-case in Section A.1 of the online appendix and show that the large short positions are mainly driven by the log-standard deviation $\sigma=1.0545$ of $X$. A situation in which smaller values of $\sigma$ occur is the one of logarithmic times series data of financial indices or stocks. In this case, it is possible that no short positions in the bank account occur. For a concrete example, we also refer to the online appendix.
We can interpret this effect as follows: a larger $\sigma$ results in a heavier tail of the lognormal distribution, which can be more effectively offset by other lognormal distributions than by a constant investment.

\begin{figure}
    \begin{subfigure}[b]{0.4\textwidth}
        \includegraphics[width=\linewidth]{./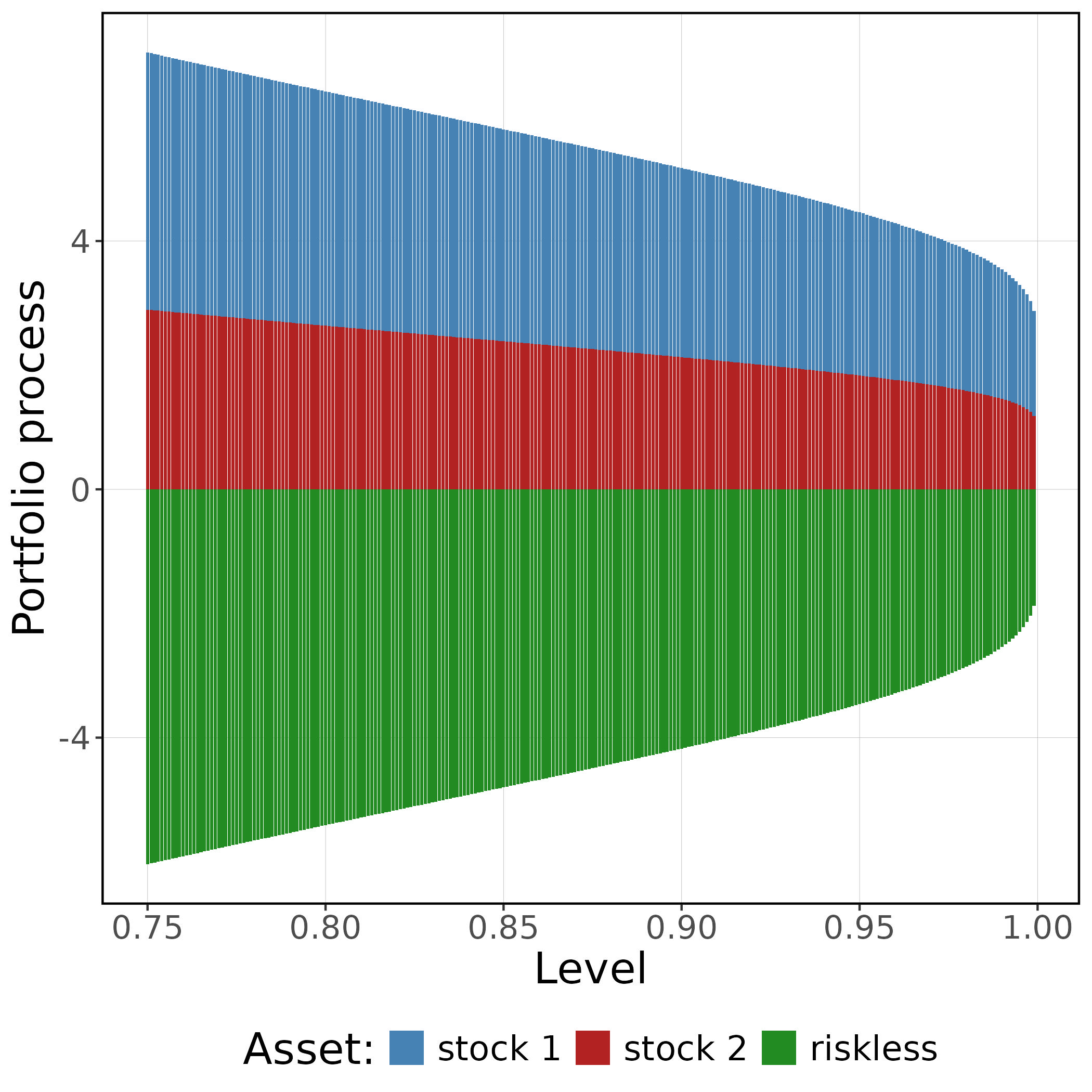}
    \end{subfigure}
    \hspace{1cm}
    \begin{subfigure}[b]{0.4\textwidth}
        \includegraphics[width=\linewidth]{./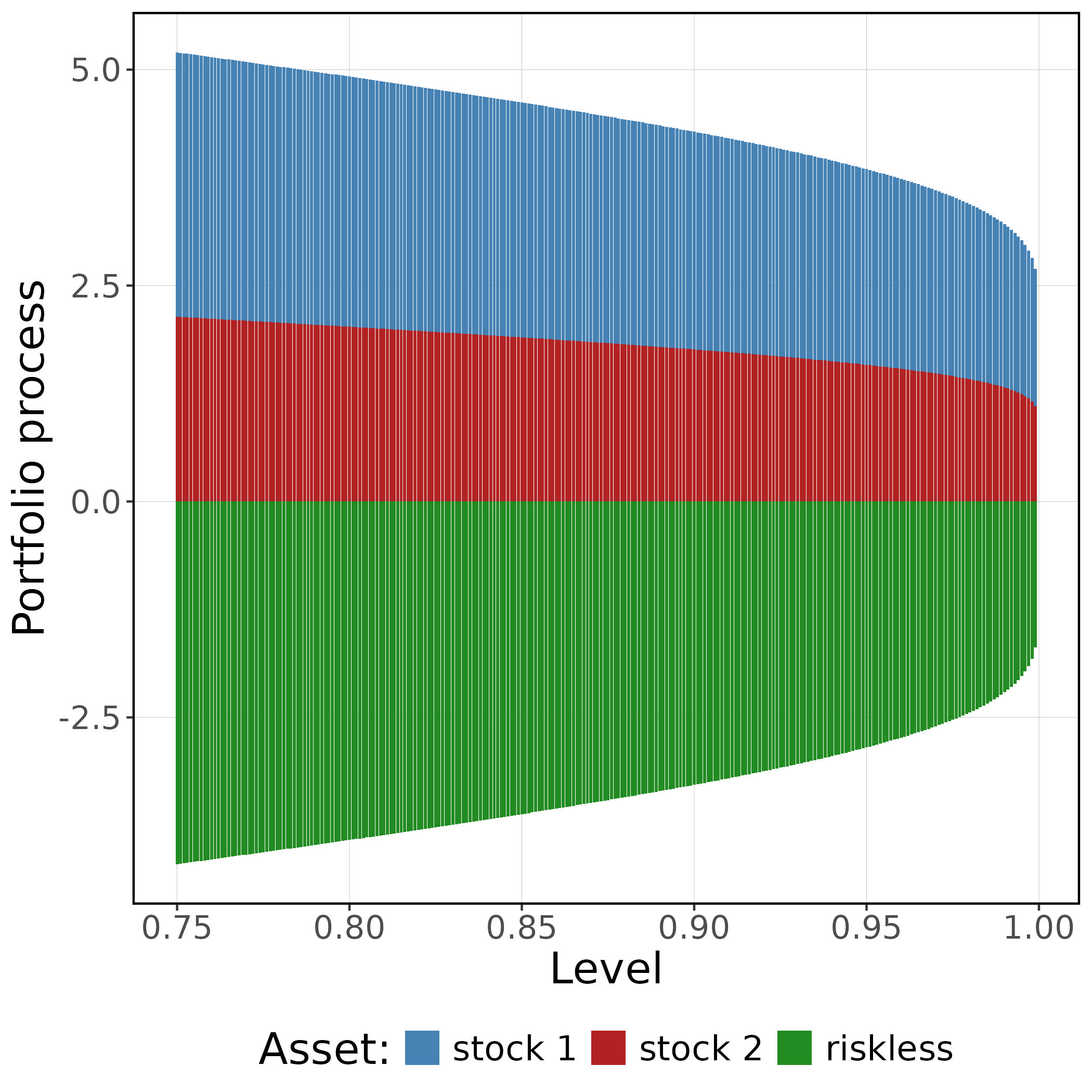}
    \end{subfigure}
    \captionsetup{font=footnotesize}
    \caption{\footnotesize \textit{LHS:} Optimal portfolio process for MARRM with VaR acceptance set. \textit{RHS:} Optimal portfolio process for MARRM with ARaR acceptance set.}
    \label{fig:var_arar_portfolio_processes}
\end{figure}

Finally, we would like to illustrate the risk sharing suggested by the concepts of MARRM and MARM. As mentioned in Section~\ref{sec:riskSharing}, the MARRM subdivides the loss $X$ multiplicatively into a security payoff $X^{x_0,\pi}_T$ and a leverage factor $X/X^{x_0,\pi}_T$ which lies in the relative acceptance set. Instead, the MARM leads to an additive decomposition of the loss $X$ into a security payoff $X^{x_0,\pi}_T$ and an acceptable loss $X-X^{x_0,\pi}_T$. In Figure~\ref{fig:risk_sharing_marrm_marm}, we illustrate on the left-hand side the densities of the loss $X$ and the security payoffs $X^{x_0,\pi}_T$ for the MARRM and MARM in the ARaR/ES-case at level $\lambda = 0.95$ and $\lambda = 0.99$. On the right-hand side, we present the densities of the leverage factor for the MARRM. Further, we add the leverage factor densities by using the optimal security payoff calculated via the MARM approach. 

The security payoff densities suggest that larger values are more likely under the MARM, explaining why its mode exceeds that of the MARRM density. In multiplicative risk sharing, the leverage factor of the MARM is more likely to produce values near zero, whereas the MARRM places more weight in the tail of the leverage factor.

\begin{figure}
    \begin{subfigure}[b]{0.48\textwidth}
        \includegraphics[width=\linewidth]{./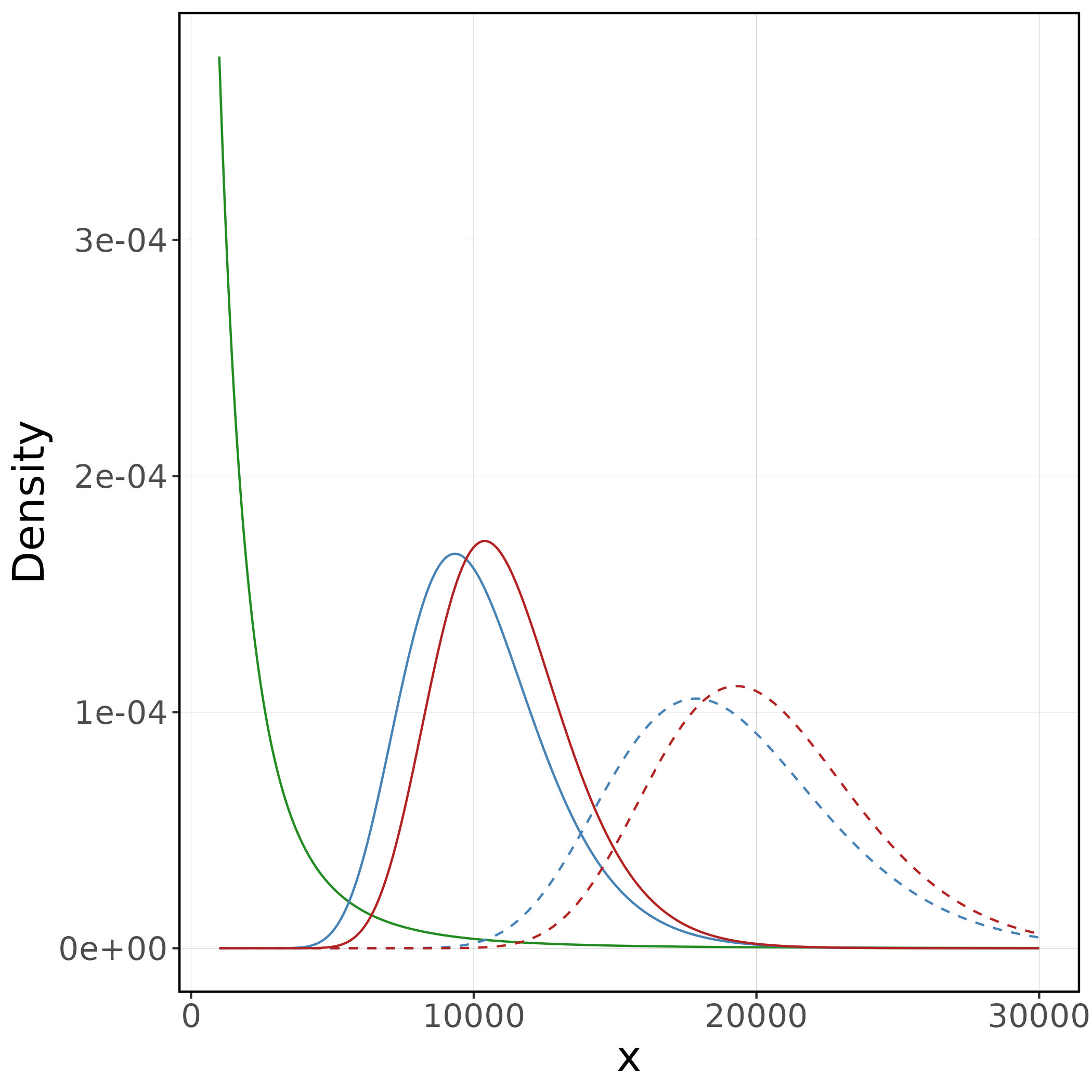}
    \end{subfigure}
    \hfill
    \begin{subfigure}[b]{0.48\textwidth}
        \includegraphics[width=\linewidth]{./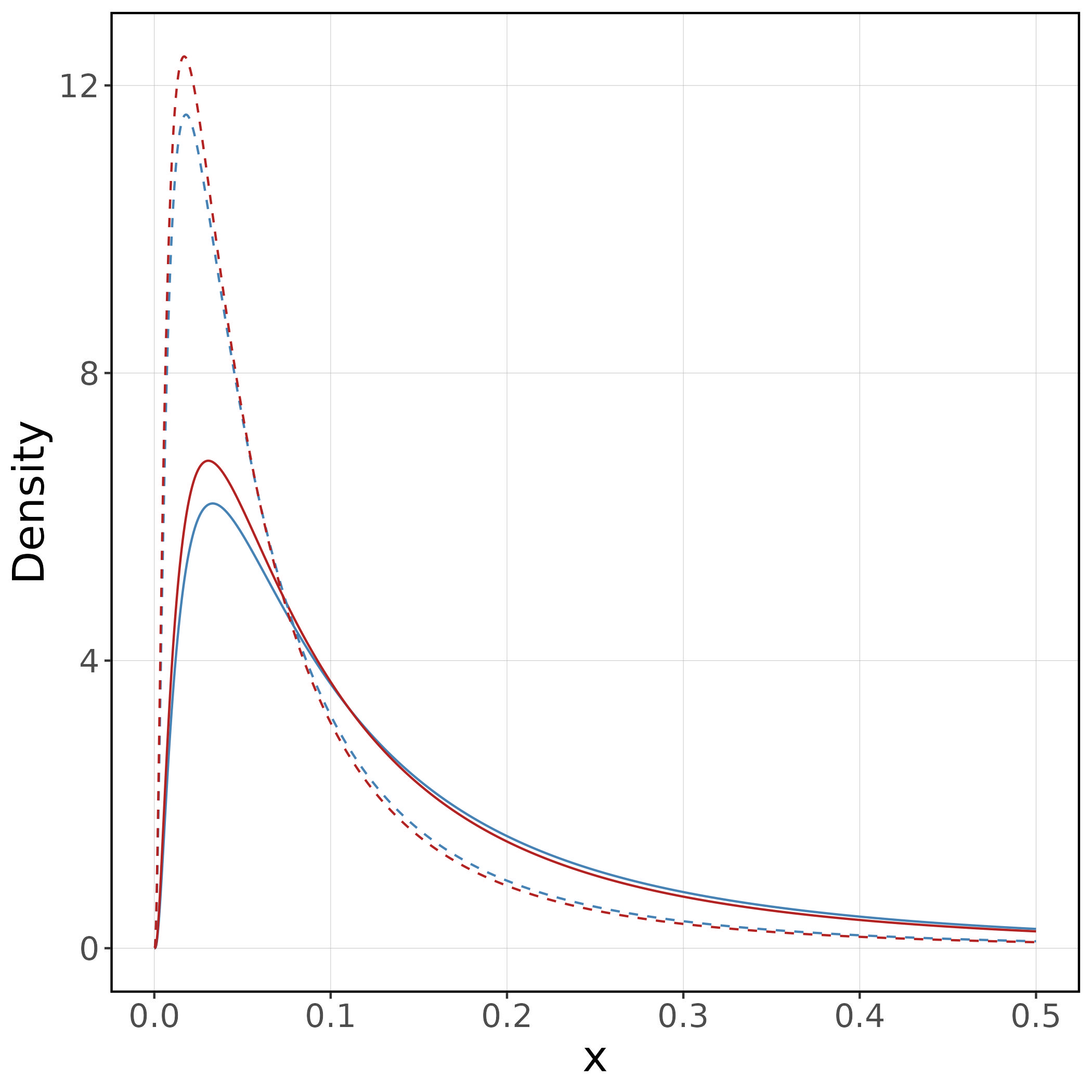}
    \end{subfigure}
    \captionsetup{font=footnotesize}
    \caption{\footnotesize In both figures, dashed lines correspond to $\lambda = 0.99$, solid lines to $\lambda = 0.95$. \textit{LHS:} Densities of loss $X$ (green) and the optimal security payoffs for MARRM (blue) and MARM (red). 
    \textit{RHS:} Densities of the leverage factor $X/X^{x_0,\pi}_T$ based on the optimal security payoffs in case of the MARRM and MARM.}
    \label{fig:risk_sharing_marrm_marm}
\end{figure}

\section{Conclusion}

We introduce the concept of multi-asset return risk measures (MARRMs) as an extension of the recently introduced return risk measures (RRMs). The philosophy behind a MARRM is the relative comparison of a loss random variable and a random variable representing the payoff of a hedging portfolio. The latter is created by trading in a financial market. This comparison is performed by a prespecified acceptability criterion. For instance, we can use the well-known concepts of VaR or Average-Return-at-Risk (ARaR) to obtain such a criterion. The latter is the RRM counterpart of the ES. The optimal hedging portfolio is then determined by minimizing its costs such that the acceptability criterion is satisfied. 

Theoretically, we prove that rewards for diversification in the classical sense (quasi-convexity of a functional) are only granted by a MARRM if it is convex. We also state an equivalent geometrical condition based on the acceptability criterion. As a second main finding, we identify assumptions on the acceptance set and the financial market which ensure that the MARRM provides a consistent risk evaluation, i.e., that it is strictly greater than zero. 
As a last theoretical contribution, we state several possibilities to represent MARRMs as the supremum over possible reference models, so-called dual representations.   

Finally, in an empirical study, we compare MARRMs to RRMs and so-called multi-asset risk measures (MARMs). The study is based on loss data from a US private auto insurance. 
In our example, we observe a deviation between MARRM and RRM based on VaR-/ARaR acceptability criteria which is larger than $6\%$ in all of the cases, even though the deviation decreases if the level increases. 
Further, MARRM and RRM differ significantly for a covariance matrix of the underlying stocks (used to hedge the loss) that is ``close to being singular''.

Furthermore, a MARRM allows us to avoid numerical approximations of the final payoffs in the VaR-case, if the payoffs are lognormally distributed. This is different from a MARM whose calculation is based on the sum of lognormally distributed random variables. 
The distribution of this sum needs to be approximated numerically. Hence, the MARRM admits a direct computation without relying on numerical procedures to calculate the value of the MARM.
Furthermore, the MARRM for the ARaR-case and the MARM for the ES-case diverge more significantly.
ly. Specifically, the relative differences exceed 5\% across all levels.
So, we need to be careful in using the MARRM as an approximation for the MARM in general. In addition, we illustrate the subdivision of the loss into a payoff and a leverage factor, which is a new economic concept of risk sharing. It splits the loss in a multiplicative way instead of an additive one. The latter is the philosophy behind MARMs.

The delicate problem of developing a direct convex duality approach for MARRMs remains open for future research. Furthermore, there are various options left to empirically test MARRMs, e.g.~using other stochastic market models and acceptability criteria. Finally, incorporating MARRMs into different portfolio optimization frameworks could be a promising direction of future research.

\section*{Competing interests}

The authors declare none.

\newpage

\appendix

{\centering\section*{ONLINE APPENDIX TO MANUSCRIPT\\ ``MULTI-ASSET RETURN RISK MEASURES''}}

\vspace{1cm}

\section{Additions to the empirical study}

Here, we provide additional information and extend the empirical study in the manuscript. In Section~\ref{sec:objectiveVaR},  we explain why, for large volatilities, the optimal payoff does not permit a short position in the bank account. 
In Section~\ref{sec:entropic}, we present another example, in which MARRM and MARM are significantly different. Finally, in Section~\ref{sec:empiricalStudy}, we test MARRMs for financial market data, namely the time series of the DAX. We find that the MARRM suggests larger stock positions in times of crisis. 

\subsection{VaR-case: objective of MARRM}\label{sec:objectiveVaR}

Here, we illustrate the objective function to calculate the MARRM for the VaR-case at level $\lambda = 0.95$ and the parameters used in the first case study. Recall that the VaR-based MARRM acceptability criterion is given by 
$$q^{-}_{\log\left(X/X_T^{x_0,\pi}\right)}(\lambda)\leq 0.$$  
The lognormal distribution of $\frac{X}{X_T^{x_0,\pi}}$ (see Equation (5.2) in the manuscript) then gives us
\begin{align}\label{eq:inequality_var}
    \log(x_0)\geq \mu-rT -\pi^{\intercal}(b-r\mathbf{1})T +\tfrac{\left\lVert\pi^{\intercal}\Sigma\right\rVert^2}{2}T+\sqrt{\sigma^2+\|\pi^{\intercal}\Sigma\|^2 T}\Phi^{-1}(\lambda) =: f(\pi),
\end{align}
where $\Phi^{-1}$ denotes the inverse of the cumulative distribution function of the standard normal distribution.
Hence, the logarithmic MARRM value is obtained by minimizing function $f$. This function is plotted in Figure~\ref{fig:var_objective}. 

\begin{figure}[h]
    \includegraphics[width=0.8\linewidth]{./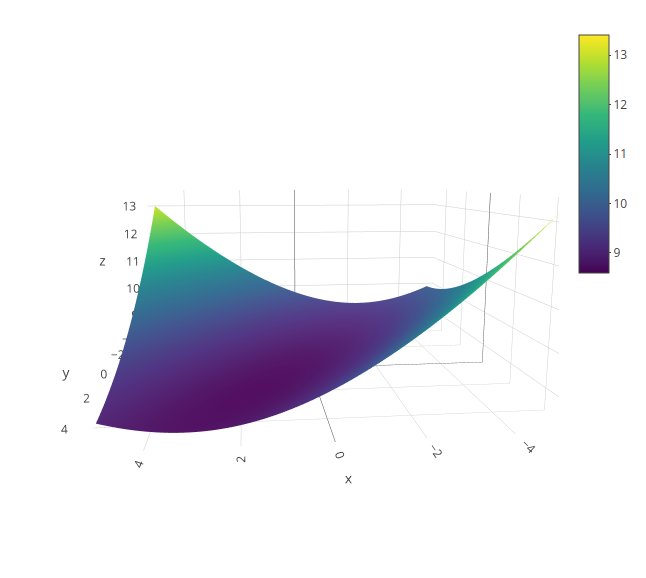}
    \captionsetup{font=footnotesize}
    \caption{\footnotesize Minimized value of this function gives us the logarithmic value of the MARRM with VaR acceptability criterion. The $x$-and $y$-axes are the possible values of $\pi_1$ and $\pi_2$.}
    \label{fig:var_objective}
\end{figure}

Analyzing the function $f$ in~(\ref{eq:inequality_var}) leads to the following observations: 
Since all entries of $b-r\mathbf{1}$ are positive, the minimum is not attained if all values of $\pi$ are negative. A mixture of positive and negative entries of $\pi$ gives us the largest values in Figure~\ref{fig:var_objective}, due to the negative correlation in $\Sigma$. Hence, the smallest values of $f$ are achieved by admitting only positive entries of $\pi$. Now, the large value of $\sigma$ (recall $\sigma>1$) means that the term $\sqrt{\sigma^2+\|\pi^{\intercal}\Sigma\|^2 T}$ is dominated by $\sigma^2$ if $\|\pi^{\intercal}\Sigma\|$ is small. Note that $\sqrt{\sigma^2+\|\pi^{\intercal}\Sigma\|^2 T}$ is minimized at zero and the slope of the function $x\mapsto \sqrt{\sigma^2+x^2 T}$ is smaller for larger values of $\sigma$. Now, the remaining term $\mu-rT -\pi^{\intercal}(b-r\mathbf{1})T +\tfrac{\left\lVert\pi^{\intercal}\Sigma\right\rVert^2}{2}T$ is minimized for at least one strictly positive entry of $\pi$. So, the decrease of the term
\begin{align*}
    \pi\mapsto \mu-rT -\pi^{\intercal}(b-r\mathbf{1})T +\tfrac{\left\lVert\pi^{\intercal}\Sigma\right\rVert^2}{2}T
\end{align*}
is less compensated by an increase of 
\begin{align}\label{eq:termObjectiveVaR}
    \pi\mapsto \sqrt{\sigma^2+\|\pi^{\intercal}\Sigma\|^2 T}\Phi^{-1}(\lambda),
\end{align}
if $\sigma$ is large. Hence, the minimum is achieved for large values of the portfolio process $\pi$. 

In the situation of the empirical study, Figure~\ref{fig:var_arar_portfolio_processes_no_short_position} shows that short positions in the bank account can be avoided, if the loss $X$ admits a smaller value for $\sigma$ than the one used in the manuscript.

\begin{figure}[h]
    \begin{subfigure}[b]{0.48\textwidth}
        \includegraphics[width=\linewidth]{./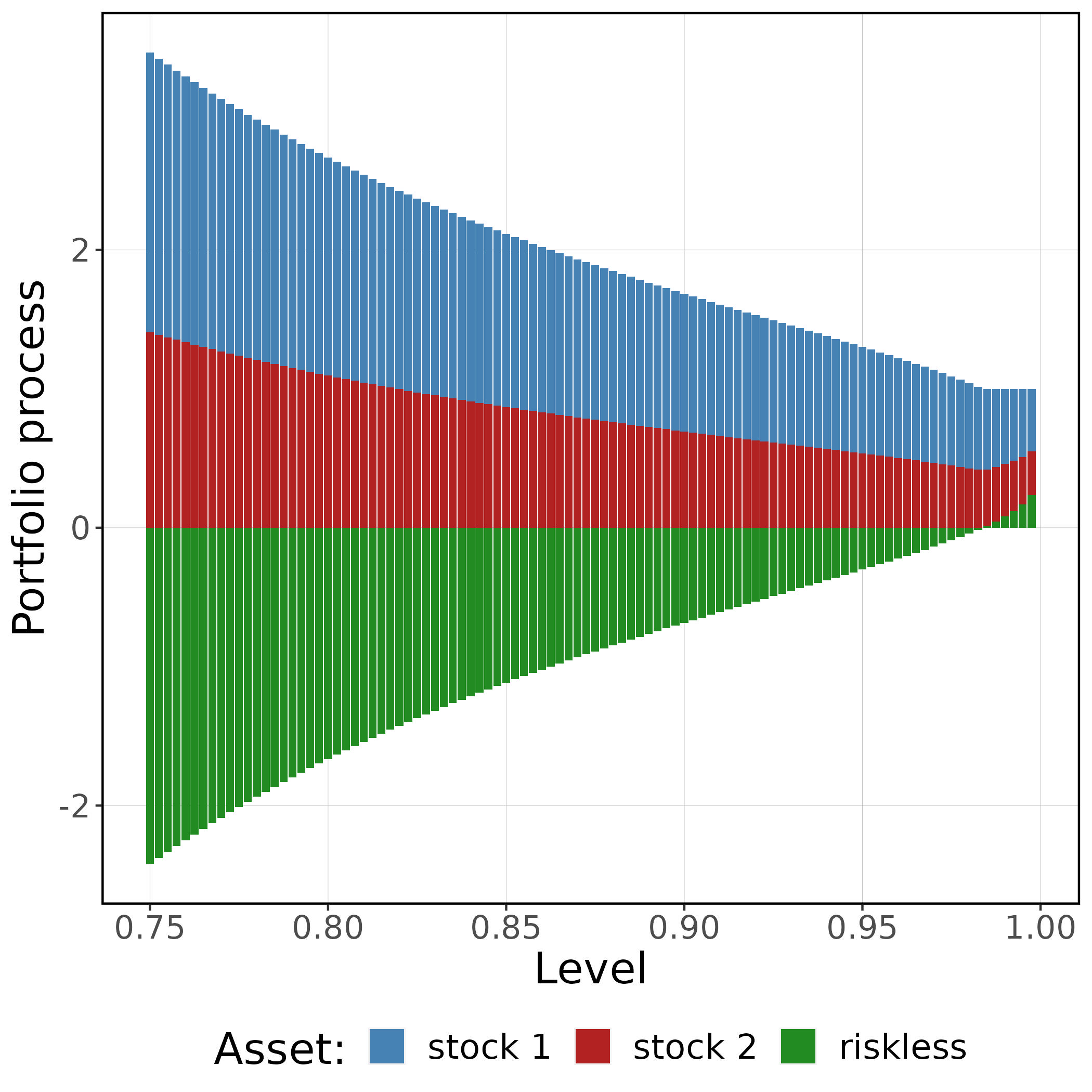}
    \end{subfigure}
    \hfill
    \begin{subfigure}[b]{0.48\textwidth}
        \includegraphics[width=\linewidth]{./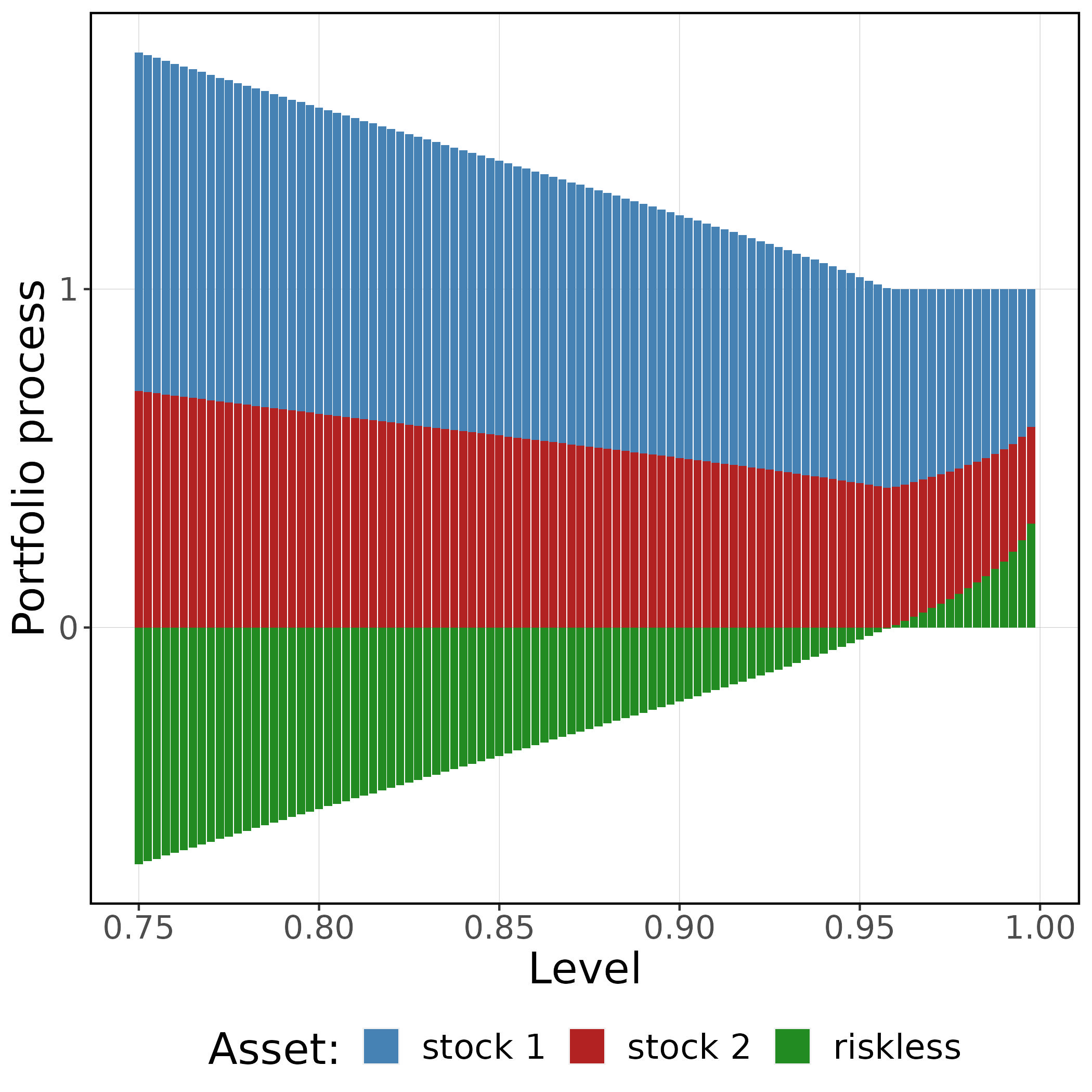}
    \end{subfigure}
    \captionsetup{font=footnotesize}
    \caption{\footnotesize Situation of $X\sim LN(\mu,\sigma^2)$ with $\mu = 1.5$ and $\sigma = 0.2$. \textit{LHS:} Optimal portfolio process for MARRM with VaR acceptance set. \textit{RHS:} Optimal portfolio process for MARRM with ARaR acceptance set.}
    \label{fig:var_arar_portfolio_processes_no_short_position}
\end{figure}

Finally, note that the factor $\Phi^{-1}(\lambda)$ increases if $\lambda$ does. Hence, the aforementioned influence of the term \label{eq:termObjectiveVaR} is amplified and the minimum of the function $f$ is shifted closer to the origin.  

\subsection{$L^\gamma$-norm and entropic acceptability criteria}\label{sec:entropic}

Here, we present another example, in which MARRM and MARM are significantly different. To do so, we use the classical Lebesgue norm to define an acceptability criterion for a  MARRMs. For a strictly positive loss $X$ and a market payoff $Z$ modeled as a positive random variable itself, the relative loss $X/Z$ is acceptable if and only if 
$$\left\lVert\tfrac{X}{Z}\right\rVert_{L^\gamma}=\E\left[\left|\tfrac{X}{Z}\right|^\gamma\right]^{1/\gamma}\leq 1.$$
Here, $\gamma>0$ is a risk aversion parameter we shall vary. 
This is especially important, since values like $\gamma\in\{1,2,3,4\}$ echo well-known statistics of the distribution of the relative loss random variable $\frac X Z$, namely mean, variance, skewness and kurtosis. 

Each $L^\gamma$-norm is a RRM. 
As mentioned in Example 3 in~\cite{Return}, the corresponding monetary risk measure is the entropic risk measure.
Hence, we now aim to compare the performance of the MARRM based on the $L^\gamma$-norm and the MARM based on the entropic risk measure. 
More precisely, the MARM deems a random variable $Y$ acceptable if $\log(\E[\exp(\gamma Y)])\leq 0$. 
The parameter $\gamma>0$ is the same as for the $L^\gamma$-norm. 
Speaking from the perspective of expected utility, where the entropic risk measure arises as certainty equivalent, $\gamma>0$ is the parameter of absolute risk aversion.

To ensure a finite expectation in case of the entropic risk measure, we use a light-tailed distribution for $X$, namely an exponential distribution with mean $\frac{1}{r}$ and $r>0$.
The acceptability criterion in case of the MARM for initial capital $x_0$ and a portfolio process $\pi$ is then given as $\log(\E[\exp(\gamma (X-X^{x_0,\pi}_T))])\leq 0$, which is equivalent to $\log\left(\tfrac{r}{r-\gamma}\right) + \log(\E[\exp(-\gamma X^{x_0,\pi}_T)])\leq 0.$
The summand $\log(\E[\exp(-\gamma X^{x_0,\pi}_T)])$ is obtained by Monte Carlo simulation. 
Further, we see that it is only meaningful for $r>\gamma$ and therefore, we set $r = 4$. This is a sufficiently large value to illustrate the effect of the approach for different values of the level $\gamma$.
On the left-hand side in Figure~\ref{fig:norm}, we plot  the MARRM and the MARM for different levels of $\gamma$.
The MARM function for values of $\gamma$ greater than $1.5$ looks strictly convex. 
The MARRM function looks linear. 
As for the ARaR/ES-case, we see on the right-hand side in Figure~\ref{fig:norm} that the difference between MARRM and MARM is significant. So, in this example it is important which risk measure an agent chooses, because the deviations for some levels are above $30\%$.

\begin{figure}
    \begin{subfigure}[b]{0.42\textwidth}
        \includegraphics[width=\linewidth]{./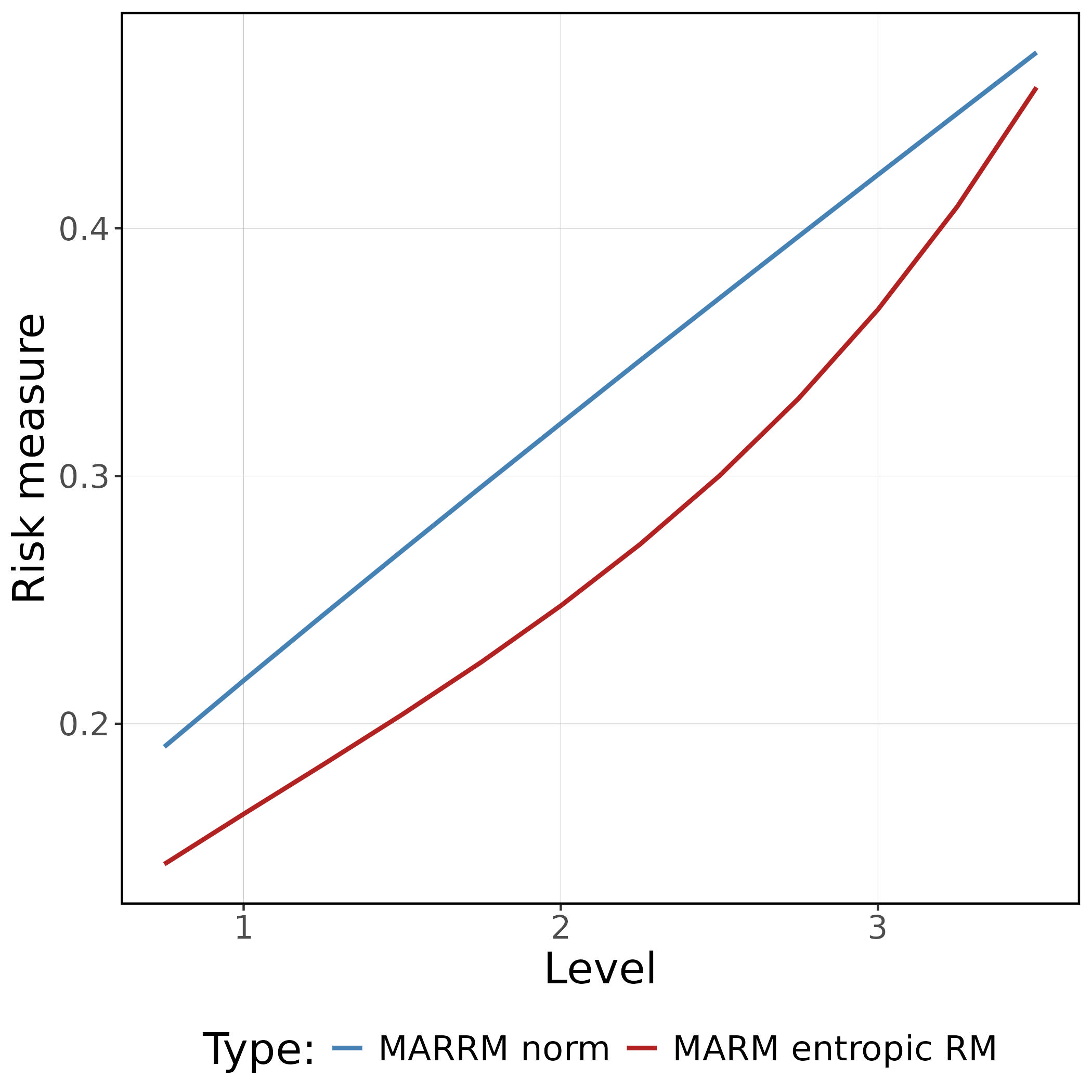}
    \end{subfigure}
    \hspace{1.1cm}
    \begin{subfigure}[b]{0.42\textwidth}
        \includegraphics[width=\linewidth]{./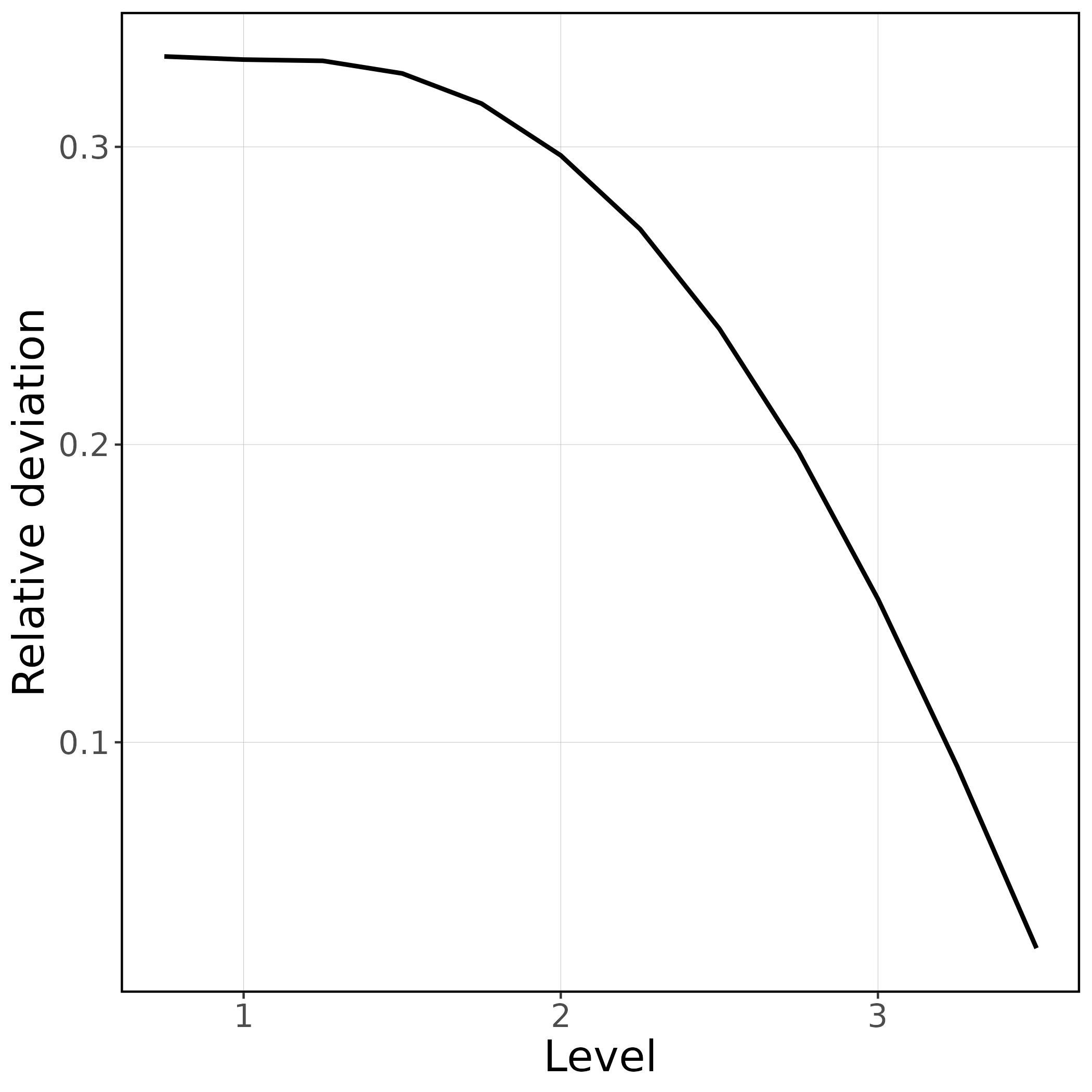}
    \end{subfigure}
    \captionsetup{font=footnotesize}
    \caption{\footnotesize \textit{LHS:} MARRM for $L^\gamma$-acceptability and MARM for acceptability based on the entropic risk measure. \textit{RHS:} Relative deviation between the two risk measures.}
    \label{fig:norm}
\end{figure}

For completeness, in Figure~\ref{fig:portfolio_norm}, we illustrate the underlying optimal portfolio processes. We see that the short position in the bank account seems to converge to zero if $\gamma\rightarrow r=4$.

\begin{figure}
    \begin{subfigure}[b]{0.42\textwidth}
        \includegraphics[width=\linewidth]{./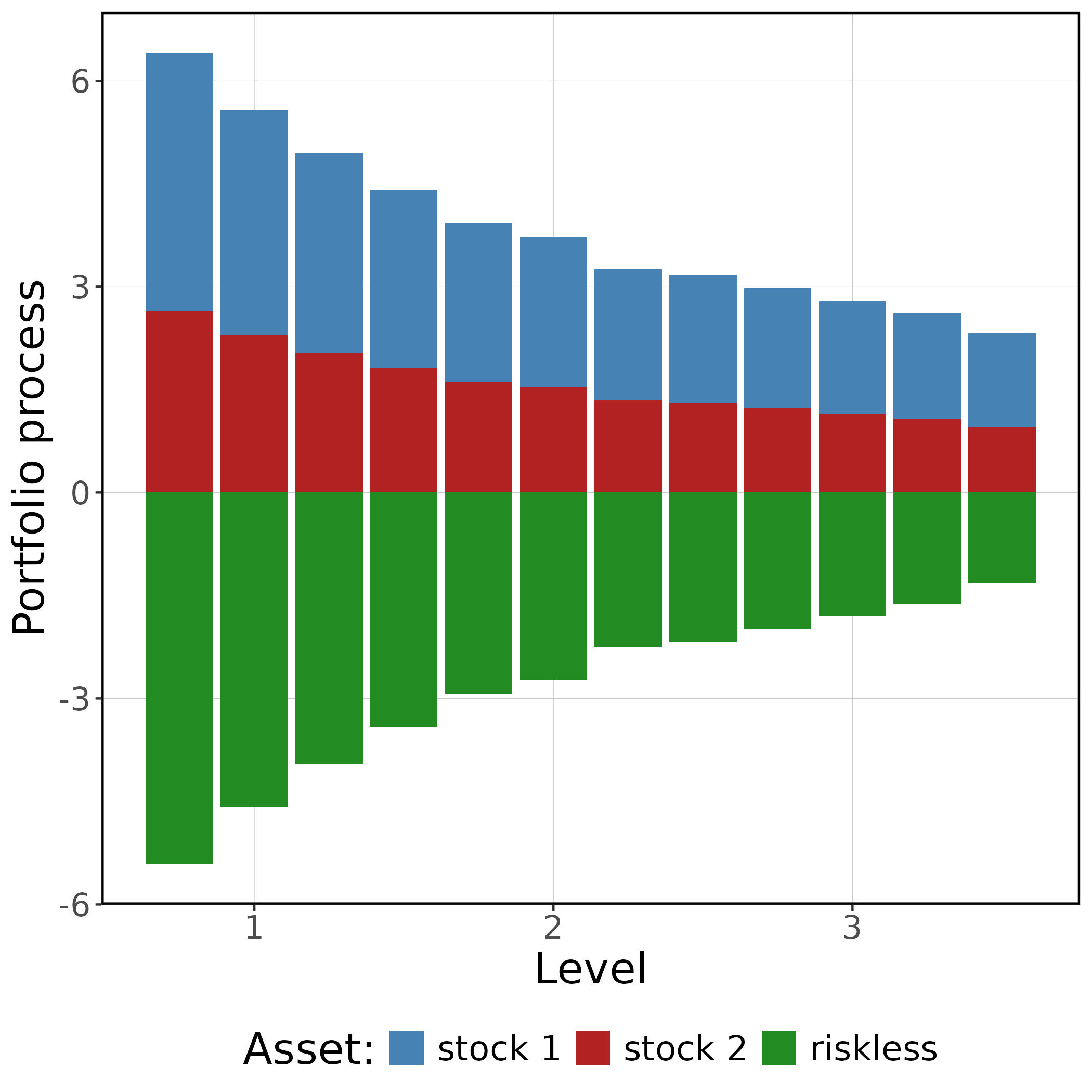}
        \captionsetup{font=footnotesize}\caption{\footnotesize MARRM for norm.}
    \end{subfigure}
    \hspace{1.1cm}
    \begin{subfigure}[b]{0.42\textwidth}
        \includegraphics[width=\linewidth]{./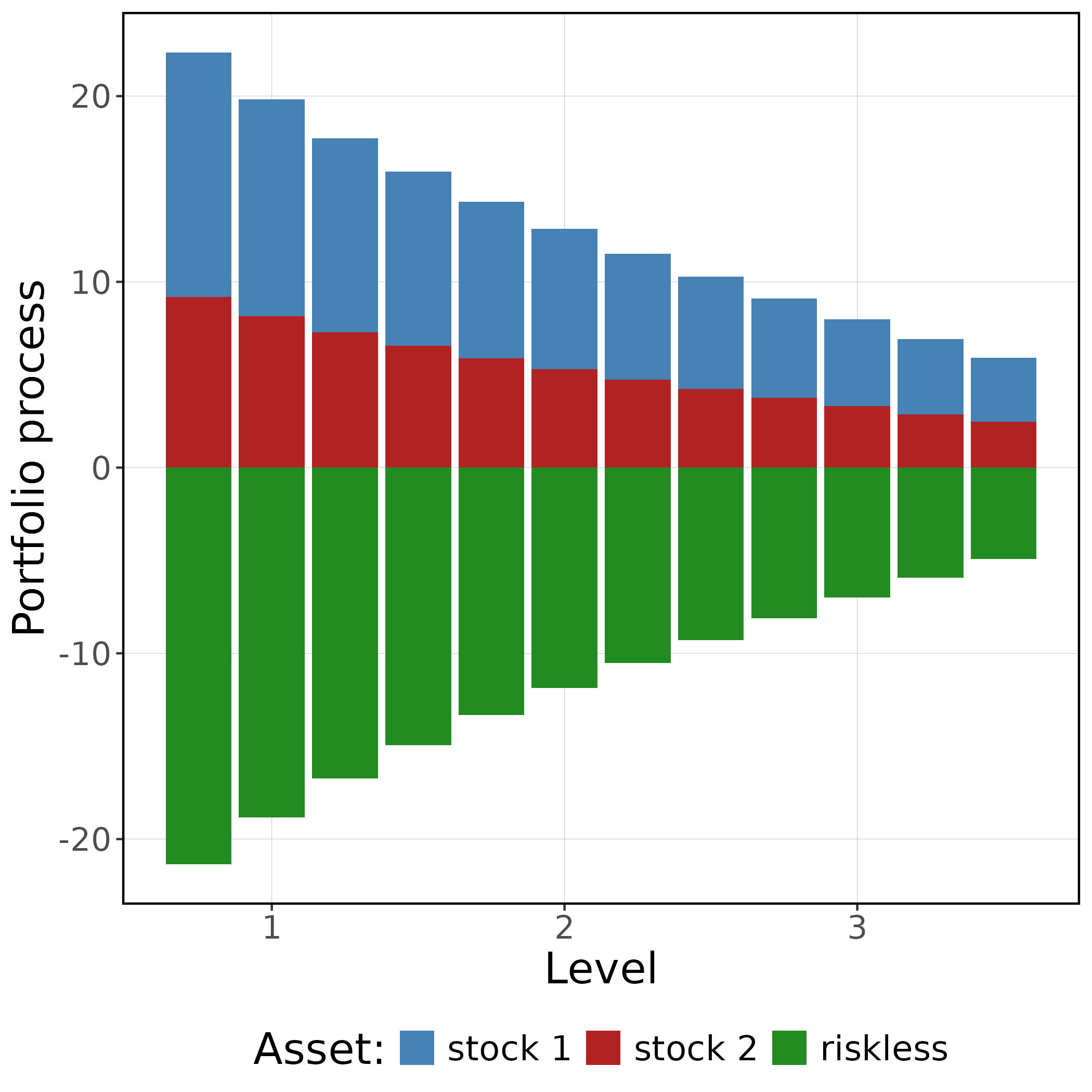}
        \captionsetup{font=footnotesize}\caption{\footnotesize MARM for entropic risk measure.}
    \end{subfigure}
    \captionsetup{font=footnotesize}
    \caption{\footnotesize Portfolio processes for different levels.}
    \label{fig:portfolio_norm}
\end{figure}

Concluding, we find a significant difference between MARRM and MARM in this example. Increasing the level shrinks the proportion invested in the bank account.

Comparing this with the empirical study in the manuscript, the VaR acceptability criteria suggest little difference between MARRM and MARM. However, the ARaR/ES and $L^{\gamma}$-norm/entropic criteria reveal that this conclusion is not generally valid, and the similarity between MARRM and MARM depends heavily on the chosen acceptability criteria.

\subsection{Financial market data}\label{sec:empiricalStudy}

In the empirical study, we used insurance losses which are part of the liabilities of the insurer in question. Now, we analyze the value of a MARRM for the assets of an insurer. Here, we focus on financial market models as it is done, for instance, in~\cite{chen_2019,chen_2018}. While these two contributions describe assets by a Black-Scholes type model, we shall adopt the approach of~\cite{mcneil_2000} to be able to calculate the MARRM dynamically over time. More precisely, we fit a classical time series model to real-world data (DAX) and calculate the MARRM for one-day-ahead market values.
	
The time series model that we are using is an AR(1)-GARCH(1,1) model for log-returns. The model description is as follows: We denote by $Y_t$ the index value at time $t$. Then, for the log-returns it holds that
\begin{align*}
\log\left(\tfrac{Y_t}{Y_{t-1}}\right) = \phi\log\left(\tfrac{Y_{t-1}}{Y_{t-2}}\right) + \sigma_t z_t\quad\text{and}\quad\sigma_t^2 = \alpha_0 + \alpha_1 (\sigma_{t-1}z_{t-1})^2 + \beta \sigma_{t-1}^2,
\end{align*} 
where $\phi, \alpha_0,\alpha_1,\beta\in \mathbb{R}$ are constant parameters and $\{z_t\}$ is a sequence of independent standard Gaussians. 
For further details we refer to~\cite{Ferenstein2004}. The parameters are calibrated via a pseudo-maximum-likelihood approach with the function \texttt{garchFit} from the \texttt{R}-package \texttt{fGarch}. As data for the calibration, we use the DAX index. 

We calibrate the AR(1)-GARCH(1,1) for two different time periods, illustrated by the gray and yellow boxes on the left-hand side in Figure~\ref{fig:logReturnsDAX}. The calculation of risk measures starts then always from the end of the corresponding time period used for calibration, see Figure~\ref{fig:forecast_DAX}. Both time periods consist of 1\,000 log-returns. The reason to choose two different time periods is the behavior of the log-returns after these intervals. In case of the first interval, the log-returns after this time period behave moderately. Instead, after the second interval, we have huge fluctuations of the log-returns due to the COVID crisis.

\begin{figure}[ht]
    \begin{subfigure}[b]{0.48\textwidth}
        \includegraphics[width=\linewidth]{./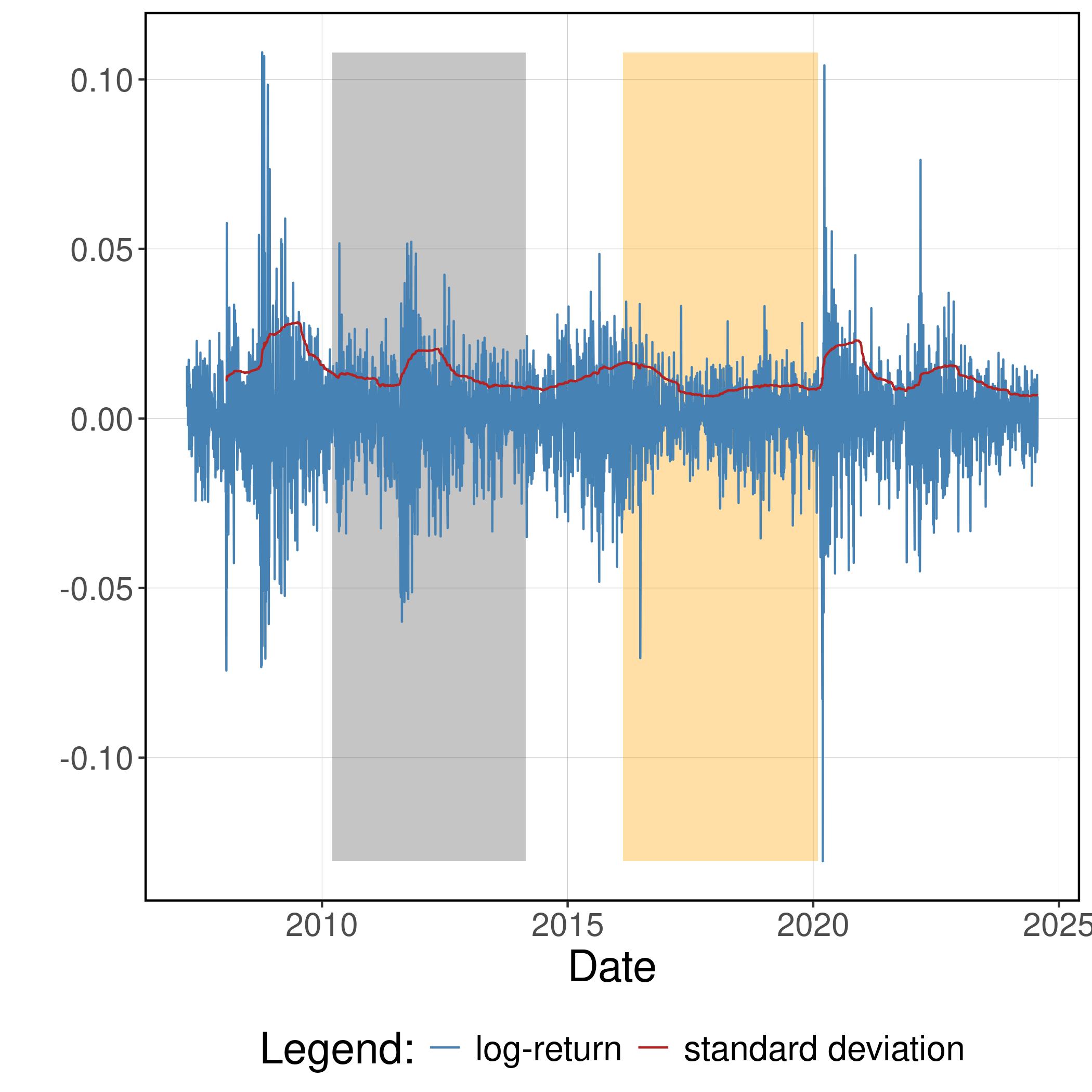}
        \captionsetup{font=footnotesize}
        \caption{\footnotesize DAX log-returns.}
    \end{subfigure}
    \hfill
    \begin{subfigure}[b]{0.48\textwidth}
        \includegraphics[width=\linewidth]{./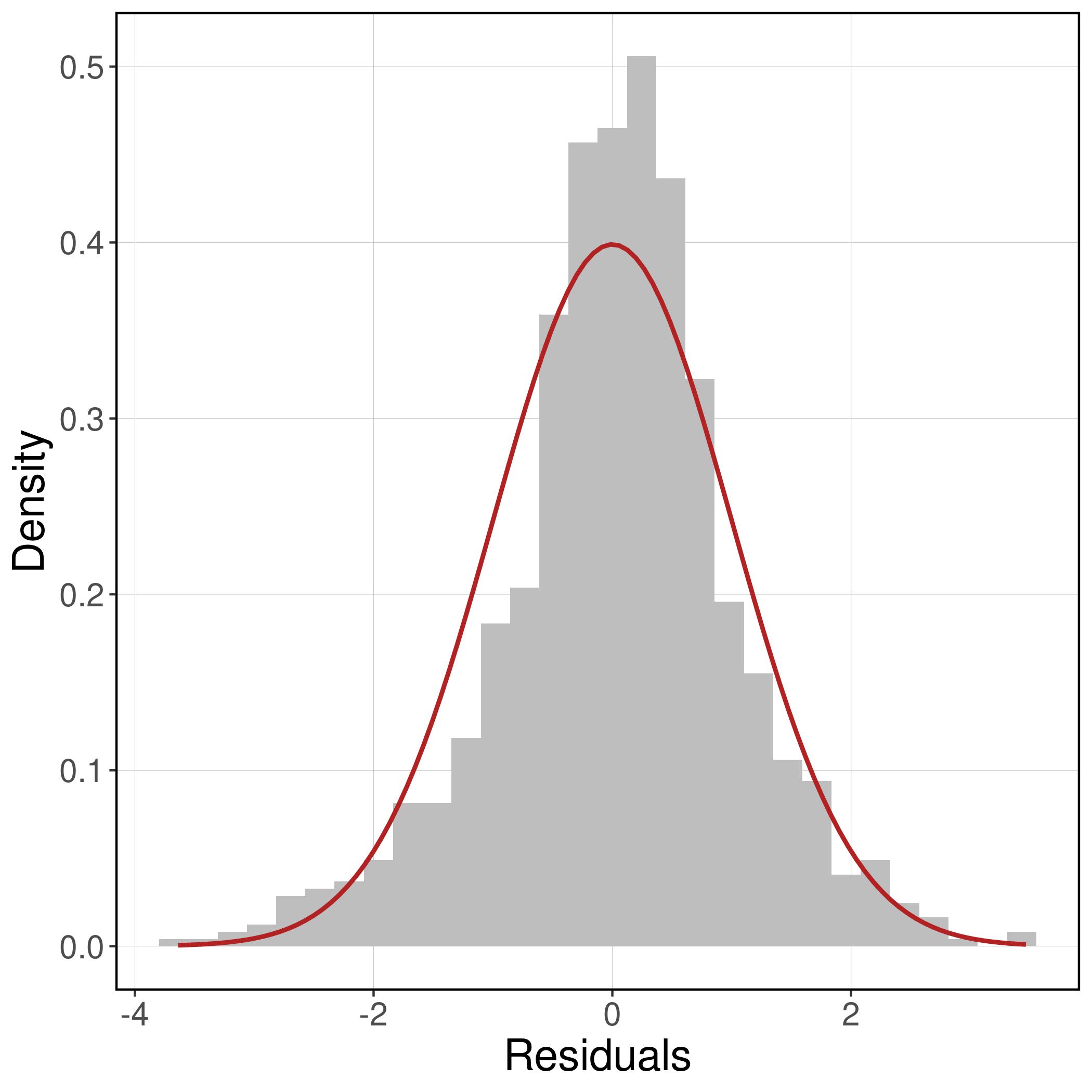}
        \captionsetup{font=footnotesize}
        \caption{\footnotesize Histogram of residuals.}
    \end{subfigure}
    \captionsetup{font=footnotesize}
    \caption{\footnotesize \textit{LHS:} Log-returns for DAX. The red line is the estimated (rolling) standard deviation based on 200 log-returns. The colored boxes are the log-returns used to calibrate the AR(1)-GARCH(1,1) model. \textit{RHS:} Histogram of residuals of the calibrated AR(1)-GARCH(1,1) with respect to log-returns from the gray box. The red line is the density function of a standard Gaussian.}
    \label{fig:logReturnsDAX}
\end{figure}

The calibrated parameters are given in Table~\ref{tab:calibrated_parametersDAX}. The corresponding residuals for the first time interval are illustrated on the right-hand side in Figure~\ref{fig:logReturnsDAX}.

\begin{table}[htb]
    
    \centering
    \begin{tabular}{l|r|r|r|r}
        {Time interval (YYYY/MM/DD)}   & {$\hat{\phi}$} & {$\hat{\alpha}_0$} & {$\hat{\alpha}_1$} & {$\hat{\beta}$}\\
        \hline
        2010/03/29 - 2014/03/28 & $0.0340$ & $2.8718\cdot 10^{-6}$ & $0.0746$ & $0.9067$\\
        2016/03/31 - 2020/03/26 & $-0.0003$ & $6.4917\cdot 10^{-6}$ & $0.1069$ & $0.8228$
    \end{tabular}	
    \captionsetup{font=footnotesize}
    \caption{\footnotesize Calibrated model parameters of the log-returns of the DAX.}
    \label{tab:calibrated_parametersDAX}
\end{table}

The calibrated model allows us to calculate a RRM for $Y_{t}$ based on the information at time $t-1$. 
Note that in this situation, we have that 
\begin{align*}
    Y_{t}|_{Y_{t-1},Y_{t-2}}\sim \text{LN}\left((1+\phi)\log(Y_{t-1})-\phi\log(Y_{t-2}), \sigma_t^2\right). 
\end{align*} 
Initial values for $\sigma_t$ are obtained from the \texttt{R}-function \texttt{volatility} applied to the output of \texttt{fGarch}. For the gray box this gives $0.0088$ and for the yellow box it is $0.0108$.

Our aim is to compare the RRM with the MARRM based on the Black-Scholes model already used in the empirical study, using the same parameter values as in there. 
Motivated by the results of the empirical case study, we omit the VaR case and focus solely on the case of the ARaR.  

We plot the following relative differences between RRM, MARRM and DAX on the left-hand side in Figure~\ref{fig:forecast_DAX}:
\begin{center}
$\frac{\text{RRM}-\text{index value}}{\text{index value}}$ (red),\quad$\frac{\text{MARRM} -\text{index value}}{\text{index value}}$ (blue),\quad$\frac{\text{RRM} - \text{MARRM}}{\text{MARRM}}$ (yellow).
\end{center}

\begin{figure}[ht]
    \begin{subfigure}[b]{0.48\textwidth}
        \includegraphics[width=\linewidth]{./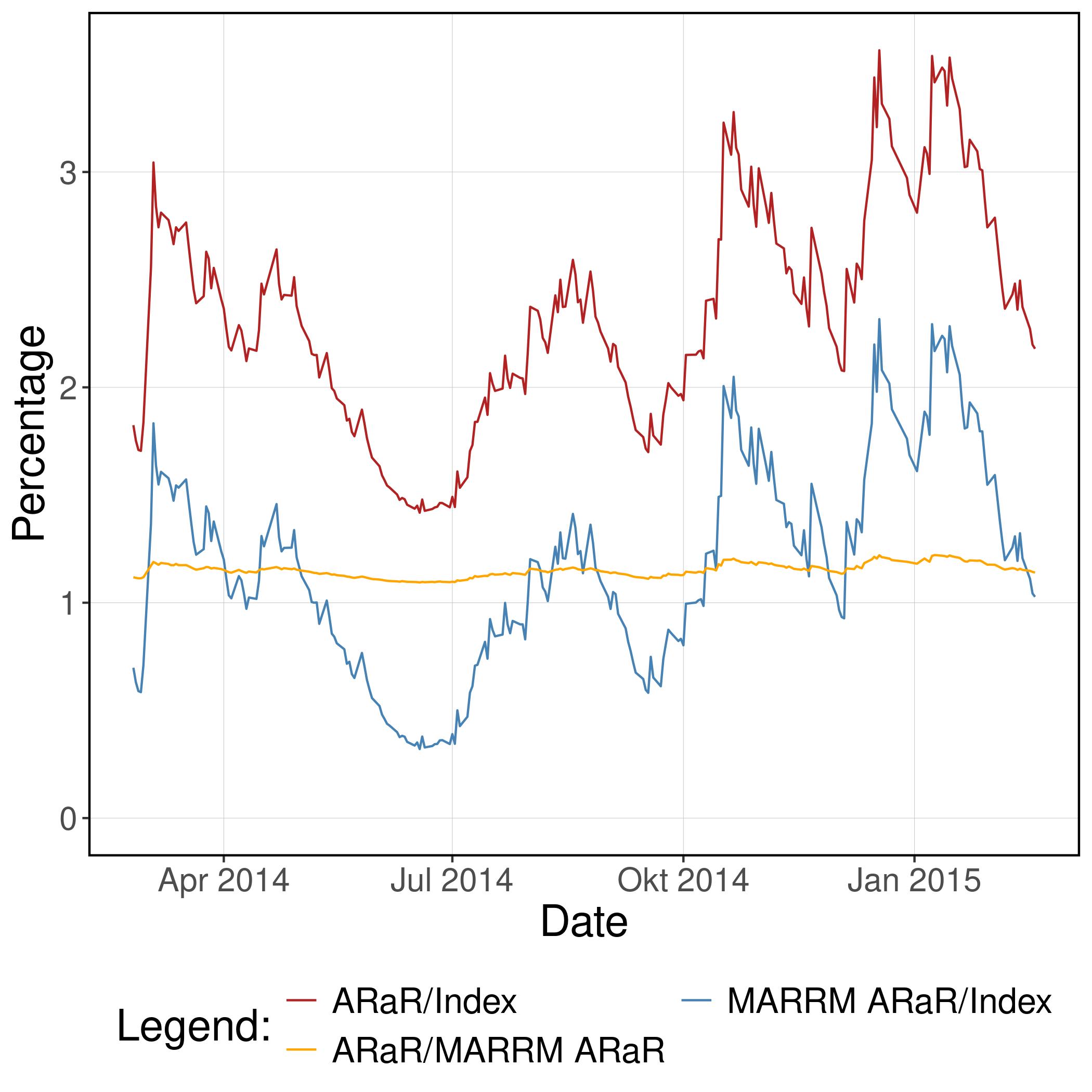}
    \end{subfigure}
    \hfill
    \begin{subfigure}[b]{0.48\textwidth}
        \includegraphics[width=\linewidth]{./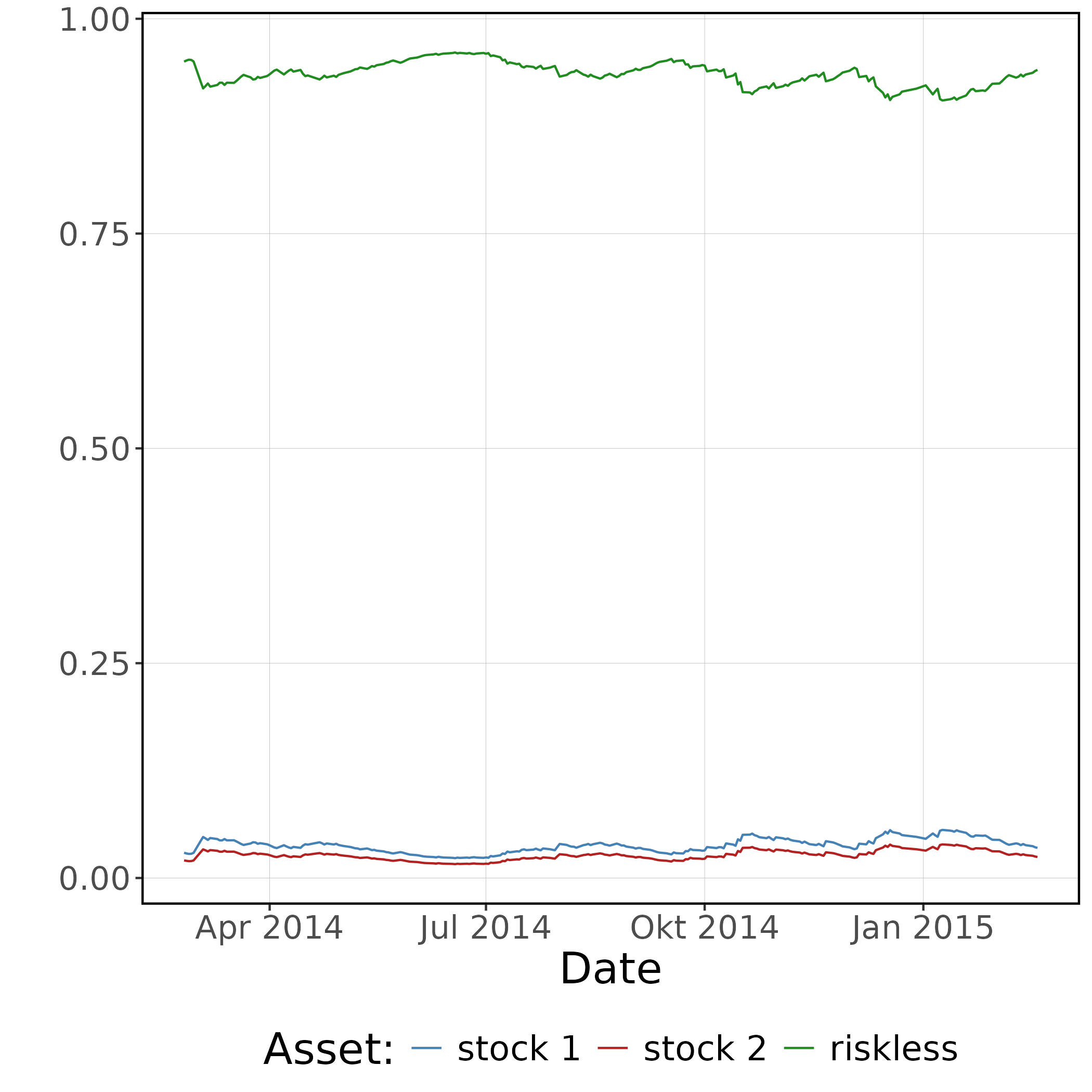}
    \end{subfigure}
    \begin{subfigure}[b]{0.48\textwidth}
        \includegraphics[width=\linewidth]{./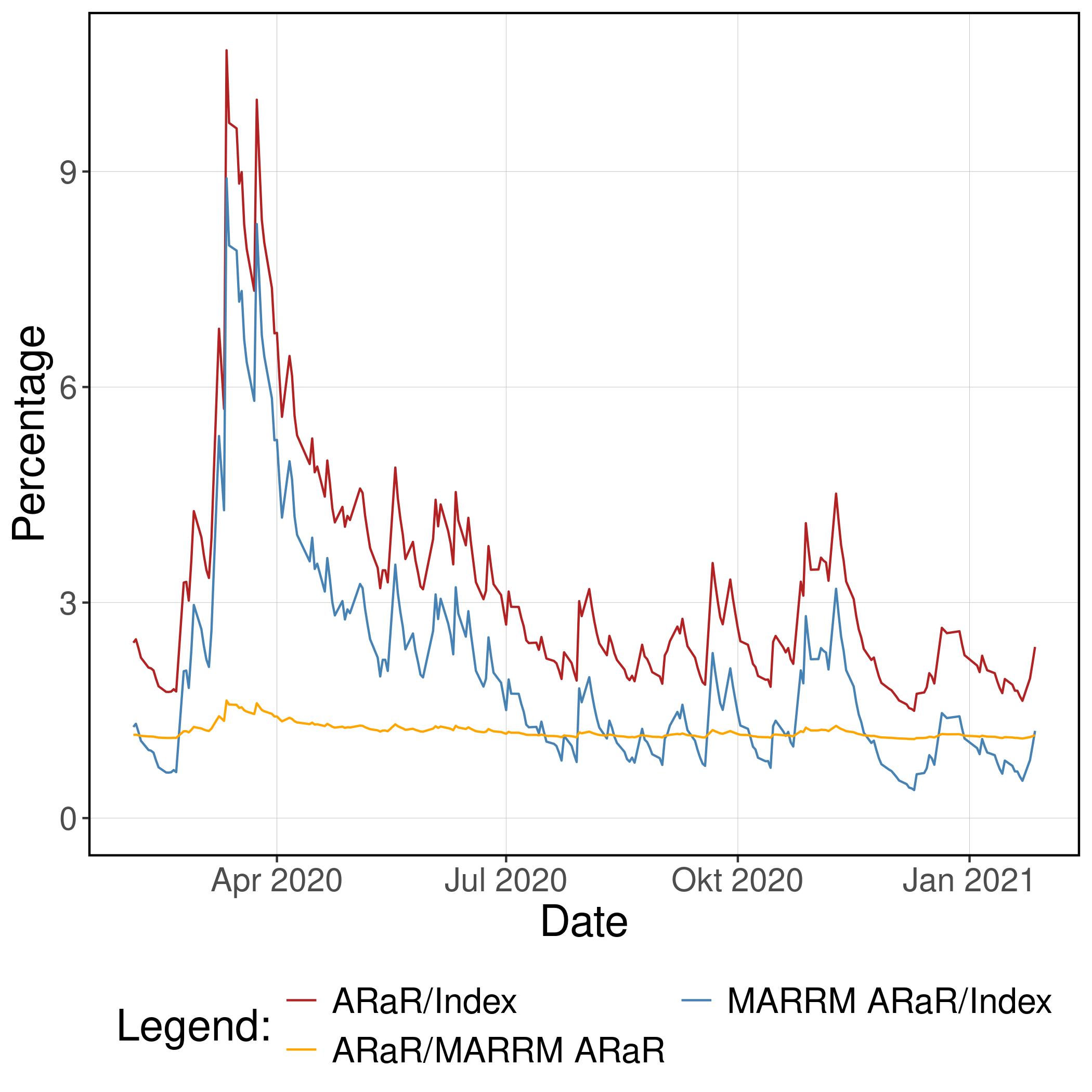}
    \end{subfigure}
    \hfill
    \begin{subfigure}[b]{0.48\textwidth}
        \includegraphics[width=\linewidth]{./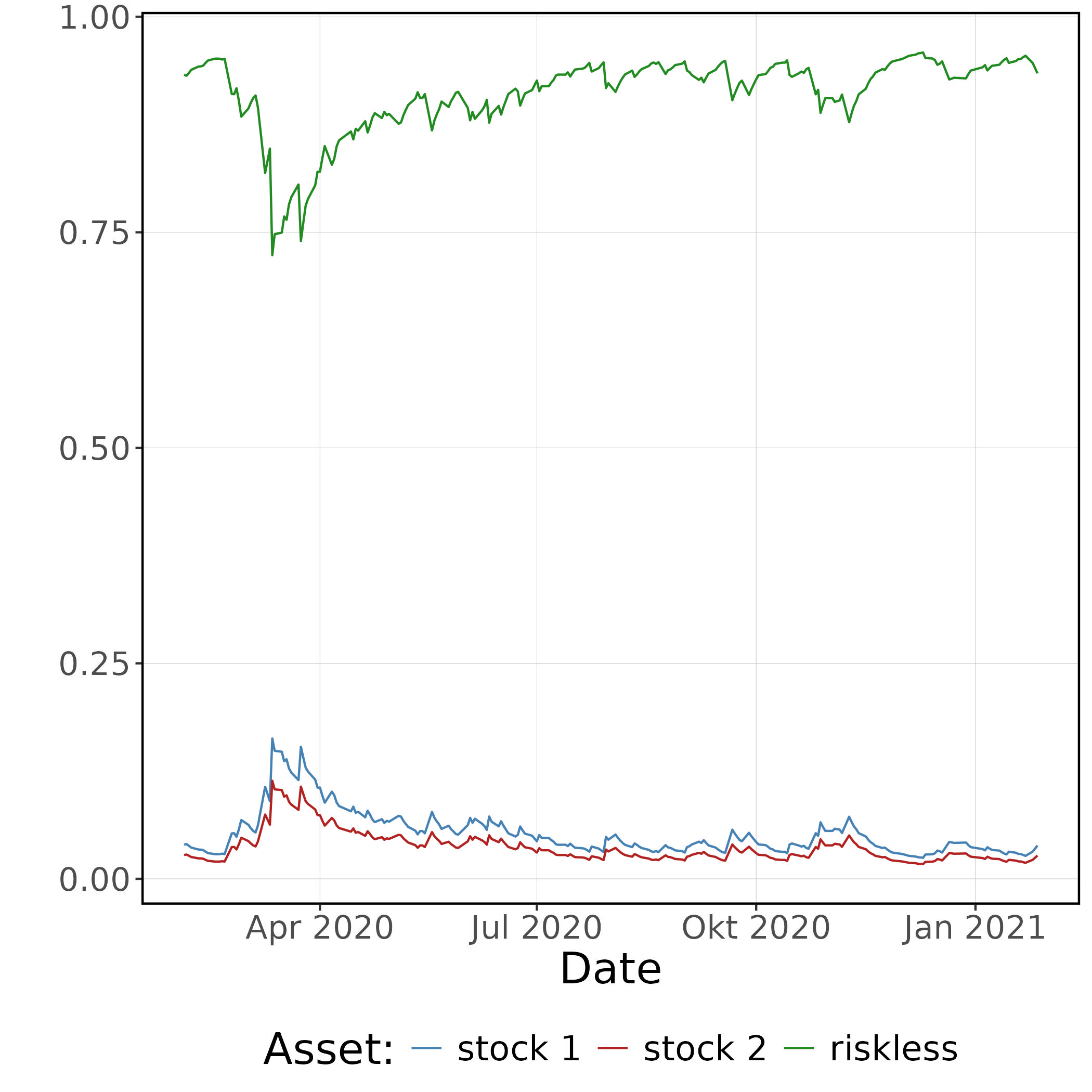}
    \end{subfigure}
    \captionsetup{font=footnotesize}
    \caption{\footnotesize\textit{LHS:} Relative differences of risk measures and index, as well as relative differences between RRM and MARRM for DAX. \textit{RHS:} Underlying portfolio processes.}
    \label{fig:forecast_DAX}
\end{figure}

We see that the relative differences between risk measures and index values (red and blue lines) are quite large at the beginning of the COVID crisis. In the aftermath, their behaviour is much more similar to the relative differences observed between April 2014 and April 2015.

At first glance, the relative difference between RRM and MARRM (yellow line) looks nearly constant. But, by a closer look, during the COVID crisis there is a peak of $1.64\%$ at April 12, 2020. In comparison to $1.12\%$ at February 21, 2020, this is $0.42\%$ larger, which is a significant increase by noting that we only perform daily forecasts.

This is also visible in the underlying portfolio process, i.e.,~the percentages of the capital invested in specific assets, see the right-hand side in Figure~\ref{fig:forecast_DAX}. During the COVID crisis, the MARRM reduces the investment in the riskless asset, which means that the hedging strategy is given by a more diversified portfolio between the riskless asset and the two risky stocks.

The main conclusion from Figure~\ref{fig:forecast_DAX} is that the relative difference between the RRM and MARRM is nearly constant in times without crisis. In times of crisis, the larger security space of the MARRM leads to a better diversified hedging portfolio, which in turn gives us a stronger relative reduction compared to the RRM than in times without crisis. 


\begin{thebibliography}{100}

\singlespacing
\itemsep0em 
\footnotesize

\bibitem{aliprantis} Aliprantis, C.\ D., and K.\ C.\ Border (2006), \textit{Infinite {Dimensional} {Analysis}: {A} {Hitchhiker's} {Guide}}. Springer.

\bibitem{artzner_coherent_1999} Artzner, P., F.\ Delbaen, J.-M.\ Eber and D.\ Heath (1999), Coherent {measures} of {risk}. \textit{Mathematical Finance} 9:203--228.

\bibitem{artzner_risk_measures_2009} Artzner, P., F.\ Delbaen and P.\ Koch-Medina (2009), Risk measures and efficient use of capital. \textit{ASTIN Bulletin} 39:101--116.

\bibitem{Aygun}
Ayg\"un, M., F.\ Bellini, and R.\ J.\ A.\ Laeven (2024), On geometrically convex risk measures. 
Preprint, arXiv:2403.06188. 

\bibitem{baes_2020}Baes, M., P. Koch-Medina, and C. Munari (2020), Existence, uniqueness, and stability of optimal payoffs of eligible assets. \textit{Mathematical Finance} 30(1):128--166.

\bibitem{BarKaroui}Barrieu, P., and N.\ El Karoui (2005), Inf-convolution of risk measures and optimal risk transfer. {\em Finance and Stochastics} 9:269--298.

\bibitem{baeuerle_2013}Bäuerle, N., and Z.\ Li (2013), Optimal portfolios for financial markets with Wishart volatility. \textit{Journal of Applied Probability} 50:1025--1043.

\bibitem{Return}Bellini, F., R.\ J.\ A.\ Laeven, and E.\ Rosazza Gianin (2018), Robust return risk measures. \textit{Mathematics and Financial Economics} 12:5--32.

\bibitem{candia_2024} Candia, C., and R.\ Herrera (2024), An empirical review of dynamic extreme value models for forecasting value at risk, expected shortfall and expectile. \textit{Journal of Empirical Finance} 77:1--16. 

\bibitem{castagnoli_2022} Castagnoli, E., G.\ Cattelan, F.\ Maccheroni, C.\ Tebaldi and R.\ Wang (2022), Star-Shaped Risk Measures. \textit{Operations Research} 70:2637--2654. 

\bibitem{cerreiaVioglio_2011} Cerreia‐Vioglio, S., F.\ Maccheroni, M.\ Marinacci, and L.\ Montrucchio (2011), Risk measures: Rationality and diversification. \textit{Mathematical Finance} 21:743--774. 

\bibitem{chen_2019} Chen, A., P.\ Hieber, and T.\ Nguyen (2019), Constrained non-concave utility maximization: An application to life insurance contracts with guarantees. \textit{European Journal of Operational Research} 273(3):1119--1135. 

\bibitem{chen_2018} Chen, A., T.\ Nguyen, and M.\ Stadje (2018), Optimal investment under VaR-regulation and minimum insurance. \textit{Insurance: Mathematics and Economics} 79:194--209. 

\bibitem{delbaen1994} Delbaen, F., and W.\ Schachermayer (1994), A general version of the fundamental theorem of asset pricing. \textit{Mathematische Annalen} 300:463--520.

\bibitem{delbaen1998} Delbaen, F., and W.\ Schachermayer (1998), The fundamental theorem of asset pricing for unbounded stochastic processes. \textit{Mathematische Annalen} 312:215--250.

\bibitem{desmettre2020} Desmettre, S., C.\ Laudagé, and J.\ Sass (2020), Good-deal bounds for option prices under Value-at-Risk and Expected Shortfall constraints. \textit{Risks} 8(4):114.

\bibitem{karoui_2009} El Karoui, N., and C.\ Ravanelli (2009), Cash subadditive risk measures and interest rate ambiguity. \textit{Mathematical Finance} 19:561--590.

\bibitem{Emmer} Emmer, S., R.\ Korn, and C.\ Kl\"uppelberg (2001), Optimal portfolios with bounded Capital at Risk. \textit{Mathematical Finance} 11:365--384.

\bibitem{farkas_2015} Farkas, W., P. Koch-Medina\ and C.\ Munari (2015), Measuring risk with multiple eligible assets. \textit{Mathematics and Financial Economics} 9:3--27.

\bibitem{Ferenstein2004} Ferenstein, E., and M.~Gasowski (2004), Modelling stock returns with AR-GARCH processes. \textit{SORT} 28(1):55--68.

\bibitem{FilSvi}Filipovi\'c, D., and G.\ Svindland (2008), Optimal capital and risk allocations for law- and cash-invariant convex functions. {\em Finance and Stochastics} 12:423--439.

\bibitem{follmer_stochastic_2016} Föllmer, H., and A.\ Schied (2016), \textit{Stochastic {Finance}: {An} {Introduction} in {Discrete} {Time}}. De Gruyter.

\bibitem{frittelli_2006} Frittelli, M., and G.\ Scandolo (2006), Risk measures and capital requirements for processes. \textit{Mathematical Finance} 16:589--612.

\bibitem{frittelli_2011} Frittelli, M., and M.\ Maggis (2011), Dual representation of quasi-convex conditional maps. \textit{SIAM Journal on Financial Mathematics} 2:357--382.

\bibitem{jaschke_coherent_2001} Jaschke, S., and U.\ Küchler (2001), Coherent risk measures and good-deal bounds. \textit{Finance and Stochastics} 5:181--200.

\bibitem{kreps_1981} Kreps, D.\ M.\ (1981), Arbitrage and equilibrium in economies with infinitely many commodities. \textit{Journal of Mathematical Economics} 8:15--35.

\bibitem{Laeven1}Laeven, R.\ J.\ A., and E.\ Rosazza Gianin (2022), Quasi-logconvex measures of risk. Preprint, arXiv:2208.07694v1.

\bibitem{Laeven2}
Laeven, R.\ J.\ A., E.\ Rosazza Gianin, and M.\ Zullino (2023), 
Dynamic return and star-shaped risk measures via BSDEs. Preprint, arXiv:2307.03447v2.

\bibitem{Zullino} Laeven, R.\ J.\ A., E.\ Rosazza Gianin, and M.\ Zullino (2023), Law-invariant return and star-shaped risk measures. {\em Insurance: Mathematics and Economics} 117:140--153.

\bibitem{laudage2022}Laudag\'e, C., J.\ Sass, and J.\ Wenzel (2022), Combining multi-asset and intrinsic risk measures. \textit{Insurance: Mathematics and Economics} 106:254--269.

\bibitem{Liebrich} Liebrich, F.-B., and G.\ Svindland (2019), Risk sharing for capital requirements with multidimensional security markets. {\em Finance and Stochastics} 23:925--973. 

\bibitem{herdegen_2022} Herdegen, M., and N.\ Khan (2022), Mean-$\rho$ portfolio selection and $\rho$-arbitrage for coherent risk measures. \textit{Mathematical Finance} 32:226--272.

\bibitem{herdegen_2024} Herdegen, M., and N.\ Khan (2022), $\rho$-arbitrage and $\rho$-consistent pricing for star-shaped risk measures. Forthcoming in {\em Mathematics of Operations Research}, DOI: 10.1287/moor.2023.0173.

\bibitem{Consistent}Mao, T., and R.\ Wang (2020), Risk aversion in regulatory capital principles. \textit{SIAM Journal on Financial Mathematics} 11(1):169--200.

\bibitem{markowitz_1952} Markowitz, H.\ M.\ (1952), Portfolio selection. \textit{Journal of finance} 7:71--91.
DOI: 10.2307/2975974

\bibitem{mcneil_2000} McNeil, A., and R.\ Frey (2000), Estimation of tail-related risk measures for heteroscedastic financial time series: an extreme value approach. \textit{Journal of Empirical Finance} 7:271--300. 

\bibitem{merton_1971} Merton, R.\ C. (1971), Optimum consumption and portfolio rules in a continuous-time model. \textit{Journal of Economic Theory} 3(4):373--413. 

\bibitem{scandolo_2004} Scandolo, G. (2004), Models of capital requirements in static and dynamic settings. \textit{Economic Notes} 33:415--435.

\bibitem{sun_2021} Sun, L., and D.\ Li (2021), Change-point detection for expected shortfall in time series. \textit{Journal of Management Science and Engineering} 6:324--335. 

\bibitem{volle_1998} Volle, M. (1998), Duality for the level sum of quasiconvex functions and applications. \textit{ESAIM: Control, Optimisation and Calculus of Variations} 3:329--343.

\bibitem{Wuthrich} Wüthrich, M. (2024), Non-Life Insurance: Mathematics \& Statistics. 
Available at SSRN: \url{https://ssrn.com/abstract=2319328}.

\end{thebibliography}
\end{document}